\definecolor{linkc}{rgb}{0.1, 0.5, 0.7}
\definecolor{citec}{rgb}{0.6, 0.3, 0.7}
\definecolor{urlc}{rgb}{0.5, 0.1, 0.2}
\newtheorem*{lemma*}{Lemma}
\newtheorem*{theorem*}{Theorem}
\begin{document}
\title{Optimal moments on redundancies in job cloning} 

% %%% Single author, or several authors with same affiliation:
% \author{%
%  \IEEEauthorblockN{Author 1 and Author 2}
% \IEEEauthorblockA{Department of Statistics and Data Science\\
%                    University 1\\
 %                   City 1\\
  %                  Email: author1@university1.edu}% }

%%% Several authors with up to three affiliations:
\author{%
  \IEEEauthorblockN{Sahasrajit Sarmasarkar}
  \IEEEauthorblockA{Department of Electrical Engineering \\
                    Stanford University\\
                    Email: sahasras@stanford.edu}
  \and
  \IEEEauthorblockN{Harish Pillai}
  \IEEEauthorblockA{Department of Electrical Engineering \\
  Indian Institute of Technology, Bombay\\ 
            Email: hp@ee.iitb.ac.in}
}

\maketitle

%%%%%%
%% Abstract: 
%% If your paper is eligible for the student paper award, please add
%% the comment "THIS PAPER IS ELIGIBLE FOR THE STUDENT PAPER
%% AWARD." as a first line in the abstract. 
%% For the final version of the accepted paper, please do not forget
%% to remove this comment!
%%

\begin{abstract}
  We consider the problem of job assignment where a master server aims to compute some tasks and is provided a few child servers to compute under a uniform straggling pattern where each server is equally likely to straggle. We distribute tasks to the servers so that the master is able to receive most of the tasks even if a significant number of child servers fail to communicate. We first show that all \textit{balanced} assignment schemes have the same expectation on the number of distinct tasks received and then study the variance. We show constructions using a generalization of ``Balanced Incomplete Block Design''\cite{doi:10.1111/j.1469-1809.1939.tb02219.x,sprott1955} minimizes the variance, and constructions based on repetition coding schemes attain the largest variance. 
  %Both minimum variance and maximum variance attaining designs have their own use cases depending on whether the master aims for a heavy-tailed or light-tailed distribution on the number of distinct jobs. 

  %Note: This paper is eligible for the student paper award. \\An extended version is available at \url{https://arxiv.org/pdf/2402.12584}.
\end{abstract}

\section{Introduction}

A distributed computing framework, widely used for large-scale tasks \cite{AWScitation}, utilizes multiple machines (workers) to perform computations. In a typical setup, the master server divides tasks among workers (also called child servers), which then compute and send results back to the master, which aggregates them to complete the overall task. In noisy communication environments, some servers (workers), known as stragglers, may be significantly slower than others. The master must rely on the faster, non-straggling workers, although these may not be known in advance. Lately, there has been work to mitigate the issue of stragglers by introducing redundancies \cite{JMLR:v20:18-148, DBLP:journals/corr/abs-1906-10664,10.1145/3055281}. 

Coding theoretic techniques have often been used to introduce redundancies for straggler mitigation described extensively in \cite{li2020coded} and split the data to assign different data parts to different servers. Distributed coding framework has been used in gradient computation \cite{pmlr-v70-tandon17a}, matrix-matrix multiplication \cite{lee2017speeding, yu2017polynomial}, polynomial computation \cite{yu2019lagrange} and convolution coding\cite{8006960} using techniques from coding theory. %i.e. MDS codes\cite{lee2017speeding,ferdinand2018hierarchical}, LDPC codes\cite{maity2019robust} and rateless codes\cite{mallick2020rateless}. %The master server aggregates the computations transmitted by the non-straggling workers to compute the desired result.

%However, as happens typically in most modern cloud computing systems like Amazon EC2,  some servers can operate with a significantly high throughput \cite{10.5555/2482626.2482645,6681620,6973767}. Statistical knowledge of communication and computation latency of each server can be used to design better assignment schemes and allocate the tasks more efficiently as studied in\cite{8885332,9014006,10.1145/3323679.3326528}. 

On a side note, there may be several scenarios where it may not be possible to split a computing task to multiple sub-parts, and job cloning is often used for straggler mitigation in such scenarios as studied in \cite{chen2014queueing,ananthanarayanan2013effective,joshi2015queues,10.1145/3055281,4154088}. This is also popular in cloud computing \cite{10.1145/3055281,10.1145/2847220.2847223,10.1145/2637364.2592042} where one task is assigned to multiple servers to combat the stragglers to obtain a low compute time. 

In this paper, we study the problem of assigning multiple cloned jobs to identical servers, where each server has a high likelihood of straggling. This may also serve as a reasonable model for an erasure communication channel \cite{Shannon1948} between the child servers and the master server.

%These assignment schemes may be particularly useful in framework where there is high noise in the communication channel between the child servers and master server.

\section{Our contributions}

% Our objective is to design coding schemes that maximize the number of distinct jobs received by the master.
Consider a scenario with $n$ jobs and 
$c$ identical (child) servers, where jobs are replicated to mitigate the impact of slow (straggling) workers. We focus on a homogeneous setup, where each server is assigned $k$ jobs and each job is assigned to $r$ servers, with all servers equally likely to straggle. We assume that non-straggling workers can successfully transmit their tasks to the master, as studied in \cite{pmlr-v70-tandon17a,10.1145/3055281,8849684}. To this end, we analyze the mean and variance of the number of distinct jobs received, which is detailed in Section \ref{setup_section}. Our contributions are outlined as follows.

a) When every server is equally and independently likely to straggle with the same probability, then for every balanced assignment, the expected number of completed jobs that the master receives is the same (Theorem \ref{all_mean_sampled}). %and study the variance of the number of jobs received within the class of balanced assignment schemes. 

b) We show that certain special balanced assignments (called proximally compact and stretched compact designs) when they exist, are guaranteed to attain the least and largest variance respectively (Theorem \ref{max_min_var_sampled}).

\subsection{Roadmap}

We define the notations in Section \ref{setting_section} and address the assignment problem in Section \ref{setup_section} under the assumption that any set of $x$ servers is equally likely to straggle, analyzing assignment policies for maximal and minimal variance. In Section \ref{sec_randomness_return_servers}, we extend the analysis to show that the extremal variance results hold even when 
$x$ is sampled from a distribution which accurately models our setup of straggling servers.

%and argue that it accurately models the problem.

\section{Related work}

Block designs, widely used in experimental design \cite{bose1939partially,addelman1969generalized,shieh2004effectiveness}, group experiments into blocks and apply random treatments to each. Recently, variants of block designs have been used in constructing LDPC codes \cite{ammar2004construction}, gradient coding \cite{kadhe2019gradient,sakorikar2021variants}, and error-correcting codes \cite{smith1968application}. The most common are 2-designs, or balanced incomplete block designs (BIBDs), where every pair of points appears together in the same number of blocks \cite{bose1939partially,10.5555-1202540}. These 2-designs are known to uniquely achieve A-, D-, and E-optimality in experimental design \cite{10.1214/aos/1176343002,4e05890a-5af4-3aa4-948b-f44a2cc9151f}, a result further generalized in \cite{7c66992f-2b5b-3d69-be0f-fe6ad7e70150}. In contrast, our framework focuses on minimizing the variance in the number of distinct tasks assigned to the master, using repetition coding and a generalized form of BIBDs to achieve the largest and smallest variances, respectively.

%Our work demonstrates that .

Task assignment policies in distributed server systems have been widely studied, with \cite{semchedine2011task} providing a comprehensive survey. Typically, tasks arrive stochastically and are distributed to minimize response time, as explored in \cite{colajanni1998analysis,harchol2000task,harchol1999choosing}. Recent studies show that task distributions often follow a heavy-tailed pattern \cite{650143,workoladcharacterisation,crovella1998heavy,williams2005web}, complicating load balancing. In some cases, full recovery of results may not be feasible, and partial recovery is sufficient, as in coding theory \cite{DBLP:journals/corr/BalajiK15,KORHONEN2009229}. This has been applied to distributed computing, where approximations of the exact results are acceptable \cite{ozfatura2021coded, sarmasarkar2022gradient,10.1145/3366700}. Our work follows this paradigm, where the master aims to recover a large fraction of jobs in a noisy environment.%$$, with straggling servers making full recovery challenging.

%Optimizing for the second moment viz, the variance may be useful for communication in a noisy setup where a large or a significant fraction of servers may fail to communicate or straggle. 

\section{Preliminaries and Notation}
{\label{setting_section}}

%We consider a setup where a master has $n$ jobs to compute and has $c$ servers to perform the computations. We further assume a noisy scenario where only a fraction of these servers are able to communicate back to the master. %Therefore, for increasing reliability, redundancies are introduced in the setup by assigning each job to multiple servers. In this paper, we examine the expectation and variance on the number of distinct completed jobs that the master receives from the servers that were able to communicate, for various assignment schemes. In particular, we study assignment schemes that achieve certain desired variances on the number of received distinct jobs.

%\section{Notations and symbols} {\label{setting_section}}

%Given a set of $n$ jobs(tasks) and $c$ servers, we would like to study the assignemnt $k$ jobs to each server with no two same jobs in any server along with each job appearing in exactly $r$ distinct servers. This would imply that $n \times r = k \times c$. 

Given a set of $n$ jobs and $c$ servers, we study various assignments of jobs to different servers by the master server. More formally, let us denote the $n$ jobs by $\mathcal{A} = \{a_1,\ldots,a_n\}$ and $c$ servers by $\mathcal{S} = \{s_1,s_2,\ldots,s_c\}$. Any assignment ($D$) of jobs in $\mathcal{A}$ to servers in $\mathcal{S}$, can be  equivalently represented by a bipartite graph $\mathcal{G}_D$ where the nodes denote the jobs and the servers. The edges of the graph exist between nodes representing job $a_i$ and server $s_j$ if job $a_i$ is assigned to server $s_j$ with $A_{{D}} \in \{0,1\}^{n \times c}$ denoting its (bi)adjacency matrix i.e. $A_D[i,j] = 1$ iff job $a_i$ is assigned to server $s_j$.

%Alternately, for a job assignment ${D}$, we can define an assignment matrix $A_{{D}} \in \{0,1\}^{n \times c}$ as the adjacency matrix of graph $\mathcal{G}_D$.

% \begin{nameddefinition}{Definition}
% (Construction of $A_D$): Given an assignment of jobs in $\mathcal{A}$ to servers in $\mathcal{S}$, we define matrix $A_D \in \{0,1\}^{n \times c}$ as follows.
% : \begin{align}{\label{A_defn}}
%     A_{{D}}[i,j] & = 1 \text{ if job $a_i$ is assigned to server $s_j$}\nonumber \\
%                 & = 0 \text{ otherwise}
% \end{align}
% \end{nameddefinition}

% Observe that the matrix $A_{D}$ represents the adjacency matrix for the bipartite graph $\mathcal{G}_D$ and we define balanced assignment schemes below.

%Note that by balanced assignments, we consider assignments where each server is assigned the same number of jobs and each job is assigned to the same number of servers. We state it formally as the two conditions given below. 

% We specifically study balanced assignment schemes where each server is assigned the same number of jobs and each job is assigned to the same number of servers. More formally, we define this as follows.

%More formally, we state it as two conditions below.

\begin{nameddefinition}{Definition}{\label{definition_balanced}}
(Balanced $(n,k,r,c)$ assignment): Given a set of $n$ jobs and $c$ servers, we call an assignment scheme of jobs a balanced $(n,k,r,c)$ assignment if the following conditions are satisfied.
\begin{itemize}
    \item Each server is assigned precisely $k$ distinct jobs to compute.
    \item Each job is assigned to precisely $r$ distinct servers.
\end{itemize}
Note that this assignment scheme ensures that $n \times r=k \times c$.
\end{nameddefinition}

% We can equivalently define it in terms of matrix $A_D$ as follows.

% \begin{definition*}
%     (Balanced $(n,k,r,c)$ assignment in terms of $A_D$): Given $n$ jobs and $c$ servers, we call the assignment scheme $D$ of jobs to servers a balanced $(n,k,r,c)$ assignment if each row of $A_D$ sums up to $r$ and each column sums up to $k$.
% \end{definition*}

We now look at a balanced $(9,3,2,6)$ assignment scheme.

% \begin{tabular}{|c|c|c|}
%   \hline
%   \multicolumn{1}{|c}{\raisebox{-1ex}{\rotatebox{45}{Column 1}}} & Header 1 & Header 2 \\
%   \hline
%   Row 1 & Content 1 & Content 2 \\
%   \hline
%   Row 2 & Content 3 & Content 4 \\
%   \hline
% \end{tabular}

\begin{example}{\rm
We describe a balanced assignment scheme with 9 jobs $\{a_1,a_2,\ldots,a_9\}$ and 6 servers $\{s_1,s_2,\ldots,s_6\}$ in Table \ref{balanced_assn}. Note that each job is assigned to precisely $2$ servers and each server has exactly 3 jobs to compute. The assignment scheme is motivated from a cyclic assignment scheme.
}\end{example}

%\rule{0pt}{20pt}\rotatebox{-10}{\makebox[10pt][l]
\begin{table}[!h]
    \begin{center}
{
	\begin{tabular}{|c|c|c|c| c|c|c|c| } 
		\hline
	{{\diagbox[width = 5 em]{Jobs}{Servers}}}& $s_1$ & $s_2$ & $s_3$ & $s_4$ & $s_5$ & $s_6$ \\
    \hline
    $a_1$ & 1 & 1 & $\text{}$ & $\text{}$ & $\text{}$ & $\text{}$ \\
    \hline
    $a_2$ & $\text{}$ & 1 & 1 & $\text{}$ & $\text{}$ & $\text{}$\\
        \hline
     $a_3$ & $\text{}$ & $\text{}$ & 1 & 1 & $\text{}$ & $\text{}$\\
        \hline
    
    $a_4$ & $\text{}$ & $\text{}$ & $\text{}$ & 1 & 1 & $\text{}$\\
    \hline
    $a_5$ & $\text{}$ & $\text{}$ & $\text{}$ & $\text{}$ & 1 & 1\\
        \hline
     $a_6$ & 1& $\text{}$ & $\text{}$ & $\text{}$ & $\text{}$ & 1 \\
        \hline
    
    $a_7$ & 1 & 1 & $\text{}$ & $\text{}$ & $\text{}$ & $\text{}$\\
    \hline
    $a_8$ & $\text{}$ & $\text{}$ & 1 & 1 & $\text{}$ & $\text{}$ \\
        \hline
     $a_9$ & $\text{}$ & $\text{}$ & $\text{}$ & $\text{}$ & 1 & 1 \\
        \hline
        
	\end{tabular}
 }
 \end{center}
	\vspace{1 em}
	\caption{Assignment of jobs to various servers in a balanced $(9,3,2,6)$ assignment scheme}
	\label{balanced_assn}
\end{table}

\section{The Mean and the Variance}{\label{setup_section}}

We consider the number of distinct jobs \(d\) received by the master when a subset of \(x\) servers successfully communicate with it. Any subset of servers \(\hat{S} \subseteq \mathcal{S}\) with \(|\hat{S}| = x\) is equally likely to be the set of communicators. If \(\hat{S}\) represents the subset of servers that communicate, the number of distinct jobs received is \(d = |\cup_{j \in \hat{S}} \text{supp}(A_D[:,j])|\), where \(\text{supp}(v)\) denotes the indices of non-zero entries in vector \(v\).

Next, we define the uniform distribution over all subsets of cardinality \(x\) from \(\mathcal{S}\), denoted by \(\mathfrak{D}_{\mathcal{S},x}\).%, where a sample from this distribution selects any subset with probability \(\frac{1}{{|\mathcal{S}| \choose x}}\).
For a given assignment ${D}$ of jobs to servers, we denote the expectation and the variance of the number of distinct completed jobs received by the master when any subset of $x$ servers is able to communicate with master uniformly at random by $\mathbbm{E}_{{D},x}[d]$ and $\sigma_{{D},x}[d]$ respectively. Although the assignment scheme $D$ itself is deterministic, there is still randomness in the set of straggling workers.

%the corresponding variance by $\sigma_{{D},x}[d]$.
% The expectation and the variance on the number of distinct received jobs $d$ may be written as 

% \begin{align}{\label{formal_eqn_var_exp}}
%     &\mathbbm{E}_{D,x}[d]= \mathbbm{E}_{\hat{S} \sim \mathfrak{D}_{\mathcal{S},x}} \left[\left|\bigcup_{j \in \hat{S}}\text{supp}(A_D[:,j])\right|\right] \\
%     \text{ and } &\sigma_{D,x}[d]= \sigma_{\hat{S} \sim \mathfrak{D}_{\mathcal{S},x}} \left[\left|\bigcup_{j \in \hat{S}}\text{supp}(A_D[:,j])\right|\right]
% \end{align}
%Note that the randomness in this setup is only in the set of servers that can  communicate with the master. The assignment scheme has no randomness associated with it.

%We analyse the number of distinct jobs $d$ when any subset of servers ($\mathcal{S}$) of cardinality $x$ is able to communicate with the master with equal probability. Recall that $\mathcal{S}$ denotes the entire set of servers in the system.

%For a given assignment ${D}$ of jobs to servers, we denote the expectation in the number of distinct completed jobs received by the master when any set of $x$ servers is able to communicate with master uniformly at random by $\mathbbm{E}_{{D},x}[d]$ and the corresponding variance by $\sigma_{{D},x}[d]$. 

Theorem~\ref{all_mean} states that the expectation on the number of distinct jobs $\mathbb{E}_{D,x}[d]$ is the same for every balanced $(n,k,r,c)$ assignment which is a function of $n,k,r,c$ and $x$ and is independent of the balanced assignment $D$. Throughout the paper, $\mathfrak{n}^D_{i,\hat{S}} \in \{0,1,\ldots,r\}$ denotes the number of servers in $\hat{S}$ to which job $a_i$ is assigned under the assignment scheme $D$ and $\mathbbm{N}$ denotes the set of positive integers including 0.  %Observe that $\mathfrak{n}^D_{i,\hat{S}}$ can take any value from $0$ to $r$. 

%There would be some jobs appearing multiple number of times in the subset of servers $S$. Thus, we say job $a_i$ occurred $\mathfrak{n}^D_{i,S}$ times in subset $S$ for any $i \leq n$. Note that $\mathfrak{n}^D_{i,S}$ could possibly be zero for some values of $i$ as well. Note that $\mathbbm{1}_E$ denotes the indicator random variable corresponding to event $E$. Let us denote the distribution of jobs in servers by $\mathcal{D}$. %We denote the number of subsets of 

%We are interested in those jobs which have $\mathfrak{n}^D_{i,S}>1$ in subset $S$ and thus claim that number of distinct jobs ($d$) received from subset of workers $S$ is $(k\times x- \sum\limits_{i=1}^{n} (\mathfrak{n}^D_{i,S}-1) \mathbbm{1}_{\mathfrak{n}^D_{i,S}>1})$ if the cardinality of $S$ is $x$. 

\begin{theorem}{\label{all_mean}}
 Consider any balanced $(n,k,r,c)$ assignment $D$. The expectation of the number of distinct completed jobs $d$ received by the master when any subset of cardinality $x$ of the set of servers $\mathcal{S}$ is able to communicate with the master with equal probability is the same for every balanced $(n,k,r,c)$ assignment $D$ and is given by 
 
 %Note that has the same expectation of distinct jobs received for every $x\leq c$ assuming $S$ can be any subset of the servers of cardinality $x$ with equal probability. Also the expectation can be computed as 
 \begin{equation}
  \mathbbm{E}_{{D},x}[d]= n\cdot\left(1-\frac{{c-r \choose x}}{{c \choose x}}\right) 
 \end{equation}
 %where $d$ denotes the number of distinct jobs received when any set of $x$ servers return and the job distribution is denoted by $\mathcal{D}$. 

\end{theorem}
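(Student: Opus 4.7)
The plan is to use indicators and linearity of expectation. For each job $a_i$, let $X_i$ be the indicator random variable that equals $1$ if at least one server in the random communicating subset $\hat{S}$ has been assigned job $a_i$, and $0$ otherwise. Then the number of distinct jobs received by the master can be written as $d = \sum_{i=1}^{n} X_i$, so by linearity,
\begin{equation*}
\mathbb{E}_{D,x}[d] = \sum_{i=1}^{n} \Pr[X_i = 1] = \sum_{i=1}^{n} \bigl(1 - \Pr[X_i = 0]\bigr).
\end{equation*}

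Next I would compute $\Pr[X_i = 0]$. By the balanced property, job $a_i$ is assigned to exactly $r$ servers; call this set of servers $R_i \subseteq \mathcal{S}$ with $|R_i| = r$. The event $X_i = 0$ is the event that $\hat{S}$ avoids $R_i$, i.e.\ $\hat{S} \subseteq \mathcal{S} \setminus R_i$. Since $\hat{S}$ is drawn uniformly from all $\binom{c}{x}$ subsets of size $x$ of $\mathcal{S}$, and the number of size-$x$ subsets of $\mathcal{S} \setminus R_i$ is $\binom{c-r}{x}$, we get
\begin{equation*}
\Pr[X_i = 0] = \frac{\binom{c-r}{x}}{\binom{c}{x}}.
\end{equation*}
Crucially, this probability depends only on $r$ (the number of servers holding $a_i$) and not on the identities of the servers in $R_i$, so it is the same for every job and for every balanced $(n,k,r,c)$ assignment $D$.

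Substituting back and summing over the $n$ jobs gives
\begin{equation*}
\mathbb{E}_{D,x}[d] = n \cdot \left(1 - \frac{\binom{c-r}{x}}{\binom{c}{x}}\right),
\end{equation*}
which is the claimed formula. There is no real obstacle here: the only thing to observe is that the marginal event ``job $a_i$ is received'' involves just the $r$-element set $R_i$, so pairwise overlaps among the sets $R_i$ (which do vary with $D$ and will matter for the variance in the next theorem) are irrelevant for the first moment. The formula is manifestly independent of $D$ as required.
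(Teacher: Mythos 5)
Your proof is correct. It is, however, a genuinely different route from the paper's. You use the complementary indicator $X_i=\mathbbm{1}\{\hat{S}\cap R_i\neq\emptyset\}$ and linearity of expectation, so the whole computation reduces to the single hypergeometric probability $\Pr[\hat{S}\cap R_i=\emptyset]=\binom{c-r}{x}/\binom{c}{x}$; this is shorter and avoids any combinatorial identities (and it degenerates gracefully to $0$ when $x>c-r$ under the usual convention $\binom{m}{x}=0$ for $x>m$). The paper instead writes $d=kx-\sum_{i}(\mathfrak{n}^D_{i,\hat{S}}-1)\mathbbm{1}_{\mathfrak{n}^D_{i,\hat{S}}>1}$, counts for each $t$ the subsets in which job $a_i$ appears exactly $t+1$ times, and collapses the resulting sum $\sum_t t\binom{r}{t+1}\binom{c-r}{x-t-1}$ via generating functions. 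What the paper's heavier decomposition buys is the machinery of the multiplicities $\mathfrak{n}^D_{i,\hat{S}}$ and the subset-counting technique, which it reuses directly for the variance computation in Appendix B, where the cross terms depend on $|R_i\cap R_j|$ and your simple marginal argument no longer suffices on its own. Your closing remark correctly identifies exactly this point: the first moment sees only the individual $r$-sets $R_i$, while pairwise overlaps only enter at the second moment.
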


A detailed proof is presented in Appendix A. %Appendix \ref{sec:all_mean_proof}.

% We may observe that for $x>c-r$, the expectation goes to $n$. This follows since, every job would definitely be received at the master as each job is assigned to exactly $r$ servers, hence no job can be missing from the set of jobs received at the master.

\begin{proof}[Proof Sketch]

%Let us compute the expectation of distinct jobs($d$) received.For the computation of $d$, w

%\begin{equation}{\label{n_D_defn}}
 %   \mathfrak{n}^D_{i,\hat{S}}= |(\text{sup}(A[i,:]) \cap \hat{S})|
%\end{equation}
%or maybe defined as in terms of assignment matrix $A_D$. 

We consider the number of distinct jobs \(d\) received by the master when a subset \(\hat{S}\) (with \(|\hat{S}| = x\)) communicates with the master:
\[
d = \left|\bigcup_{j \in \hat{S}}\text{supp}(A_D[:,j])\right| = k \times x - \sum_{i=1}^{n} (\mathfrak{n}^D_{i,\hat{S}} - 1) \mathbbm{1}_{\mathfrak{n}^D_{i,\hat{S}} > 1}
\]

The term \(\sum_{i=1}^{n} (\mathfrak{n}^D_{i,\hat{S}} - 1) \mathbbm{1}_{\mathfrak{n}^D_{i,\hat{S}} > 1}\) excludes jobs received multiple times from different servers in \(\hat{S}\). Next, we compute the expected number of distinct jobs:

\[
\mathbb{E}_{D,x}[d] = k \times x - \frac{n \sum_{\substack{\hat{S} \subset \mathcal{S}; \\ |\hat{S}| = x}} (\mathfrak{n}^D_{i,\hat{S}} - 1) \mathbbm{1}_{\mathfrak{n}^D_{i,\hat{S}} > 1}}{{c \choose x}}
\]

We now show that the sum \(\sum_{\hat{S} \subset \mathcal{S}, |\hat{S}| = x} (\mathfrak{n}^D_{i,\hat{S}} - 1) \mathbbm{1}_{\mathfrak{n}^D_{i,\hat{S}} > 1}\) is the same for all \(i\) in a balanced \((n, k, r, c)\) assignment. This is computed by counting the number of subsets \(\hat{S} \subset \mathcal{S}\) of cardinality \(x\) where job \(a_i\) appears more than once. Thus,

%We compute the sum as:

\[
\sum_{\substack{\hat{S} \subset \mathcal{S}; \\ |\hat{S}| = x}} (\mathfrak{n}^D_{i,\hat{S}} - 1) \mathbbm{1}_{\mathfrak{n}^D_{i,\hat{S}} > 1} = \sum_{t=1}^{r-1} t {r \choose (t+1)} {(c-r) \choose (x-t-1)}
\]

On substitution, we obtain the desired result.

\end{proof}

A few comments are in order here. Note that for $x=1$, the expectation (as expected) is precisely $k$. Observe that for $x>c-r$, the expectation goes to $n$ since the master obtains at least one copy of every job $a_i \in \mathcal{A}$. This is because each job is assigned to exactly 
$r$ servers, ensuring full coverage of all jobs when more than $
c-r$ servers are non-straggling.

%This follows since every job is assigned to exactly $r$ servers and therefore for any job to be missed out, the $r$ servers to which that specific job was assigned, should fail to communicate with the master. Thus, if any job is missed out, then the number of servers that manage to communicate with the master $x$ can at most be $c-r$.  

%, since the master always receives precisely $k$ distinct jobs, if only one server manages to communicate with the master. Similarly, $\sigma_{D,x}[d] = 0$ for $x > c-r$, as the master would receive all the $n$ jobs if more than $c-r$ servers communicate (for the sake of completeness, we formally calculate this in Corollary \ref{corollary_variance}).

%\ref{corollary_variance}
We now calculate the variance for the number of distinct jobs $d$ received at the master for any balanced $(n,k,r,c)$ job assignment $D$. From the previous paragraph, it is clear that $\sigma_{D,1}[d] = 0$ for the cases $x = 1$
and $x>c-r$ (for sake of completeness, we calculate this in Corollary 1 in Appendix I). For calculating the variance on the number of distinct jobs $d$ received by the master, observe 
\begin{equation*}
 \sigma_{D,x}(d) = \sigma_{D,x} \left(\sum\limits_{i} (\mathfrak{n}^D_{i,\hat{S}}-1) \mathbbm{1}_{\mathfrak{n}^D_{i,\hat{S}}>1}\right)   
\end{equation*}
%The above follows since $\sigma(t-X) = \sigma(X)$ where $t$ is a constant and $X$ is a random variable. We now make use of the definition $\text{var}(X) = \mathbbm{E}[X^2]- (\mathbbm{E}[X])^2$. 

Using a similar technique (detailed calculation in Appendix B%\ref{sec:variance_comp}
) as described above, we obtain 
\begin{align}{\label{sigma_expression_final}}
\sigma_{D,x} (d)
        =  & \frac{\sum\limits_{1\leq i < j\leq n} \sum\limits_{\substack{\hat{S} \subset \mathcal{S};\\|\hat{S}|=x}} (\mathfrak{n}^D_{i,\hat{S}}-1) \mathbbm{1}_{\mathfrak{n}^D_{i,\hat{S}}>1} (\mathfrak{n}^D_{j,\hat{S}}-1) \mathbbm{1}_{\mathfrak{n}^D_{j,\hat{S}}>1}}{{{c \choose x}}}\nonumber\\
        & + \frac{T_2(n,r,c,x)}{{{c \choose x}}}- (T_1(n,r,c,x))^2 
\end{align}

where, $T_1(.)$ and $T_2(.)$ is defined in Appendix B. %\ref{sec:variance_comp}. 
% \begin{equation}{\label{eq:T_1_defn}}
%     T_1(n,r,c,x) = \frac{n{{c-r \choose x}}}{{{c \choose x}}} + n \left(\frac{rx}{c}-1\right)
% \end{equation} 
% \begin{align}{\label{eq:T_2_defn}}
%     T_2(n,r,c,x) = n\biggl(r(r-1) {{c-2 \choose x-2}} & - r{{c-1 \choose x-1}} \nonumber\\
%     & -{{c-r \choose x}}+ {{c \choose x}}\biggr)
% \end{align}

We analyze the term $\sum_{1 \leq i < j \leq n} \sum_{\substack{\hat{S} \subset \mathcal{S}; \\ |\hat{S}| = x}} (\mathfrak{n}^D_{i,\hat{S}} - 1) \mathbbm{1}_{\mathfrak{n}^D_{i,\hat{S}} > 1} (\mathfrak{n}^D_{j,\hat{S}} - 1) \mathbbm{1}_{\mathfrak{n}^D_{j,\hat{S}} > 1}$
where for a pair of jobs \( a_i \) and \( a_j \), if they appear \((\alpha + 1)\) and \((\beta + 1)\) times respectively in some subset \(\hat{S}\), their contribution to the sum is \(\alpha \beta\). Observe that the contribution of a pair of servers $(i,j)$ to the sum depends on precisely the number of servers (call it $m$) that are assigned both the jobs. Thus, we define

%We aim to compute this sum by first analyzing the contribution for any pair of jobs \( (a_i, a_j) \). This depends on the distribution of jobs \( a_i \) and \( a_j \) across the \( c \) servers, which in turn depends on the specific balanced \((n, k, r, c)\) job assignment \(D\).

%The key observation is that the sum for a pair \( (a_i, a_j) \) depends on how many servers are assigned both jobs. Let \( m \) denote the number of servers assigned both jobs, which can range from 0 to \( r \). The sum for any pair \( (a_i, a_j) \) with exactly \( m \) shared servers is the same for all pairs of jobs assigned to exactly \( m \) servers. 

\[
g(m, x) = \sum_{\substack{\hat{S} \subset \mathcal{S}; \\ |\hat{S}| = x}} (\mathfrak{n}^D_{i,\hat{S}} - 1) \mathbbm{1}_{\mathfrak{n}^D_{i,\hat{S}} > 1} (\mathfrak{n}^D_{j,\hat{S}} - 1) \mathbbm{1}_{\mathfrak{n}^D_{j,\hat{S}} > 1}
\]

to represent the contribution from a pair of jobs assigned together to exactly \( m \) servers. We show that \( g(m, x) \) depends only on \( c \), \( r \), \( m \), and \( x \), and derive its value for \( g(0, x) \) in Lemma \ref{g_p_0_lemma} and a recursion for \( g(m, x) \) in Lemma \ref{g_p_diff_lemma}.

\begin{lemma}{\label{g_p_0_lemma}}
    For a balanced $(n,k,r,c)$ assignment, the value of $g(0,x)$ is given by 
    \begin{align}{\label{g_0_exp}}
    g(0,x) =  r^2 {{c-2 \choose x-2}} & - 2r {{c-1 \choose x-1}}  + {{c \choose x}} - 2{{c-r \choose x}} \nonumber\\
    & + 2 r{{c-r-1 \choose x-1}} + {c-2r \choose x}
\end{align}
\end{lemma}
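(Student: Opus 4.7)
The plan is to linearize the truncated-difference indicator using the identity $(\alpha-1)\mathbbm{1}_{\alpha>1}=\alpha-\mathbbm{1}_{\alpha\geq 1}$, valid for every nonnegative integer $\alpha$ (checked by case analysis on $\alpha\in\{0\}$, $\{1\}$, and $\{2,\ldots,r\}$). Writing $\alpha=\mathfrak{n}^{D}_{i,\hat{S}}$ and $\beta=\mathfrak{n}^{D}_{j,\hat{S}}$, and letting $R_i,R_j\subseteq\mathcal{S}$ denote the server-sets of jobs $a_i,a_j$ (which are disjoint because we are in the $m=0$ case, so $|R_i\cup R_j|=2r$), this recasts
\begin{equation*}
g(0,x)=\sum_{\hat{S}\subseteq\mathcal{S},\,|\hat{S}|=x}\bigl(\alpha-\mathbbm{1}_{\alpha\geq 1}\bigr)\bigl(\beta-\mathbbm{1}_{\beta\geq 1}\bigr),
\end{equation*}
which I would expand into four summands $\sum\alpha\beta$, $-\sum\alpha\mathbbm{1}_{\beta\geq 1}$, $-\sum\beta\mathbbm{1}_{\alpha\geq 1}$, and $\sum\mathbbm{1}_{\alpha\geq 1}\mathbbm{1}_{\beta\geq 1}$, and evaluate each by elementary double-counting.

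For the first piece, I use $\alpha\beta=\sum_{u\in R_i,\,v\in R_j}\mathbbm{1}_{\{u,v\}\subseteq\hat{S}}$; since $R_i\cap R_j=\emptyset$, each of the $r^2$ ordered pairs $(u,v)$ has $u\neq v$, and the number of $\hat{S}$ of size $x$ containing both is $\binom{c-2}{x-2}$, contributing $r^2\binom{c-2}{x-2}$. For the cross terms, I rewrite $\mathbbm{1}_{\beta\geq 1}=1-\mathbbm{1}_{\beta=0}$, so that $\sum\alpha=r\binom{c-1}{x-1}$ and $\sum\alpha\,\mathbbm{1}_{\beta=0}$ counts pairs $(u,\hat{S})$ with $u\in R_i$ and $\hat{S}\subseteq\mathcal{S}\setminus R_j$; because $R_i\subseteq\mathcal{S}\setminus R_j$ (by disjointness), this yields $r\binom{c-r-1}{x-1}$, and the $\beta\leftrightarrow\alpha$ term is identical by symmetry. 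Finally, by inclusion--exclusion on subsets avoiding $R_i$, $R_j$, or both, the last summand equals $\binom{c}{x}-2\binom{c-r}{x}+\binom{c-2r}{x}$. Summing the four contributions, with the minus signs from the linearization applied to the cross terms, reproduces \eqref{g_0_exp} term by term.

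I do not anticipate a serious obstacle: the only subtle point is verifying the linearization identity and making sure the disjointness $R_i\cap R_j=\emptyset$ is invoked correctly every time a cardinality such as $|R_i\cup R_j|=2r$ or $|\mathcal{S}\setminus R_j|=c-r$ enters the count. The main payoff of this approach over a direct double-sum $\sum_{\alpha,\beta}(\alpha-1)(\beta-1)\binom{r}{\alpha}\binom{r}{\beta}\binom{c-2r}{x-\alpha-\beta}$ (which would require Vandermonde-style convolutions) is that each of the four resulting sums is obtainable by a single one-line counting argument, making the match with the six-term expression in \eqref{g_0_exp} essentially transparent.
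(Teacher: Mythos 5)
Your proposal is correct, and it takes a genuinely different route from the paper. The paper's proof writes
\begin{equation*}
g(0,x)=\sum_{t=2}^{r}\sum_{u=2}^{r}(t-1)(u-1)\binom{r}{t}\binom{r}{u}\binom{c-2r}{x-t-u}
\end{equation*}
by classifying subsets $\hat{S}$ according to how many servers carrying $a_i$ and $a_j$ they contain, and then extracts the coefficient of $y^{x}$ from the product $\left(ry(1+y)^{r-1}+1-(1+y)^{r}\right)^{2}(1+y)^{c-2r}$ to obtain the closed form. You instead linearize via $(\alpha-1)\mathbbm{1}_{\alpha>1}=\alpha-\mathbbm{1}_{\alpha\geq 1}$ (which indeed holds for all nonnegative integers $\alpha$), expand the product into four sums, and evaluate each by a one-line double count; your four contributions $r^{2}\binom{c-2}{x-2}$, $-r\binom{c-1}{x-1}+r\binom{c-r-1}{x-1}$ (twice, by symmetry), and $\binom{c}{x}-2\binom{c-r}{x}+\binom{c-2r}{x}$ are all correctly computed using the disjointness of the two server-sets, and they sum term by term to \eqref{g_0_exp}. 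What your approach buys is transparency: the six terms of the final expression appear directly as counts, with no Vandermonde-style convolution or generating-function bookkeeping. What the paper's approach buys is uniformity: the same coefficient-extraction template is reused for the mean computation and for $T_2$ in the variance, so the authors pay the generating-function setup cost once and amortize it. Either proof is complete and valid.
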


\begin{lemma}{\label{g_p_diff_lemma}}
For a balanced $(n,k,r,c)$ assignment, the values for $g(m,x)$ are related in the following fashion 
\begin{align}{\label{g_p_diff}}
    g(m+1,x) - g(m,x) = {c-2 \choose x-1} & - 2{c-r-1 \choose x-1} \nonumber\\
    & + {c-2r+m \choose x-1}
\end{align}
\end{lemma}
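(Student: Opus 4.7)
The plan is to compute $g(m,x)$ in closed form as a single expression valid for every admissible $m$, and then subtract. Fix two jobs $a_i, a_j$ and let $A, B \subseteq \mathcal{S}$ be the sets of servers they are respectively assigned to, with $|A|=|B|=r$ and $|A\cap B|=m$. Partition $\mathcal{S}$ into the four disjoint blocks $A\cap B$, $A\setminus B$, $B\setminus A$, and $\mathcal{S}\setminus(A\cup B)$ of sizes $m$, $r-m$, $r-m$, and $c-2r+m$ respectively. Since $\mathfrak{n}^D_{i,\hat{S}} = |\hat{S}\cap A|$ and $\mathfrak{n}^D_{j,\hat{S}} = |\hat{S}\cap B|$, the sum $g(m,x)$ depends only on the block sizes, hence only on $c,r,m,x$, justifying the notation.

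The second step is to expand the product using the elementary identity
\[
(\mathfrak{n}-1)\mathbbm{1}_{\mathfrak{n}>1} \;=\; \mathfrak{n} - \mathbbm{1}_{\mathfrak{n}\geq 1},
\]
which one verifies directly for each $\mathfrak{n}\in\{0,1,2,\ldots\}$. This writes $g(m,x) = S_1 - 2S_2 + S_4$, where $S_1 = \sum_{\hat{S}} |\hat{S}\cap A|\,|\hat{S}\cap B|$, $S_2 = \sum_{\hat{S}} |\hat{S}\cap A|\,\mathbbm{1}_{\hat{S}\cap B\neq\emptyset}$ (its partner with $A,B$ swapped coincides with $S_2$ by the $A\leftrightarrow B$ symmetry of the block-size data), and $S_4 = \sum_{\hat{S}} \mathbbm{1}_{\hat{S}\cap A\neq\emptyset}\mathbbm{1}_{\hat{S}\cap B\neq\emptyset}$. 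I will evaluate $S_1$ by summing over ordered pairs $(s,t)\in A\times B$ and splitting by $s=t$ (which forces $s\in A\cap B$) versus $s\neq t$; evaluate $S_2$ by subtracting the contribution of subsets $\hat{S}$ that avoid $B$; and evaluate $S_4$ by inclusion-exclusion on the two events $\hat{S}\cap A=\emptyset$ and $\hat{S}\cap B=\emptyset$, where $|A\cup B|=2r-m$ introduces the $\binom{c-2r+m}{x}$ term.

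Collecting the pieces should yield
\[
g(m,x) = (m-2r)\binom{c-1}{x-1} + (r^2-m)\binom{c-2}{x-2} + 2(r-m)\binom{c-r-1}{x-1} + \binom{c}{x} - 2\binom{c-r}{x} + \binom{c-2r+m}{x},
\]
whose specialization at $m=0$ recovers Lemma~\ref{g_p_0_lemma} as a sanity check. Subtracting $g(m+1,x)-g(m,x)$, all terms constant in $m$ drop out and only the coefficients linear in $m$ together with the final binomial survive:
\[
g(m+1,x) - g(m,x) = \binom{c-1}{x-1} - \binom{c-2}{x-2} - 2\binom{c-r-1}{x-1} + \binom{c-2r+m+1}{x} - \binom{c-2r+m}{x}.
\]
Two applications of Pascal's identity $\binom{N}{K} - \binom{N-1}{K} = \binom{N-1}{K-1}$ collapse the bracketed differences to $\binom{c-2}{x-1}$ and $\binom{c-2r+m}{x-1}$ respectively, producing the claimed formula.

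The step requiring the most care is the inclusion-exclusion underlying $S_4$ and the companion ``avoid $B$'' subtraction inside $S_2$: both are sensitive to the overlap structure $|A\cup B|=2r-m$ and are precisely where the $m$-dependent binomials enter. Once those terms are bookkept correctly, the rest is routine binomial manipulation via Pascal's rule, and the answer depends on $m$ only through the single term $\binom{c-2r+m}{x-1}$, as predicted by the lemma.
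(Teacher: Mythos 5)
Your proposal is correct, and it reaches the lemma by a genuinely different route from the paper. The paper proves the recursion combinatorially: it couples a job pair sharing $m$ servers with a pair sharing $m+1$ servers via a bijection $f$ on $\mathcal{S}$ that agrees on all but two ``special'' servers, observes that only subsets $\hat{S}$ containing exactly one of those two servers contribute to the difference, shows each such contributing subset adds exactly $1$ (via the telescoping $\alpha\beta + (\alpha-1)(\beta-1) - \alpha(\beta-1) - (\alpha-1)\beta = 1$), and then counts the qualifying subsets by inclusion--exclusion. You instead derive a single closed form for $g(m,x)$ valid for all $m$, using the identity $(\mathfrak{n}-1)\mathbbm{1}_{\mathfrak{n}>1} = \mathfrak{n} - \mathbbm{1}_{\mathfrak{n}\geq 1}$ to reduce everything to three elementary sums, and then subtract. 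I verified your intermediate values: $S_1 = m\binom{c-1}{x-1} + (r^2-m)\binom{c-2}{x-2}$, $S_2 = r\binom{c-1}{x-1} - (r-m)\binom{c-r-1}{x-1}$, $S_4 = \binom{c}{x} - 2\binom{c-r}{x} + \binom{c-2r+m}{x}$; the assembled closed form specializes at $m=0$ to the expression in Lemma~\ref{g_p_0_lemma}, and the two applications of Pascal's rule give exactly the claimed difference. Your approach buys a uniform closed form for $g(m,x)$ that subsumes Lemma~\ref{g_p_0_lemma} (which the paper proves separately by a generating-function computation) and avoids any implicit need for pairs with both overlaps $m$ and $m+1$ to coexist in a single assignment; the paper's bijective argument, by contrast, explains combinatorially why the increment depends on $m$ only through $\binom{c-2r+m}{x-1}$, namely through the subsets avoiding $A\cup B$ minus the two special servers.
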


We prove these Lemmas in Appendix F %\ref{sec:g_p_0_lemma_proof} 
  and G %\ref{sec:g_p_diff_lemma_proof} 
respectively using a careful application of techniques from combinatorics. %The proof of Lemma \ref{g_p_0_lemma} follows by considering a pair of jobs which never appears together in a server and by doing a careful counting using techniques from combinatorics. %Lemma \ref{g_p_diff_lemma} is proven by considering two distinct pairs of jobs that appear together in $m$ and $m+1$ servers respectively.

%This is clear from the fact that only the last term in \eqref{g_p_diff} depends on $m$. Of course, if $x \le c - 2r$, then we can further conclude that $g(m+1,x)-g(m,x)$ is a strictly increasing function of $m$.

Note that all the expressions for $g(m,x)$ depends on the values of $c,r,m$ and $x$ and is therefore independent of which balanced $(n,k,r,c)$ assignment $D$ we choose. Observe further that the expression for $g(m+1,x)-g(m,x)$ is an increasing function of $m$ when $x$ is fixed. We now define $\mathfrak{m}^D(m)$ as the number of distinct pairs of jobs $(a_i,a_j)$ with $1 \le i < j \le n$ that are assigned together to precisely $m$ servers in the balanced $(n,k,r,c)$ assignment $D$. Further, since the total number of job pairs is ${{n \choose 2}}$ and since the number of pairs of jobs assigned to a fixed server $c_i$ is given by ${{k \choose 2}}$, we obtain equations \eqref{proporty_m_1} and \eqref{property_m_2}.
%One can formally define this number for a specific balanced $(n,k,r,c)$ assignment $D$ using the assignment matrix $A_D$ as \begin{equation}{\label{m_D_defn}}
%     \mathfrak{m}^D(m) = \sum_{\substack{(i_1,i_2)\\ 1\leq i_1<i_2 \leq n }} \mathbbm{1}_{\sum_{j=1}^{c} A_D[i_1,j]A_D[i_2,j]=m} 
% \end{equation}

%Given a balanced $(n,k,r,c)$ assignment $D$, the numbers $\mathfrak{m}^D(m)$ have some additional properties that follow from a counting argument. 

% \begin{multicols}{2}
%    \begin{equation}{\label{proporty_m_1}}
%          \sum_{m=0}^r \mathfrak{m}^D(m) = {n \choose 2}
%     \end{equation}
%     \begin{equation}{\label{property_m_2}}
%         \sum_{m=0}^r m\mathfrak{m}^D(m) = c{k \choose 2}
%     \end{equation}
% \end{multicols}

\noindent\begin{minipage}{0.49\linewidth}
    \begin{equation}
     \sum_{m=0}^r \mathfrak{m}^D(m) = {n \choose 2}
    \label{proporty_m_1}
    \end{equation}
\end{minipage}
\hfill
\begin{minipage}{0.48\linewidth}
    \begin{equation}
    \sum_{m=0}^r m\mathfrak{m}^D(m) = c{k \choose 2}
    \label{property_m_2}
    \end{equation}
\end{minipage}

% Thus, one may observe that the variance of $d$ is a function of $\mathfrak{m}^D(.)$ (frequency distribution of jobs pairs) while the expectation of 

Thus, we have shown that while the mean of the number of received jobs $d$ is the same for all balanced $(n,k,r,c)$ assignments, the variance of $d$ is dependent on the frequency distribution of job pairs assigned to the same server ($\mathfrak{m}^D(.)$). Observe that the variance of the number of received jobs scales linearly with $\sum\limits_{m=0}^{r}\mathfrak{m}^D(m)g(m,x)$ and we state a formal result in equation (23) in Appendix B. %\eqref{var_sim_exp}.

\section{Results on extreme variance based on $[\mathfrak{m}^D(m)]_{m=0}^{r}$}\label{extreme_var_section}

To explore the range of variances that balanced $(n,k,r,c)$ job assignments attain, we define a shape vector $h_D$.

%From the previous sections, that while all $(n,k,r,c)$ balanced job assignments give the same mean, they display different variances based on the values of $\mathfrak{m}^D(m)$ and explore the range of variances that balanced  $(n,k,r,c)$ job assignments.

%that arise in the respective job assignments. Here $\mathfrak{m}^D(m)$ refers to number of pairs of distinct jobs assigned together to precisely $m$ servers as defined in \eqref{m_D_defn}. We now explore the range of variances that balanced  $(n,k,r,c)$ job assignments can attain. %therefore define a vector $h_D$ that we call a shape vector that can be associated to each balanced $(n,k,r,c)$ job assignment $D$ as %follows using $\mathfrak{m}^D(m)$ (number of pairs of distinct jobs assigned to precisely $m$ servers) as defined in Equation \eqref{m_D_defn} 

\begin{definition}
Given a balanced $(n,k,r,c)$ job assignment $D$, we define a shape vector $h_D \in \mathbb{N}^{r+1}$  associated to the assignment $D$ as
\begin{equation}{\label{vec_h_defn}}
    h_D = [\mathfrak{m}^D(0), \mathfrak{m}^D(1), \ldots \mathfrak{m}^D(r)]^T
\end{equation}
\end{definition}

%Thus the shape vector lists the number of pairs of jobs that are assigned together to $m$ servers, for $m = 0, 1, \cdots, r$ and is a vector of length $r+1$. Clearly, the entries of $h_D$ are all non-negative integers. 
Two distinct balanced $(n,k,r,c)$ job assignments $D_1, D_2$ would have the same mean and variance if and only if the corresponding shape vectors $h_{D_1}$ and $h_{D_2}$ are the same. Using equations \eqref{proporty_m_1} and \eqref{property_m_2}, we have 
\begin{equation}{\label{defn_of_A}}
    H h_D = \begin{pmatrix} {n \choose 2} \\ c{k \choose 2} \end{pmatrix} \text{where}, H = \begin{bmatrix}
        1 & 1 & 1 & \cdots & 1 \\
        0 & 1 & 2 & \cdots & r
    \end{bmatrix}
    \end{equation}  

% We first characterize all possible candidate shape vectors in $\mathbb{N}^{r+1}$. Since the entries of the shape vectors are $\mathfrak{m}^D(m)$ arising out of  balanced $(n,k,r,c)$ job assignment, they must satisfy equations \eqref{proporty_m_1} and \eqref{property_m_2}. These can be rewritten as \begin{equation}{\label{defn_of_A}}
%     \begin{bmatrix}
%         1 & 1 & 1 & \cdots & 1 \\
%         0 & 1 & 2 & \cdots & r
%     \end{bmatrix} \begin{pmatrix} \mathfrak{m}^D(0) \\ \mathfrak{m}^D(1) \\ \vdots \\ \mathfrak{m}^D(r) \end{pmatrix} = H h_D = \begin{pmatrix} {n \choose 2} \\ c{k \choose 2} \end{pmatrix}
%     \end{equation}  
% Here the matrix $H = \begin{bmatrix}
%         1 & 1 & 1 & \cdots & 1 \\
%         0 & 1 & 2 & \cdots & r
%     \end{bmatrix}$. 

    Thus, two possible shape vectors differ by a vector in the kernel of the matrix $H$. %Therefore, if one has a particular balanced $(n,k,r,c)$ job assignment $D$ with the corresponding shape vector $h_D$, then all other possible shape vectors can be characterized as vectors $(h_D + v) \in \mathbb{N}^{r+1}$ where $v \in \ker H$. A basis for the kernel of the matrix $H$ is given by the $r-1$ vectors \begin{equation}{\label{basis_ker}} \left\{h_1, h_2, h_3, \cdots, h_{r-2}, h_{r-1} \right\} =  \left\{ \begin{pmatrix} 1 \\ -2 \\ 1 \\ 0 \\ \vdots \\ 0 \\ 0 \\ 0 \end{pmatrix},  \begin{pmatrix} 0 \\ 1 \\ -2 \\ 1 \\  \vdots \\ 0 \\ 0 \\ 0 \end{pmatrix}, \begin{pmatrix} 0 \\ 0 \\ 1 \\ -2 \\ \vdots \\ 0 \\ 0 \\ 0 \end{pmatrix}, \cdots , \begin{pmatrix} 0 \\ 0 \\ 0 \\ 0 \\ \vdots \\ -2 \\ 1 \\ 0 \end{pmatrix}, \begin{pmatrix} 0 \\ 0 \\ 0 \\ 0 \\ \vdots \\ 1 \\ -2 \\ 1 \end{pmatrix} \right\}\end{equation} 
    
    % Note that each of these basis vectors $h_i$ have only three nonzero entries. We make use of these basis vectors in determining extremal values of variances that a balanced $(n,k,r,c)$ job assignment can attain. Based on the shape vectors, we now define certain special kinds of balanced $(n,k,r,c)$ job assignments.

\begin{definition}{\label{minimal_compact}}
A balanced $(n,k,r,c)$ assignment $D$ is compact if the shape vector $h_D$ has at most two non-zero elements.   
\end{definition}

When the shape vector \( h_D \) may have only one nonzero entry, meaning every job pair is assigned to \( m = \frac{ck(k-1)}{n(n-1)} = r\frac{k-1}{n-1} \) servers. We examine the policies achieving minimal and maximal variance of \( d \) in Sections \ref{sec:minimal_variance} and \ref{sec:maximal_variance}.

%However, in general, compact balanced $(n,k,r,c)$ job assignments have at most two nonzero consecutive entries. 

%Clearly, the dependence of $m$ on the values of $n,k,r$ forces such a possibility to be rare. So in general, compact balanced $(n,k,r,c)$ job assignments have two nonzero entries. 

%We further divide it into 2 categories i.e. proximally minimally compact and stretched minimally compact assignments.
\subsection{Minimal variance of $d$}{\label{sec:minimal_variance}}

\begin{definition}{\label{proximally_minimal_compact}}
    A balanced assignment $D$ is proximally compact if the shape vector $h_D$ has either exactly one nonzero entry or has exactly two consecutive nonzero entries.    
\end{definition}

\begin{namedlemma}{Lemma}{\label{lemma_on_l_proximally_minimal}}
    For proximally compact $(n,k,r,c)$ assignment $D$, we have $\ell=\Bigl\lfloor\frac{r(k-1)}{n-1}\Bigr\rfloor$ where $\ell$ denotes the index of the smallest non-zero entry in the shape vector $h_D$.
    %\Bigl\lfloor\dfrac{1}{2}\Bigr\rfloor
\end{namedlemma}

% \begin{proof}

% For a proximally compact $(n,k,r,c)$ assignment $D$, if the shape vector has only one nonzero entry, then $\mathfrak{m}^D(\ell) = {n \choose 2}$. As the total number of job pairs at the $c$ servers is $c{k \choose 2}$, therefore using \eqref{property_m_2}, we have $\ell = \frac{c{k \choose 2}}{{n \choose 2}} = \frac{r(k-1)}{n-1}$. 

% On the other hand, if a proximally compact $(n,k,r,c)$ assignment $D$, has a shape vector with exactly two consecutive nonzero entries, then $\mathfrak{m}^D(m)$ is zero for all $m\neq \ell,\ell+1$ for some $\ell$. Using \eqref{proporty_m_1}, we have $\mathfrak{m}^D(\ell+1) = {{n \choose 2}}- \mathfrak{m}^D(\ell)$ and by \eqref{property_m_2}, we get  $\ell(\mathfrak{m}^D(\ell)) + (\ell+1)({{n \choose 2}}- \mathfrak{m}^D(\ell)) = c{{k\choose 2}}$. Thus we can conclude that  $\ell {{n \choose 2}} \leq c{{k \choose 2}}$ and $(\ell +1){{n \choose 2}} > c{{k \choose 2}}$ (as $\mathfrak{m}^D(\ell)>0$) and therefore $\ell=\Bigl\lfloor\frac{c.{{k \choose 2}}}{{{n \choose 2}}}\Bigr\rfloor = \Bigl\lfloor\frac{r(k-1)}{n-1}\Bigr\rfloor$.
% \end{proof}

% Given $n,k,r,c$ it is not clear whether a balanced $(n,k,r,c)$ job assignment which is proximally compact exists. %Since, there 

% By Lemma~\ref{lemma_on_l_proximally_minimal}, it is clear that there can be only one shape vector in $\mathbb{N}^{r+1}$ that satisfies the condition of proximal compactness. One can therefore conclude that there is at most only one proximally compact balanced $(n,k,r,c)$ job assignment.   
The proof of this lemma follows since $\mathfrak{m}^D(m)$ is zero for $m \neq l,l+1$ we defer it to Appendix C. %\ref{sec:proof_lemma_proximal_minimal}. 
Given $n,k,r$ and $c$, it is not clear whether a balanced $(n,k,r,c)$ job assignment which is proximally compact exists. When the shape vector has only one nonzero entry corresponds to the case of balanced incomplete block designs (BIBD) \cite{doi:10.1111/j.1469-1809.1939.tb02219.x,10.5555-1202540}. BIBDs is a well studied subject and we redefine it below.

%Balanced incomplete block designs is a very well studied subject and we define it below.

\begin{nameddefinition}{Defintion}
    (BIBD $(v,b,r,k,\lambda)$ scheme as in \cite{10.5555-1202540}) - A balanced incomplete block design (BIBD) is a pair $(V, B)$ where V is a $v$-set and B
is a collection of $b$ $k$-sized subsets of $V$ (blocks) such that each element of $V$ is contained
in exactly $r$ blocks and any 2-subset of V is contained in exactly $\lambda$ blocks.
\end{nameddefinition}

Note that we can associate the set $V$ to the set of jobs $\mathcal{A}$. Thus $v$ is the same as $n$ that we have employed so far. Each $k$ sized subset of $V$ (or $\mathcal{A}$) can be identified to the set of jobs assigned to a server. The number $r$ has the same interpretation as in our case. Since $B$ is a collection of $b$  $k$-sized sets, we can think of the number of servers $c$ being equal to $b$, thus $\lambda = \frac{r(k-1)}{n-1}$. $(n,k,r,c)$ in our case is identical to $(v,b,r,k,\lambda)$ quoted in the definition of BIBD above.  %{\color{red} It is known that if $\lambda$ so calculated is a positive integer, then a BIBD exists and there is substantial work about how to construct these BIBDs.  Comment SS- Don't think this statement is true.}

% Even in cases where $n,k,r$ may lead to a  $\lambda$ which is a positive integer,  it is not known whether a BIBD always exists. However, multiple constructions of BIBD for various parameters have been described in \cite{doi:10.1111/j.1469-1809.1939.tb02219.x} using various techniques, like vector sub-spaces over finite fields etc. The famous Bruck-Ryser-Chowla theorem in \cite{sprott1955} gives some necessary conditions on $n,k,r$ that guarantees the existence of a BIBD.

The existence of BIBDs is an open question and the famous Bruck-Ryser-Chowla theorem in \cite{sprott1955} gives some necessary conditions. Proximally compact assignments may be thought of as a generalization of BIBDs that do not insist on a unique number $\lambda$, that represents the number of servers to be shared by every pair of jobs. We now provide an example of a proximally compact assignment scheme that is not a BIBD. %Instead proximally compact assignments allow every pair of jobs to be assigned to either $\ell$ or $\ell+1$ servers. As remarked earlier, Lemma~\ref{lemma_on_l_proximally_minimal} ensures that given $n,k,r,c$ there is a unique shape vector that can be constructed and therefore at most only one proximally compact assignment can exist.  We now provide an example of a proximally compact assignment scheme that is not a BIBD. %Example \ref{proximally_compact_Example} of a proximally minimally compact assignment scheme

\begin{example}{\label{proximally_compact_Example}} {\rm
    Consider balanced $(9,3,3,9)$ assignment schemes. In this case, $\frac{r(k-1)}{n-1} = \frac{3}{4}$ and so there is no BIBD possible. Further, $\ell = 0$ and the corresponding shape vector for a possible proximally compact assignment should be $h_D = [9, 
 27, 0, 0]^T$. We display an assignment scheme in Table \ref{proximally_minimally_compact_design} whose shape vector is indeed $h_D$. Note that in this scheme, 27 pairs of jobs are assigned together to a server once and there are 9 pairs of jobs that were never assigned together.}    % $h_D = \begin{pmatrix} 9 \\ 27 \\ 0 \\ 0 \end{pmatrix}$. 

\end{example}

\begin{table}[!h]
    \centering
{
	\begin{tabular}{|c|c|c|c|c|c|c|c|c|c| } 
		\hline
	{{\diagbox[width = 5 em]{Jobs}{Servers}}}& $s_1$ & $s_2$ & $s_3$ & $s_4$ & $s_5$ & $s_6$ &$s_7$ & $s_8$ & $s_9$ \\
    \hline
    $a_1$ & 1 & 1 & 1 & $\text{}$ & $\text{}$ & $\text{}$ & $\text{}$ & $\text{}$ & $\text{}$ \\
    \hline
    $a_2$ & $\text{}$ & $\text{}$ & $\text{}$ & 1 & 1 & 1 & $\text{}$ & $\text{}$ & $\text{}$ \\
    \hline
     $a_3$ & $\text{}$ & $\text{}$ & $\text{}$ & $\text{}$ & $\text{}$ & $\text{}$ & 1 & 1 & 1 \\
        \hline
    $a_4$ & 1 & $\text{}$ & $\text{}$ & 1 & $\text{}$ & $\text{}$ & 1 & $\text{}$ & $\text{}$\\
    \hline
    $a_5$ & $\text{}$ & 1 & $\text{}$ & $\text{}$ & 1 & $\text{}$ & $\text{}$ & 1 & $\text{}$\\
    \hline
     $a_6$ & $\text{}$ & $\text{}$ & 1 & $\text{}$ & $\text{}$ & 1 & $\text{}$ & $\text{}$ & 1\\
    \hline
    
    $a_7$ & 1 &  $\text{}$ & $\text{}$ & $\text{}$ & 1 & $\text{}$ & $\text{}$ & $\text{}$ & 1 \\
    \hline
    $a_8$ & $\text{}$ & 1 & $\text{}$ & $\text{}$ & $\text{}$ & 1 & 1 & $\text{}$ & $\text{}$\\
    \hline
    $a_9$ & $\text{}$ & $\text{}$ & 1 & 1 & $\text{}$ & $\text{}$ & $\text{}$ & 1 & $\text{}$ \\
        \hline

	\end{tabular}
 }
	\vspace{1 em}
	\caption{Assignment of jobs to servers in a proximally minimally compact $(9,3,3,9)$ assignment scheme}
        \label{proximally_minimally_compact_design}
	
\end{table}

%The construction of BIBD's has been broadly studied and there are multiple constructions using \cite{doi:10.1111/j.1469-1809.1939.tb02219.x}. We could use vector sub-spaces or other techniques to construct some of them. The famous Bruck-Ryser-Chowla theorem in \cite{sprott1955} gives some necessary conditions on $n,k,c,r$ so that it might be possible to have a balanced symmetric design.

\begin{theorem}{\label{min_var}}
 If a proximally compact balanced $(n,k,r,c)$ job assignment exists, then it has the least variance amongst all balanced $(n,k,r,c)$ job assignments. 
\end{theorem}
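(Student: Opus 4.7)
The plan is to reduce variance minimization to a discrete convex optimization over shape vectors and then apply a mass-pushing rearrangement argument. First, I would isolate the $D$-dependent part of the variance: grouping ordered pairs of jobs by the number $m$ of servers to which they are both assigned, and using that $g(m,x)$ depends only on $c,r,m,x$ and not on $D$, the expression \eqref{sigma_expression_final} splits as
\[
\sigma_{D,x}(d) \;=\; \frac{1}{\binom{c}{x}} \sum_{m=0}^{r} \mathfrak{m}^D(m)\, g(m,x) \;+\; C(n,k,r,c,x),
\]
where $C(\cdot)$ collects $T_1(\cdot)$ and $T_2(\cdot)$ and is the same for every balanced $(n,k,r,c)$ assignment. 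Hence minimizing $\sigma_{D,x}(d)$ is equivalent to minimizing the linear functional $F(h_D) := \sum_{m=0}^{r} \mathfrak{m}^D(m)\, g(m,x)$ over $h_D \in \mathbb{N}^{r+1}$ subject to the two linear constraints in \eqref{defn_of_A}.

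Second, I would record the key structural fact that $g(\cdot, x)$ is discretely convex in $m$. By Lemma \ref{g_p_diff_lemma}, the forward difference
\[
\Delta(m) \;:=\; g(m+1,x)-g(m,x) \;=\; \binom{c-2}{x-1} - 2\binom{c-r-1}{x-1} + \binom{c-2r+m}{x-1}
\]
depends on $m$ only through the non-decreasing term $\binom{c-2r+m}{x-1}$, so $\Delta(\cdot)$ is non-decreasing. Consequently, whenever $m_2 - m_1 \geq 2$,
\[
\big(g(m_1,x)+g(m_2,x)\big) - \big(g(m_1+1,x)+g(m_2-1,x)\big) \;=\; \Delta(m_2-1) - \Delta(m_1) \;\geq\; 0.
\]

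Third, I would use this inequality to ``compact'' any non-proximally-compact shape vector. If $h_D$ is not proximally compact, there exist $m_1 < m_2$ with $m_2-m_1 \geq 2$ and $\mathfrak{m}^D(m_1), \mathfrak{m}^D(m_2) \geq 1$. The swap move that decrements entries at $m_1, m_2$ by $1$ and increments entries at $m_1+1, m_2-1$ by $1$ preserves both linear constraints in \eqref{defn_of_A} (the count and the first moment), stays in $\mathbb{N}^{r+1}$, and by the convexity inequality above does not increase $F$. Moreover, $\sum_m m^2 \mathfrak{m}(m)$ changes by $2(m_1-m_2)+2 \leq -2$, so this non-negative integer potential strictly decreases, guaranteeing termination. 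Iterating therefore yields a shape vector supported on at most two consecutive indices with no larger $F$-value than $h_D$.

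Finally, by Lemma \ref{lemma_on_l_proximally_minimal} the unique shape vector supported on at most two consecutive indices and consistent with \eqref{defn_of_A} has support in $\{\ell,\ell+1\}$ with $\ell = \lfloor r(k-1)/(n-1)\rfloor$, and the two entries are uniquely pinned down by the two linear constraints. Thus whenever a proximally compact assignment $D^\star$ exists, its shape vector coincides with this unique minimizer, giving $\sigma_{D^\star,x}(d) \leq \sigma_{D,x}(d)$ for every balanced $(n,k,r,c)$ assignment $D$. The main obstacle I anticipate is the careful bookkeeping under the swap move — checking that both linear constraints in \eqref{defn_of_A} are simultaneously preserved while the strict potential $\sum_m m^2 \mathfrak{m}(m)$ decreases — after which everything is driven by the monotonicity of $\Delta(\cdot)$ afforded by Lemma \ref{g_p_diff_lemma}.
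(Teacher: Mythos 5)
Your proposal is correct, and it shares the paper's two load-bearing ingredients: the reduction of variance comparison to comparing the linear functional $\sum_{m=0}^{r}\mathfrak{m}^D(m)\,g(m,x)$ over shape vectors constrained by \eqref{defn_of_A}, and the discrete convexity of $g(\cdot,x)$ supplied by Lemma \ref{g_p_diff_lemma} (non-decreasing forward differences, equivalently $g(i-1,x)-2g(i,x)+g(i+1,x)=\binom{c-2r+i}{x-1}-\binom{c-2r+i-1}{x-1}\ge 0$). Where you genuinely diverge is in how you prove that the consecutive-support vector minimizes this functional over the feasible set. The paper writes the difference $v=h_{D_1}-h_{D}$ of shape vectors as $v=\sum_i\alpha_i h_i$ in the basis of second-difference vectors for $\ker H$, and runs a somewhat delicate two-sided induction on the sign constraints $v(i)\ge 0$ (for $i\neq\ell+1,\ell+2$) to show every $\alpha_i\ge 0$, whence $\sum_m v(m+1)g(m,x)=\sum_i\alpha_i\bigl(g(i-1,x)-2g(i,x)+g(i+1,x)\bigr)\ge 0$. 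You instead run a local exchange argument: repeatedly move one unit of mass from $m_1$ to $m_1+1$ and one from $m_2$ to $m_2-1$ for a non-consecutive positive pair, which preserves both constraints, never increases the functional by convexity, and strictly decreases the integer potential $\sum_m m^2\mathfrak{m}(m)$, so the process terminates at the unique two-consecutive-index solution pinned down by Lemma \ref{lemma_on_l_proximally_minimal}. Your route avoids the kernel-basis bookkeeping and the nonnegativity induction entirely, at the cost of having to verify the swap's feasibility and termination (which you do correctly, including the degenerate case $m_2=m_1+2$ where the middle index absorbs two units); it is arguably the more robust and elementary of the two arguments, and like the paper's it only requires that the minimizer be compared against \emph{integer} vectors satisfying \eqref{defn_of_A}, a superset of the realizable shape vectors, which is all that is needed since the proximally compact shape vector is realizable by hypothesis.
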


A detailed proof is presented in Appendix D. %\ref{sec:min_var_proof} %with a proof sketch below. %Recall that in our notation $\mathbb{N}$ denotes the set of all non-negative integers including 0.

\begin{proof}[Proof Sketch]
Let $h_D$ be the shape vector corresponding to the proximally compact balanced $(n,k,r,c)$ job assignment and let $h_D+v$ denote a shape vector for any balances assignment for $v \in \ker H$ with matrix $H$ as defined in \eqref{defn_of_A}. Since, only the term $\sum\limits_{m=0}^{r}\mathfrak{m}^D(m)g(m,x)$ varies amongst the different balanced $(n,k,r,c)$ assignments, it is sufficient to prove $\sum\limits_{m=0}^{r}v(m+1)g(m,x) \ge 0$ to conclude that the proximally compact balanced $(n,k,r,c)$ assignment has the least variance.

% Thus $h_D(i) = 0$ for all $i \le \ell$ and $i > \ell + 2$ for $\ell$ as calculated in Lemma~\ref{lemma_on_l_proximally_minimal}.   Any balanced $(n,k,r,c)$ job assignment $D_1$ would have a shape vector $h_D + v$ where $v \in \ker H$ with matrix $H$ as defined in \eqref{defn_of_A}. Observe from the expression of variance in \eqref{var_sim_exp} that it is only the term $\sum\limits_{m=0}^{r}\mathfrak{m}^D(m)g(m,x)$ that varies amongst the different balanced $(n,k,r,c)$ assignments. Thus it is enough to show that for every permissible $v \in \ker H$ mentioned above,  $\sum\limits_{m=0}^{r}v(m+1)g(m,x) \ge 0$, in order to conclude that the proximally compact balanced $(n,k,r,c)$ assignment has the least variance.

We therefore first characterize $v \in \ker H$ that may appear from some balanced $(n,k,r,c)$ assignment. As both the shape vectors $h_D, h_D+v \in \mathbb{N}^{r+1}$, therefore $v(i) \ge 0$ for all $i \neq \ell+1, \ell + 2$. Further, as $v \in \ker H$, therefore $\sum\limits_{i=1}^{r+1} v(i) = 0$ and so if $v$ is a nonzero vector, then at least one of $v(\ell+1), v(\ell+2)$ must be a negative integer. As $v \in \ker H$, therefore $v$ can be expressed in terms of the basis vectors $\left\{h_i\right\}$ of $\text{ker}(H)$. Let $v = \sum\limits_{i=1}^{r-1}\alpha_ih_i$ and we show that all $\alpha_i \in \mathbb{N}$. We conclude the proof by showing that $\sum\limits_{m=0}^{r} v(m+1)g(m,x) \geq 0$. 

\end{proof}

%\eqref{ineq_gre}
This theorem guarantees that if a proximally compact balanced \((n,k,r,c)\) assignment exists, it has the least variance. From equation (28) in Appendix D, for \(x = 1\), the contribution from each permissible \(v \in \ker H\) is zero, meaning all shape vectors give the same (zero) variance, consistent with our earlier result. Similarly, for \(x > c - r\), every \({{c-2r+i-1}\choose {x-1}}\) vanishes, and thus, we conclude that every balanced \((n,k,r,c)\) assignment has the same (zero) variance for \(x > c - r\). We present an example of  $(n,k,r,c)$ without a proximally compact assignment below.

\begin{example}{\rm
Consider balanced $(10,5,4,8)$ assignments. Here $\frac{r(k-1)}{n-1} = \frac{16}{9}$ and so the shape vector must have at least two nonzero entries. Moreover, $\ell = 1$ and the corresponding shape vector for a possible proximally compact assignment should be $h_D = [0, 10, 35, 0, 0]^T$. We now show that an assignment with the shape vector $h_D$ does not exist. 

%$h_D = \begin{pmatrix} 0 \\ 10 \\ 35 \\ 0 \\ 0 \end{pmatrix}$.

As every job is assigned to $r=4$ servers, every job is involved in $r(k-1) = 16$ job pairs. Consider the job $a_1$ and let $x$ be the number of other jobs with whom $a_1$ shares only one server. So $9-x$ jobs share two servers each with $a_1$. Clearly $2(9-x) + x = 16$ which implies that $x = 2$. As $a_1$ was an arbitrary choice, we can conclude that every job shares one server with two other jobs and shares two servers with the other $7$ jobs. 

Let $\mathfrak{G} = \left\{a_1, a_2, a_3, \cdots a_i, a_1 \right\}$ be a cycle of jobs such that each job shares only one server with the jobs that are its predecessor and successor in $\mathfrak{G}$. Since there cannot exist a cycle of size 3 i.e. there cannot exist jobs $a,b$ and $c$ such that pairs $(a,b)$, $(b,c)$ and $(c,a)$ share a server each, one can now argue that the only permissible lengths of these cycles can be either 5 or 10. Then some more work proves that it is not possible to have an assignment with $h_D= [0, 10, 35, 0, 0]^T$  The actual balanced $(10,5,4,8)$ assignments that have shape vectors closest to $h_D$ have shape vectors $[1, 8, 36, 0, 0]^T$ and $[0, 12, 31, 2, 0]^T$ respectively.  

%$\begin{pmatrix} 1 \\ 8 \\ 36 \\ 0 \\ 0 \end{pmatrix}$ and $\begin{pmatrix} 0 \\ 12 \\ 31 \\ 2 \\ 0 \end{pmatrix}$ respectively.
    }
    %$h_D = \begin{pmatrix} 0 \\ 10 \\ 35 \\ 0 \\ 0 \end{pmatrix}$.
\end{example}

%We now define another class of compact assignments that attain the largest variance.

\subsection{Maximal variance of $d$}{\label{sec:maximal_variance}}

\begin{definition}{\label{strectched_minimal_compact}}
    A balanced assignment $D$ is stretched compact if the shape vector $h_D$ has non-zero elements only in the first and the last entries.
\end{definition}

If only the first and last entries of the shape vector $h_D$ are nonzero, then by \eqref{defn_of_A} it is clear that the last entry of the shape vector is $\frac{c}{r}{k \choose 2} = \frac{n(k-1)}{2}$ and therefore the first entry of the shape vector is $\frac{n(n-k)}{2}$. Thus, if $n$ is an odd number and $k$ is even, there is no possibility of existence of a stretched compact $(n,k,r,c)$ assignment. Even otherwise the shape vector having integer entries does not guarantee the existence of stretched compact $(n,k,r,c)$ assignment.

\begin{theorem}{\label{max_var}}
 If a stretched compact balanced $(n,k,r,c)$ job assignment exists, then it has the largest variance amongst all balanced $(n,k,r,c)$ job assignments. 
\end{theorem}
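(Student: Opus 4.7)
The plan is to mirror the structure of the proof of Theorem~\ref{min_var}. Let $D^*$ be a stretched compact balanced $(n,k,r,c)$ assignment and let $D$ be any other balanced assignment, with $h_D = h_{D^*}+v$ for some $v\in\ker H$. Since the variance depends on $D$ only through $\sum_{m=0}^{r}\mathfrak{m}^D(m)g(m,x)$ and scales monotonically with it, the goal reduces to showing
\begin{equation*}
\sum_{m=0}^{r} v(m+1)\, g(m,x) \;\le\; 0
\end{equation*}
for every permissible $v$.

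To characterize the permissible $v$: because $h_{D^*}$ is nonzero only at positions $1$ and $r+1$, the requirement $h_{D^*}+v\in\mathbb{N}^{r+1}$ forces $v(i)\ge 0$ for every middle index $2\le i\le r$, while $v(1)$ and $v(r+1)$ may be negative (the sign pattern is opposite to the proximally compact case). This motivates choosing the basis $\{h_j\}_{j=1}^{r-1}$ of $\ker H$ defined by
\begin{equation*}
h_j \;:=\; r\, e_{j+1} - (r-j)\, e_1 - j\, e_{r+1},
\end{equation*}
each of which places its positive weight on a single middle coordinate and compensates at the two extremes; a direct check gives $H h_j = 0$. Writing $v = \sum_{j=1}^{r-1}\alpha_j h_j$, the middle entries decouple and give $\alpha_j = v(j+1)/r \ge 0$.

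Substituting this expansion reduces the target inequality to proving, for every $1\le j\le r-1$, the discrete chord inequality
\begin{equation*}
g(j,x) \;\le\; \tfrac{r-j}{r}\, g(0,x) + \tfrac{j}{r}\, g(r,x).
\end{equation*}
This is precisely convexity of $g(\cdot,x)$ on $\{0,1,\dots,r\}$, and Lemma~\ref{g_p_diff_lemma} supplies it: the difference $g(m+1,x)-g(m,x)$ is nondecreasing in $m$ since the $m$-dependent term $\binom{c-2r+m}{x-1}$ is. Combined with $\alpha_j\ge 0$, each summand is nonpositive, proving $D^*$ attains the largest variance. The only real obstacle here is bookkeeping: choosing the basis so that the nonnegativity constraints on $v$ cleanly translate into $\alpha_j\ge 0$. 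Once that is in place, discrete convexity of $g$ does the actual work, and the argument becomes a mirror image of the minimal-variance proof.
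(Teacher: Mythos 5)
Your proof is correct, but it takes a genuinely different route from the paper's. The paper keeps the same second-difference basis $h_i = e_i - 2e_{i+1} + e_{i+2}$ used for Theorem~\ref{min_var} and then establishes, via an inductive chain of inequalities ($k\alpha_{k-1} \le (k-1)\alpha_k$, culminating in $\alpha_1 \le \frac{\alpha_2}{2} \le \cdots \le \frac{\alpha_{r-1}}{r-1} \le 0$), that every coefficient $\alpha_i$ is nonpositive; combined with the nonnegativity of the second differences $g(i-1,x)-2g(i,x)+g(i+1,x)$ this yields $\sum_m v(m+1)g(m,x)\le 0$. You instead change the basis to the ``tent'' vectors $h_j = r\,e_{j+1}-(r-j)e_1-j\,e_{r+1}$, each supported on one middle coordinate plus the two extremes, so that the constraints $v(i)\ge 0$ on the middle coordinates translate immediately into $\alpha_j = v(j+1)/r \ge 0$ with no induction at all; the price is that the per-basis-vector inequality you then need is the chord inequality $g(j,x)\le \frac{r-j}{r}g(0,x)+\frac{j}{r}g(r,x)$ rather than mere nonnegativity of a single second difference. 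Both inequalities follow from the same fact (Lemma~\ref{g_p_diff_lemma} makes $g(m+1,x)-g(m,x)$ nondecreasing in $m$, i.e.\ $g(\cdot,x)$ is discretely convex, and a discretely convex function lies below the chord joining its endpoint values), so the underlying engine is identical; your version buys a much shorter coefficient-sign argument at the cost of a slightly less elementary convexity consequence, and it loses the symmetry with the proof of Theorem~\ref{min_var} that the paper's choice of basis preserves. One small point worth making explicit if you write this up: the chord inequality for discretely convex functions should be stated or derived (e.g.\ $(r-j)\sum_{m<j}\Delta_m \le j\sum_{m\ge j}\Delta_m$ for nondecreasing $\Delta_m = g(m+1,x)-g(m,x)$), since the paper only records the sign of the second differences.
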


 The proof is deferred to Appendix E. %\ref{sec:max_var_proof}.

\begin{example}{\rm
Let us revisit the earlier example of $(10,5,4,8)$ assignments. It is clear that a shape vector corresponding to stretched compact assignment is permissible, namely $h_D= [25, 0, 0, 0, 20]^T$. Such an assignment is indeed possible. Divide the $10$ jobs into two sets of $5$ jobs. Assign each of these sets of jobs to $4$ servers. That results in a total of $20$ pairs sharing $4$ servers each and the rest $25$ pairs consisting of a job each from the two sets sharing no servers. This repetition assignment is a stretched compact assignment and therefore has the largest variance amongst all possible balanced $(10,5,4,8)$ assignments.  %$h_D = \begin{pmatrix} 25 \\ 0 \\ 0 \\ 0 \\ 20 \end{pmatrix}$. 
}
\end{example}

If \( k \) divides \( n \), the jobs can be subdivided into \( \frac{n}{k} \) groups of \( k \) jobs each. Each of these groups are repeated at \( r \) servers, covering \( c = \frac{n}{k}r \) servers. The number of job pairs that appear together \( r \) times is \( \frac{n(k-1)}{2} \). This setup corresponds to a stretched compact assignment with the largest variance among all balanced \( (n, k, r, c) \) assignments.

%With these definitions, we now provide our key results on the least and the largest variance in the number of distinct jobs $d$ at the server.

%\begin{theorem}{\label{min_var_gen}}
 %   The least variance on the number of distinct jobs received at the server for any $x \in [1,c-2r+1]$ is attained uniquely by the proximally minimal compact assignment (if exists) amongst all balanced $(n,k,r,c)$ assignments.

  %  However, for $x>c-2r+1$, the result on least variance still holds without guarantee of uniqueness.
%\end{theorem}

%To prove this theorem, we first prove the following claim on $g(p,x)$ as as defined in Equation \eqref{g_defn}. Observe that this lemma proved some convexity results on $g(p,x)$ specific to case where $p$ is discrete.

\remove{

\begin{namedclaim}{Claim}{\label{Some claims om g(p,x)}}
	Recall the definition of $g(p,x)$ as defined in Equation \eqref{g_defn}. Then, the following can be said (for every $c-2r+1\geq x>1$):
	\begin{itemize}
		\item $\frac{g(m+k_1,x)-g(m,x))}{k_1}>\frac{g(m,x))-g(m-k_2,x)}{k_2}$ $\forall$ $k_1,k_2 \in \mathbbm{N}$.
		\item $\frac{g(m+k_1+v,x)-g(m+v,x)}{k_1}>\frac{g(m+u,x))-g(m-k_2+u,x)}{k_2}$ $\forall$ $k_1,k_2 \in \mathbbm{N}$ with $(k_1,k_2)\neq (1,1)$ and $(u,v) \in \{0,1\}^2$ with $u \neq v$.
		% \item 
		% $\frac{g(m+k_1+1,x)-g(m+1,x)}{k_1}>\frac{g(m,x)-g(m-k_2,x)}{k_2}$ $\forall$ $k_1,k_2 \in \mathbbm{N}$ with $(k_1,k_2)\neq (1,1)$.
		% \item $\frac{g(m+k_1,x)-g(m,x)}{k_1}>\frac{g(m+k_2,x)-g(m,x)}{k_2}$ $\forall$ $k_1, k_2 \in \mathbbm{N}, k_1>k_2$.
	\end{itemize}

However, the inequalities may not strict for $x > c-2r+1$ but maybe {\color{red} attained with equality }
  
\end{namedclaim}

\begin{proof}

We first prove the strict inequalities assuming $x\leq c-2r+1$
 
	\begin{align}{\label{first_ineq}}
	  \frac{g(m+k_1,x)-g(m,x)}{k_1}
	  = \frac{\sum\limits_{i=0}^{i=k_1-1}(g(m+i+1,x)-g(m+i,x))}{k_1}
	 \overset{(a)}{\geq}& \frac{k_1 (g(m+1,x)-g(m,x)}{k_1}\nonumber\\        
     \geq & g(m+1,x)-g(m,x)
     \end{align}

Now consider,
	\begin{align}{\label{second_ineq}}
	\frac{g(m+k_1+1,x)-g(m+1,x)}{k_1}
	= \frac{\sum\limits_{i=1}^{i=k_1}(g(m+i+1,x)-g(m+i,x))}{k_1}
	\overset{(d)}{\geq}& \frac{k_1 (g(m+2,x)-g(m+1,x)}{k_1}\nonumber\\        
	> & g(m+2,x)-g(m+1,x)
	\end{align}
Note $(a)$ and $(d)$ follow since since $g(m+i+1,x)-g(m+i,x) > g(m+1,x)-g(m,x)$ $\forall i \in \mathbbm{N}$ as shown in in Claim \ref{g_p_diff_lemma} where 

$$g(m,x)-g(m-1,x)= \left[{{(c-2) \choose (x-1)}}-2{{(c-r-1) \choose (x-1)}}+ {{(c-2r+m-1)\choose (x-1)}}\right]$$

strictly increases with $m$ for every $c-2r+1>x>1$. Also observe that $(a)$ and $(d)$ would be tight inequalities when $k_1>1$.

Similarly,
  \begin{align}{\label{third_ineq}}
  \frac{g(m,x)-g(m-k_2,x)}{k_2}
  = \frac{\sum\limits_{i=0}^{i=k_2-1}(g(m-i,x)-g(m-i-1,x))}{k_2}
  \overset{(b)}{\leq}& \frac{k_2 (g(m,x)-g(m-1,x)}{k_2}\nonumber\\        
  \leq & g(m,x)-g(m-1,x)
  \end{align}

Similarly,
\begin{align}{\label{fourth_ineq}}
\frac{g(m+1,x)-g(m-k_2+1,x)}{k_2}
= \frac{\sum\limits_{i=0}^{i=k_2-1}(g(m-i+1,x)-g(m-i,x))}{k_2}
\overset{(c)}{\leq}& \frac{k_2 (g(m+1,x)-g(m,x)}{k_2}\nonumber\\        
\leq & g(m+1,x)-g(m,x)
\end{align}

Note $(b)$ and $(c)$ follow since $g(m-i,x)-g(m-i-1,x) \leq g(m,x)-g(m-1,x)$ $\forall i \in \mathbbm{N}$ as shown in Claim \ref{g_p_diff_lemma} where $g(m+1,x)-g(m,x)$ strictly increases with $m$ for every $c-2r+1>x>1$. Also observe that $(b)$ and $(c)$ would be tight inequalities when $k_2>1$.

Thus we can say from Equations \eqref{first_ineq},\eqref{second_ineq},\eqref{third_ineq},\eqref{fourth_ineq} and the fact that $g(m+1)-g(m)$ strictly increases with $m$, the following equations \eqref{result_1},\eqref{result_2},\eqref{result_3}.
\begin{equation}{\label{result_1}}
\frac{g(m+k_1,x)-g(m,x)}{k_1} > \frac{g(m,x)-g(m-k_2,x)}{k_2} \forall k_1,k_2 \in \mathbbm{N}
\end{equation} 

and 

\begin{equation}{\label{result_2}}
\frac{g(m+k_1,x)-g(m,x)}{k_1} > \frac{g(m+1,x)-g(m-k_2+1,x)}{k_2} \forall k_1,k_2 \in \mathbbm{N} \text{ with } (k_1,k_2) \neq (1,1).
\end{equation} 

\begin{equation}{\label{result_3}}
\frac{g(m+k_1+1,x)-g(m+1,x)}{k_1} > \frac{g(m+1,x)-g(m-k_2+1,x)}{k_2} \forall k_1,k_2 \in \mathbbm{N} \text{ with } (k_1,k_2) \neq (1,1).
\end{equation}

Thus the first and second inequalities in Claim \ref{Some claims om g(p,x)}are precisely the equations \eqref{result_1},\eqref{result_2} and \eqref{result_3}.

% Now consider the following for $k_1>k_2$

% \begin{align*}
% \frac{g(m+k_1,x)-g(m,x)}{k_1} \geq \frac{g(m+k_2,x)-g(m,x)}{k_2}
% \Leftrightarrow \frac{g(m+k_1,x)-g(m+k_2,x)}{k_1-k_2} \geq \frac{g(m+k_2,x)-g(m,x)}{k_2} 
% \end{align*} 
     
% Note that since $k_1>k_2$, we prove the inequality on R.H.S from Equation \eqref{result_1}. Thus, the fourth inequality in Claim \ref{Some claims om g(p,x)} holds true.

Observe that for $x> c-2r+1$, the function $g(m+1,x)-g(m,x)$ is not strictly increasing with $m$ as the last term ${{c-2r+m-1 \choose x-1}}$ maybe zero when $m$ equals $0$ or $1$ and thus the inequalities in the lemma maybe attained with equality.
\end{proof}

With this convexity result on $g(m,x)$, we now prove Theorem \ref{min_var_gen}.

\begin{proof}
	
%Recall $y$ from the definition in Claim \ref{var_equal}. $y$ denotes the number of servers a pair of jobs chosen uniformly at random is present together. Now we know that var$(y)$ = $\sigma_{D,2}(d)$ from Claim \ref{var_equal}. Now we know from theorem \ref{mean_equality} that  $\mathbbm{E}_{D,2}[d]= 2.k - \frac{n.\sum\limits_{t=1}^{r-1} t{r \choose (t+1)} {(c-r) \choose (2-t-1)}}{{c \choose 2}} =  2k - \frac{n{r \choose 2}}{{c \choose 2}}$ whereas $\mathbbm{E}[y]= \frac{\sum\limits_{p=0}^{r}p.\mathfrak{n}^D_{p,D}}{{n \choose 2}} = \frac{c.{{k \choose 2}}}{{{n \choose 2}}}$.
%
%The number of common jobs $c$ can be written as $2k-d$ when $x=2$ servers return.
%
%Thus var$(y)$ = $\sigma_{D,2}(c)$, $\mathbbm{E}_{D,2}(c)=\mathbb{E}[y] = \frac{n.{{k \choose 2}}}{{{n \choose 2}}}$ which is fixed for a given $n,k$. Thus $l = \floor*{\frac{n.{{k \choose 2}}}{{{n \choose 2}}}} $ as $c$ takes values only $l$ or $l+1$.
%
%Note that random variable $y$ and $c$ can take only integer values, since in this case $c$ takes only 2 consecutive integer values, variance of $c$ is the least possible for a given $\mathbbm{E}_{D,2}[c]$.
%
%However this least variance can only be achieved for $y$ if random variable $y$ also takes values $l$ or $l+1$ as it also has the same expectation and variance as that of $c$.    

Recall the definition of $\mathfrak{m}^D(m)$ from Equation \eqref{m_D_defn} which denotes the number of pairs of jobs that are assigned together to exactly $m$ servers with $D$ denoting any proximally minimally compact design.

Now we know that $\mathfrak{m}^{D}(m) = 0$ only for $m\neq l,l+1$ for any proximally minimally compact $(n,k,r,c)$ assignment scheme. This would ensure that $l= \floor{\frac{c.{{k \choose 2}}}{{{n \choose 2}}}}$ as shown in Lemma \ref{lemma_on_l_proximally_minimal}.

%Now consider any other distribution $D_1$. 
%%Using the proof techniques similar to that of last part of Theorem \ref{min_var} we can say that $D$ gives least variance on number of distinct jobs for every $x$.  

Let us consider another job assignment ${D}_1$ which is a balanced $(n,k,r,c)$ assignment scheme.
 
%Since we know that $\mathfrak{n}^D_{p,D_1}$ = 0

%Recall the definition of $g(p,x)$ as defined in Claim \ref{f_defn}. %Let us denote it as $g_x(p)$ as it is only a function of $p$.

First observe that $\sum\limits_{m=0}^{r} \mathfrak{m}^D(m) = \sum\limits_{m=0}^{r} \mathfrak{m}^{D_1}(m) = {{n \choose 2}}$ which follows from Equation \eqref{proporty_m_1} and $\sum\limits_{m=0}^{r} m \times \mathfrak{m}^D(m) = \sum\limits_{m=0}^{r} m \times \mathfrak{m}^{D_1}(m)= c {{k \choose 2}}$ follows from Equation \eqref{property_m_2}.

We thus have 

\begin{equation}{\label{conditions_no_pairs}}
    \sum\limits_{m=0}^{r} \mathfrak{m}^D(m) = \sum\limits_{m=0}^{r} \mathfrak{m}^{D_1}(m) = {{n \choose 2}} \text{ and } \sum\limits_{m=0}^{r} m \times \mathfrak{m}^D(m) =\sum\limits_{m=0}^{r} m \times \mathfrak{m}^{D_1}(m)= c {{k \choose 2}}
\end{equation}

We now consider four different cases and in each of these cases, we show that the variance is the least for the assignment $D$. Also, we first consider the case where $x\leq c-2r+1$ to prove that least variance is uniquely attained by proximallyminimally compact assignments.

\begin{itemize}
	\item Case 1: $\mathfrak{m}^{D_1}(l) \geq \mathfrak{m}^{D}(l)$ but $\mathfrak{m}^{D_1}(l+1) \geq \mathfrak{m}^{D_1}(l+1)$

Note that $\mathfrak{m}^{D_1}(m) \geq \mathfrak{m}^D(m) \forall m \in [r]$ which follows since $\mathfrak{m}^D(m)=0 \forall m \neq l,l+1$. However, equation \eqref{conditions_no_pairs} would imply that $\mathfrak{m}^{D}(m)= \mathfrak{m}^{D_1}(m)$ $\forall$ $m \in [r]$ which would imply distribution $D_1$ has same variance of distinct jobs as that of distribution $D$ which follows from Equation \ref{var_sim_exp} in Theorem \ref{result_variance} as the variance $\sigma_{D,x}$ is just a function of $\mathfrak{m}^D(.)$ other than design parameters $n,k,r$ and $c$.

\item Case 2: $\mathfrak{m}^{D_1}(l) < \mathfrak{m}^{D}(l)$ but $\mathfrak{m}^{D_1}(l+1) \geq \mathfrak{m}^{D_1}(l+1)$

Let us denote $x_m=\mathfrak{m}^{D_1}(m)-\mathfrak{m}^{D}(m)$ $\forall m \in [0,r]$. Clearly, $x_m < 0$ only for $m=l$ as for every $m \neq l,l+1$ we have $\mathfrak{m}^D(m)=0$
	
%	Now we know that $\sum\limits_{p=0}^{r} p.\mathfrak{n}^D_{p,D}= \sum\limits_{p=0}^{r} p.\mathfrak{n}^D_{p,D_1}=n{k \choose 2}$ implying $\sum\limits_{p=0}^{r} p.x_p=0$. Similarly, we can also argue $\sum\limits_{p=0}^{r} x_p=0$.

Observe that equation \eqref{conditions_no_pairs} ensures that $\sum\limits_{m=0}^{r} x_m = \sum\limits_{m=0}^{r} m \times x_m = 0$.

 Let us denote $\sum\limits_{m=0}^{l-1} x_m =x$ and $\sum\limits_{m=l+1}^{r} x_m =y$. Since $\sum\limits_{m=0}^{r} x_m=0$, we can say that $x_l= -(x+y)$.
	
	Thus,
 
 \begin{equation}{\label{temp1}}
    \sum\limits_{m=0}^{r} m.x_m =0 \Leftrightarrow (x+y)l = \sum\limits_{m=l+1}^{r} m\times x_m+\sum\limits_{m<l}m\times x_m \Leftrightarrow \sum\limits_{m=l+1}^{r} x_m\times (m-l) = \sum\limits_{q=0}^{l-1} x_q\times (l-q)     
 \end{equation}

	However, we know from Claim \ref{Some claims om g(p,x)} that  

 \begin{equation}{\label{temp2}}
     \frac{g(m,x)-g(l,x)}{m-l} > \frac{g(l,x)-g(q,x)}{l-q} \forall m>l>q
 \end{equation}
	
These equations \eqref{temp1} and \eqref{temp2} and the fact that $x_m \geq 0$ for $m\neq l, m \in [0,r]$ above would imply:
	\begin{align}{\label{temp_eqn_variance_case1}}
    \sum\limits_{m=l+1}^{r} x_m (g(m,x)-g(l,x)) > \sum\limits_{q=0}^{l-1} x_q (g(l,x)-g(q,x))
	\overset{(c)} {\Leftrightarrow} & \sum\limits_{\substack{m \neq l\\ m \in [0,r]}} x_m.g(m,x) - \sum\limits_{\substack{m \neq l\\ m \in [0,r]}} x_m g(l,x)>0\nonumber \\
	\Leftrightarrow & \sum\limits_{\substack{m \neq l\\ m \in [0,r]}} x_m.g(m,x) + x_l.g(l,x) >0\nonumber\\
	{\Leftrightarrow} & \sum\limits_{m=0}^{r} x_m g(m,x)>0\nonumber \\
	\overset{(f)}{\Leftrightarrow} & \sum\limits_{m=0}^{r} \mathfrak{m}^{D_1}(m) g(m,x) > \sum\limits_{m=0}^{r} \mathfrak{m}^{D}(m) g(m,x)
	\end{align}
	
	Note $(c)$ follows since $\sum\limits_{m =0}^{r} x_m=0$ and $(f)$ follows from the fact that $x_m= \mathfrak{m}^{D_1}(m) - \mathfrak{m}^{D}(m)$
%%	 $ which would imply $\sum\limits_{p=0}^{r} x_p.g(p,x)$.

%Since the numerator of the first term in $\sigma_{D,x}(d)$ is 
Now let us consider the numerator of the first term in $\sigma_{{D},x}(d)$ as in theorem \ref{var_sim_exp} which can be written as $2.\sum\limits_{m=0}^{r}\mathfrak{m}^{D}(m)g(m,x) + n.\left(\sum\limits_{t=1}^{r-1} t^2{r \choose (t+1)} {(c-r) \choose (x-t-1)}\right)$.

%Now we consider this expression for $D$ = $D$ and $D_1$.
Thus the inequality proven in equation \eqref{temp_eqn_variance_case1} would imply that distribution $D_1$ has higher variance of number of distinct jobs received than that of distribution $D$. 

\item Case 3: $\mathfrak{m}^{D_1}(l+1) < \mathfrak{m}^{D}(l+1)$ but $\mathfrak{m}^{D_1}(l) \geq \mathfrak{m}^{D}(l)$

Note this can be proven in a very similar way as that of Case 2. The entire proof could be done for $l+1$ instead of $l$

\item Case 4: $\mathfrak{m}^{D_1}(l+1) < \mathfrak{m}^{D}(l+1)$ and  $\mathfrak{m}^{D_1}(l) < \mathfrak{m}^{D}(l)$

Let us denote $x_m=\mathfrak{m}^{D_1}(m)-\mathfrak{m}^{D}(m)$ $\forall m \in [0,r]$. Clearly, $x_m < 0$ only for $m=l,l+1$.
	
Now equation \eqref{conditions_no_pairs} ensures that $\sum\limits_{m=0}^{r} m\times x_m=0$ and $\sum\limits_{m=0}^{r} x_m=0$.

Let us denote $\sum\limits_{m=0}^{l-1}x_m =\alpha_1+\alpha_2$ and $\sum\limits_{m=l+2}^{r} x_m=\beta_1+\beta_2$. Since $\sum\limits_{m=0}^{r} x_m=0$, we can say that

\begin{equation}{\label{temp_10}}
    x_l= -(\alpha_1+\beta_1) \text{ and } x_{l+1}=-(\alpha_2+\beta_2) \text { for some }\alpha_1,\beta_1,\alpha_2,\beta_2 \in \mathbbm{N}
\end{equation}

Now define $[y_m]_{m=0}^{r}$ and $[z_m]_{m=0}^{r}$as follows below.

\begin{itemize}
    \item $y_m = x_m \times \frac{\alpha_1}{\alpha_1+\alpha_2}$ and $z_m = x_m \times \frac{\alpha_2}{\alpha_1+\alpha_2}$ if $0\leq m<l$ 
    \item $y_m=x_m\times \frac{\beta_1}{\beta_1+\beta_2}$ and $z_m=x_m\times \frac{\beta_2}{\beta_1+\beta_2}$
   for $l+1<m\leq r$
   \item $y_m= \alpha_1$ and $z_m = \alpha_2$ if $m=l$
   \item $y_m= \beta_1$ and $z_m = \beta_2$ if $m=l+1$
\end{itemize}

%\newline

   Thus,
   
   $\sum\limits_{m=l+2}^{r} y_m=\beta_1$ ; $\sum\limits_{m=l+2}^{r} z_m=\beta_2$; $\sum\limits_{m=0}^{l-1}y_m=\alpha_1$ ; $\sum\limits_{m=0}^{l-1}z_m=\alpha_2$;	
   	
   %Choose $y_l=-x_1$ and $z_l=-x_2$ and 	
   	
	Thus $\sum\limits_{m=0}^{r} mx_m =0$ implies the following set of conditions.
	
	\begin{align}{\label{temp3}}
      & (\alpha_1+\beta_1)l+(\alpha_2+\beta_2)(l+1) = \sum\limits_{m=l+2}^{r}m\times x_m+\sum\limits_{m=0}^{l-1}m\times x_m \nonumber\\
    & \overset{(d)}{\Leftrightarrow} \left(\sum\limits_{m=l+2}^{r} y_m + \sum\limits_{m=0}^{l-1} y_m\right)l+\left(\sum\limits_{m=l+2}^{r} z_m + \sum\limits_{m=0}^{l-1} z_m\right)(l+1) = \sum\limits_{m=l+2}^{r}m\times (y_m+z_m)+\sum\limits_{m=0}^{l-1}m\times (y_m+z_m)\nonumber\\
     & {\Leftrightarrow} \sum\limits_{m=l+2}^{r} y_m(m-l)+ \sum\limits_{m=l+2}^{r} z_m(m-l-1) = \sum\limits_{q=0}^{l-1} y_q(l-q) + \sum\limits_{q=0}^{l-1} z_q(l+1-q)
	\end{align}
	
Note $(d)$ follows since 

i) $y_m+z_m = x_m \forall m \in [0,l-1] \cup [l+2,r]$, ii) $\sum\limits_{m=l+2}^{r} y_m + \sum\limits_{q=0}^{l-1} y_q = (\alpha_1+\beta_1)$ and \\
iii) $\sum\limits_{m=l+2}^{r} z_m + \sum\limits_{q=0}^{l-1} z_q = (\alpha_2+\beta_2)$ \\
	
	Now we can say from Claim \ref{Some claims om g(p,x)} that 
	
\begin{equation}{\label{temp4}}
    \frac{g(m,x)-g(t,x)}{m-t}> \frac{g(u,x)-g(q,x)}{u-q}
	\text{ }\forall m>l+1, q< l \text{ and } t,u \in \{l,l+1\}.
 \end{equation}

Thus we can say the following from equations \eqref{temp3} and \eqref{temp4} and the fact that $x_m,y_m,z_m>0$ for $m \in [0,r]; m \neq \{l,l+1\}$.
	
	\begin{align}{\label{temp_eqn_variance_case2}}
	 & \sum\limits_{m=l+2}^{r} y_m(g(m,x)-g(l,x))+  \sum\limits_{m=l+2}^{r} z_m(g(m,x)-g(l+1,x))\\ & > \sum\limits_{q=0}^{l-1} y_q(g(l,x)-g(q,x)) + \sum\limits_{q=0}^{l-1} z_q(g(l+1,x)-g(q,x))\\
	& \overset{(e)}{\Leftrightarrow} \sum\limits_{\substack{m \neq l,l+1\\ m \in [0,r]}} (y_m+z_m) g(m,x)+ (\alpha_1+\beta_1) g(l,x) + (\alpha_2+\beta_2) g(l+1,x)>0\\
	& \overset{(g)}{\Leftrightarrow} \sum\limits_{\substack{m \neq l,l+1\\ m \in [0,r]}} x_m.g(m,x)+ x_l.g(l,x) + x_{l+1}.g(l+1,x)>0\\
	& \Leftrightarrow \sum\limits_{m=0}^{r} x_m.g(m,x)>0\\
    & \overset{(h)}{\Leftrightarrow} \sum\limits_{m=0}^{r} \mathfrak{m}^{D_1}(m) g(m,x) > \sum\limits_{m=0}^{r} \mathfrak{m}^{D}(m) g(m,x)
	\end{align}
	
Note $(e)$ follows since $\sum\limits_{m=l+2}^{r} y_m + \sum\limits_{q=0}^{l-1} y_q = (\alpha_1+\beta_1)$ and $\sum\limits_{m=l+2}^{r} z_m + \sum\limits_{q=0}^{l-1} z_q = (\alpha_2+\beta_2)$ from Equation \eqref{temp_10} and the fact $y_m+z_m=x_m \forall m \in [0,l-1] \cup [l+2,r]$. \\ $(g)$ follows from the fact that $x_l=-(\alpha_1+\beta_1)$ and $x_{l+1}=-(\alpha_2+\beta_2)$ $(h)$ follows from the fact that that $x_m=\mathfrak{m}^{D_1}(m)-\mathfrak{m}^{D}(m)$ 	
	
Now let us consider the numerator of the first term in $\sigma_{{D},x}(d)$ as in theorem \ref{var_sim_exp} which can be written as $2.\sum\limits_{m=0}^{r}\mathfrak{m}^{D}(m)g(m,x) + n.\left(\sum\limits_{t=1}^{r-1} t^2{r \choose (t+1)} {(c-r) \choose (x-t-1)}\right)$.

%Now we consider this expression for $D$ = $D$ and $D_1$.
Thus the inequality proven in the previous equation \eqref{temp_eqn_variance_case2} would imply that distribution $D_1$ has higher variance of number of distinct jobs received than that of distribution $D$.

Note that other than Case 1 in this proof (where we show $D_1$ is also a proximally minimally compact assignment), the distribution $D_1$ is not a proximally minimally compact assignment and we show strict optimality of proximally minimally compact assignment over any other balanced assignment for $x \leq c-2r+1$. However, if $x > c-2r+1$, the inequalities still hold but are not strict as it follows from Claim \ref{Some claims om g(p,x)}.

Combining these statements, we prove the strict optimality (attains least variance )of pairwise balanced designs for $x\leq c-2r+1$ and weak optimality for $x>c-2r+1$

\end{itemize}

\end{proof}

\begin{theorem}{\label{max_var_gen}}
    The largest variance on the number of distinct jobs received at the server for any $x \in [1,c-2r+1]$ is attained uniquely by the stretched minimally compact design (if exists) amongst all balanced $(n,k,r,c)$ assignments.

    However, for $x>c-2r+1$, the result on the largest variance still holds without guarantee of uniqueness.
\end{theorem}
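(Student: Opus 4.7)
The plan is to mirror the proof of Theorem \ref{min_var}, but to exploit the convexity of $g(m,x)$ in $m$ in the opposite direction: instead of pushing weight toward a single interior value, the extremal design concentrates weight at $m=0$ and $m=r$, and any deviation must move weight from the extremes into the interior. Let $D$ be a stretched compact assignment, so $h_D(1) = n(n-k)/2$, $h_D(r+1) = n(k-1)/2$, and $h_D(i) = 0$ for $2 \le i \le r$; let $D_1$ be any other balanced $(n,k,r,c)$ assignment, and write $h_{D_1} = h_D + v$ with $v \in \ker H$. Since only the term $\sum_m \mathfrak{m}^{D_1}(m)\, g(m,x)$ depends on the assignment within the variance expression, it is enough to show
\begin{equation*}
\sum_{m=0}^{r} v(m+1)\, g(m,x) \;\le\; 0
\end{equation*}
for every admissible $v$, as this directly implies $\sigma_{D_1,x}(d) \le \sigma_{D,x}(d)$.

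First, I would characterize admissible $v$. Since $h_D(i) = 0$ for every interior index $i \in \{2,\ldots,r\}$ and $h_{D_1}$ has non-negative entries, admissibility forces $v(i) \ge 0$ on each interior index (only the two extreme entries of $v$ may be negative). Next, I would choose a basis of $\ker H$ matched to this sign pattern, namely
\begin{equation*}
g_i \;=\; r\, e_i \;-\; (r+1-i)\, e_1 \;-\; (i-1)\, e_{r+1}, \qquad i = 2, \ldots, r,
\end{equation*}
where $e_j$ is the standard basis vector of $\mathbb{R}^{r+1}$. A direct check gives $H g_i = 0$; since each $g_i$ is the unique basis element supported at coordinate $i$, the $r-1$ vectors are linearly independent, and with $\dim \ker H = r - 1$ they form a basis. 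Expanding $v = \sum_{i=2}^{r} \alpha_i g_i$, the interior coordinates read $v(i) = r\alpha_i$, so admissibility gives $\alpha_i \ge 0$.

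The key technical step is to show that each basis vector contributes non-positively:
\begin{equation*}
\sum_{j=1}^{r+1} g_i(j)\, g(j-1, x) \;=\; r\, g(i-1, x) - (r+1-i)\, g(0, x) - (i-1)\, g(r, x) \;\le\; 0.
\end{equation*}
This is precisely the chord inequality for a convex function evaluated at the interior point $i-1 \in (0, r)$, and Lemma \ref{g_p_diff_lemma} supplies exactly the required convexity, since $g(m+1,x) - g(m,x)$ is non-decreasing in $m$. Combining $\alpha_i \ge 0$ with the sign of each basis contribution then yields $\sum_m v(m+1)\, g(m,x) \le 0$, completing the argument.

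The main obstacle I expect concerns the edge case $x > c - 2r + 1$. Here the binomial term ${c - 2r + m \choose x-1}$ appearing in equation \eqref{g_p_diff} vanishes for small $m$, so the increments $g(m+1,x) - g(m,x)$ need not be strictly increasing, and the chord inequality may hold with equality. In this regime the stretched compact design still attains the maximum variance, but it need not be the unique maximizer. This matches the theorem, which asserts the existence of a maximizer and is silent on uniqueness.
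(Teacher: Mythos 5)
Your argument for the maximality claim is correct, and it takes a genuinely different route from the paper's. Both proofs reduce the problem to showing $\sum_{m=0}^{r} v(m+1)\,g(m,x) \le 0$ for every admissible $v \in \ker H$, but the paper expands $v$ in the second-difference basis $h_i = e_i - 2e_{i+1} + e_{i+2}$ and must then run a fairly delicate chain of inequalities (culminating in $\alpha_1 \le \tfrac{\alpha_2}{2} \le \cdots \le \tfrac{\alpha_{r-1}}{r-1} \le 0$) to pin down the signs of the coefficients, before combining with the non-negativity of the second differences of $g(\cdot,x)$ from Lemma~\ref{g_p_diff_lemma}. You instead pick a basis adapted to the support of $h_D$: each $g_i = r e_i - (r+1-i)e_1 - (i-1)e_{r+1}$ is the unique basis vector touching interior coordinate $i$, so the coefficient signs are immediate ($\alpha_i = v(i)/r \ge 0$), and the per-vector contribution is exactly the chord inequality $r\,g(i-1,x) \le (r+1-i)\,g(0,x) + (i-1)\,g(r,x)$ for the discretely convex sequence $g(\cdot,x)$ — convexity again supplied by Lemma~\ref{g_p_diff_lemma}. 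Your version is cleaner and makes the "mass pushed to the endpoints maximizes a convex functional" intuition explicit, whereas the paper's basis is reused from the minimum-variance proof where it is the natural choice.

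One correction to your closing remark: the theorem is not silent on uniqueness — it asserts that the maximum is attained \emph{uniquely} for $x \in [1, c-2r+1]$. Your argument delivers this with one additional observation: for $2 \le x \le c-2r+1$ the second difference $g(m+1,x) - 2g(m,x) + g(m-1,x) = {c-2r+m \choose x-1} - {c-2r+m-1 \choose x-1}$ is strictly positive for all $m \ge 1$, so every chord inequality is strict at interior points; since any nonzero admissible $v$ has some $\alpha_i > 0$, the variance strictly decreases, giving uniqueness. (At $x = 1$ every balanced assignment has zero variance, so the uniqueness claim is degenerate there no matter which proof one uses.) You should add this sentence rather than asserting the statement only claims existence.
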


We also prove this theorem using the convexity property of $g(m,x)$ as shown in Claim \ref{Some claims om g(p,x)}.

\begin{proof}
    Recall the definition of $\mathfrak{m}^D(m)$ from Equation \ref{m_D_defn} which denotes the number of pairs of jobs that are assigned together to exactly $m$ servers with $D$ denoting a stretched minimally compact design.

Now we know that $\mathfrak{m}^{D}(m) = 0$ only for $m\neq 0,r$.

%Now consider any other distribution $D_1$. 
%%Using the proof techniques similar to that of last part of Theorem \ref{min_var} we can say that $D$ gives least variance on number of distinct jobs for every $x$.  

Let us consider another job assignment ${D}_1$ which is a balanced $(n,k,r,c)$ assignment scheme.
 
%Since we know that $\mathfrak{n}^D_{p,D_1}$ = 0

%Recall the definition of $g(p,x)$ as defined in Claim \ref{f_defn}. %Let us denote it as $g_x(p)$ as it is only a function of $p$.

First observe that $\sum\limits_{m=0}^{r} \mathfrak{m}^D(m) = \sum\limits_{m=0}^{r} \mathfrak{m}^{D_1}(m) = {{n \choose 2}}$ which follows from Equation \eqref{proporty_m_1} and $\sum\limits_{m=0}^{r} m \times \mathfrak{m}^D(m) = \sum\limits_{m=0}^{r} m \times \mathfrak{m}^{D_1}(m)= c {{k \choose 2}}$ follows from Equation \eqref{property_m_2}.

We thus have 

\begin{equation}{\label{conditions_no_pairs_repeat}}
    \sum\limits_{m=0}^{r} \mathfrak{m}^D(m) = \sum\limits_{m=0}^{r} \mathfrak{m}^{D_1}(m) = {{n \choose 2}} \text{ and } \sum\limits_{m=0}^{r} m \times \mathfrak{m}^D(m) =\sum\limits_{m=0}^{r} m \times \mathfrak{m}^{D_1}(m)= c {{k \choose 2}}
\end{equation}

We now consider three different cases and in each of these three cases, we show that the variance is the largest for the assignment $D$. Also, we first consider the case where $x\leq c-2r+1$ to prove that the largest variance is uniquely attained by pairwise heavy imbalanced designs.

\begin{itemize}
	\item Case 1: $\mathfrak{m}^{D_1}(r) \geq \mathfrak{m}^{D}(r)$

 Note that this implies that $\sum_{m=0}^{r} m \times \mathfrak{m}^{D_1}(m) \geq \sum_{m=0}^{r} m \times \mathfrak{m}^{D}(m)$
as $\mathfrak{m}^D(m)=0\forall m \in [0,r-1]$. Thus, equation \eqref{conditions_no_pairs_repeat} implies that $\mathfrak{m}^{D_1}(m)=\mathfrak{m}^{D}(m) \forall m \in [r]$ which would imply distribution $D_1$ has same variance of distinct jobs as that of distribution $D$ which follows from Equation \eqref{var_sim_exp} in Theorem \ref{result_variance} as the variance $\sigma_{D,x}$ is just a function of $\mathfrak{m}^D(.)$ other than design parameters $n,k,r$ and $c$.

%Note that $\mathfrak{m}^{D_1}(m) \geq \mathfrak{m}^D(m) \forall m \in [r]$ which follows since $\mathfrak{m}^D(m)=0 \forall m \neq 0,r$. However, equation \eqref{conditions_no_pairs} would imply that $\mathfrak{m}^{D}(m)= \mathfrak{m}^{D_1}(m)$ $\forall$ $m \in [r]$ which would imply distribution $D_1$ has same variance of distinct jobs as that of distribution $D$ which follows from Equation \eqref{var_sim_exp} in Theorem \ref{result_variance} as the variance $\sigma_{D,x}$ is just a function of $\mathfrak{m}^D(.)$ other than design parameters $n,k,r$ and $c$.

\item Case 2: $\mathfrak{m}^{D_1}(r) < \mathfrak{m}^{D}(r)$ but $\mathfrak{m}^{D_1}(0) \geq \mathfrak{m}^{D}(0)$
	
Let us denote $x_m=\mathfrak{m}^{D_1}(m)-\mathfrak{m}^{D}(m)$ $\forall m \in [0,r]$. Clearly, $x_m < 0$ only for $m=r$ as for every $m \neq r$ we have $\mathfrak{m}^D(m)=0$.
	
%	Now we know that $\sum\limits_{p=0}^{r} p.\mathfrak{n}^D_{p,D}= \sum\limits_{p=0}^{r} p.\mathfrak{n}^D_{p,D_1}=n{k \choose 2}$ implying $\sum\limits_{p=0}^{r} p.x_p=0$. Similarly, we can also argue $\sum\limits_{p=0}^{r} x_p=0$.

Observe that equation \eqref{conditions_no_pairs} ensures that $\sum\limits_{m=0}^{r} x_m = \sum\limits_{m=0}^{r} m \times x_m = 0$.

 Let us denote $\sum\limits_{m=0}^{r-1} x_m =x$. Since $\sum\limits_{m=0}^{r} x_m=0$, we can say that $x_r= -x$.
	
	Thus,
 
 \begin{align}{\label{temp1_repeat}}
        \sum\limits_{m=0}^{r} x_m \times g(m,x) & = \sum\limits_{m=0}^{r-1} x_m\times g(m,x) - x \times g(r,x) \nonumber \\
        & = \sum\limits_{m=0}^{r-1} x_m \times(g(m,x)-g(r,x)) \overset{(a)}{\leq} 0.
 \end{align}

Note that $(a)$ follows from the fact that $x_m>0$ for all $m \in [0,r-1]$ and the fact that $g(m,x)$ is monotonic in $m$.

\item Case 3: $\mathfrak{m}^{D_1}(0) < \mathfrak{m}^{D}(0)$ and  $\mathfrak{m}^{D_1}(r) < \mathfrak{m}^{D}(r)$

Let us denote $x_m=\mathfrak{m}^{D_1}(m)-\mathfrak{m}^{D}(m)$ $\forall m \in [0,r]$. Clearly, $x_m < 0$ only for $m=0,r$.
	
Now equation \eqref{conditions_no_pairs_repeat} ensures that $\sum\limits_{m=0}^{r} m\times x_m=0$ and $\sum\limits_{m=0}^{r} x_m=0$ Observe that $x_0+x_r = - \sum\limits_{m=1}^{r-1}  x_m$

%Let us denote $\sum\limits_{m=0}^{l-1} x_m =\alpha_1 + \alpha_2$ and $\sum\limits_{m=l+2}^{r} x_m=\beta_1+\beta_2$. Since $\sum\limits_{m=0}^{r} x_m=0$, we can say that 

\remove{
\begin{equation}{\label{temp_10}}
    x_l= -(\alpha_1+\beta_1) \text{ and } x_{l+1}=-(\alpha_2+\beta_2) \text { for some }\alpha_1,\beta_1,\alpha_2,\beta_2 \in \mathbbm{N}
\end{equation}
}

Now define $[y_m]_{m=0}^{r}$ and $[z_m]_{m=0}^{r}$as follows below.

\begin{itemize}
    \item $y_m = x_m \times \frac{x_0}{x_0+x_r}$ and $z_m = x_m \times \frac{x_r}{x_0+x_r}$ if $0< m<r$ 
\end{itemize}
			
Also note that $y_m+z_m = x_m\forall m \in [1,r-1]$ and also  $\sum_{l=1}^{r-1} y_l = -x_0$ and $\sum_{l=1}^{r-1} z_l = -x_r$. Thus, $\sum\limits_{m=0}^{r} mx_m =0$ implies

\begin{equation}{\label{temp3_repeat}}
    \sum_{m=1}^{l-1} (r-m) z_m = \sum_{q=1}^{l-1} q \times y_q
\end{equation}

	Now we can say from Claim \ref{Some claims om g(p,x)} that 
	
\begin{equation}{\label{temp4_repeat}}
    \frac{g(r,x)-g(t,x)}{r-t}> \frac{g(t,x)-g(0,x)}{t-0}
	\text{ }\forall t \in [0,r].
 \end{equation}

Thus we can say the following from equations \eqref{temp3_repeat} and \eqref{temp4_repeat} and the fact that $x_m,y_m,z_m>0$ for $m \in [0,r];$
	
\begin{align}{\label{temp_eqn_variance_case2_repeat}}
	& \sum_{m=1}^{r-1} z_m (g(r,x)-g(m,x)) > \sum_{q=1}^{r-1} y_q (g(q,x)-g(0,x)) \nonumber\\
  \Leftrightarrow & (\sum_{m=1}^{r-1} z_m) g(r,x) - \sum_{m=1}^{r-1}( y_m+z_m) g(m,x) + (\sum_{m=0}^{r-1} y_m) g(0,x) > 0\nonumber\\
 \overset{(b)}{\Leftrightarrow} & -x_r g(r,x) - \sum_{m=1}^{r-1} x_m g(m,x) - x_0 g(0,x) > 0\nonumber\\
 {\Leftrightarrow} & \sum_{m=0}^{r} x_m g(m,r) < 0\nonumber\\
 {\Leftrightarrow} & \sum\limits_{m=0}^{r} \mathfrak{m}^{D_1}(m) g(m,x) < \sum\limits_{m=0}^{r} \mathfrak{m}^{D}(m) g(m,x)
\end{align}
	
Note $(b)$ follows from $y_m+z_m = x_m\forall m \in [1,r-1]$ and also  $\sum_{l=1}^{r-1} y_l = -x_0$ and $\sum_{l=1}^{r-1} z_l = -x_r$.

Now let us consider the numerator of the first term in $\sigma_{{D},x}(d)$ as in theorem \ref{var_sim_exp} which can be written as $2.\sum\limits_{m=0}^{r}\mathfrak{m}^{D}(m)g(m,x) + n.\left(\sum\limits_{t=1}^{r-1} t^2{r \choose (t+1)} {(c-r) \choose (x-t-1)}\right)$.

%Now we consider this expression for $D$ = $D$ and $D_1$.
Thus the inequality proven in the previous equation \eqref{temp_eqn_variance_case2} would imply that assignment scheme $D_1$ has a smaller variance of the number of distinct jobs received at the master than that of assignment scheme $D$, thus proving our desired result.

Note that other than Case 1 in this proof (where we show $D_1$ is also a stretched minimally compact assignment), the distribution $D_1$ is not a stretched minimally compact design and we show strict optimality of stretched minimally compact design over any other balanced assignment for $x \leq c-2r+1$. However, if $x> c-2r+1$, the inequalities still hold but are not strict as it follows from Claim \ref{Some claims om g(p,x)}.

Combining these statements, we prove the strict optimality (attains largest variance ) of stretched minimally compact assignments for $x\leq c-2r+1$ and weak optimality for $x>c-2r+1$.

\end{itemize}

\end{proof}

}

\section{Number of non-stragglers ($x$) is random} \label{sec_randomness_return_servers}

We now look at a scenario where each of the $c$ servers is independently and equally likely to communicate with the master with probability $p$. Note that under this setup, the distribution of the number of servers $x$ that could communicate is given by the binomial distribution $B(c,1-p)$. Observe that conditioned on $x$, every subset of $x$ servers is equally likely to be non-straggling. 

\begin{theorem}{\label{all_mean_sampled}}
Consider any balanced $(n,k,r,c)$ assignment $D$, where each server is independently and equally likely to straggle with probability $p$. The expectation of the number of distinct completed jobs $d$ received is $ \mathbbm{E}_{D}[d] = n- n(p)^r$.%the same for every assignment $D$ %amongst all balanced $(n,k,r,c)$ assignments and is given by 
 
 % \begin{equation}
 %  \mathbbm{E}_{D}[d]= n- n(p)^r
 % \end{equation}

 % and the variance is given by 

 % \begin{equation}
 %   \sigma_D(d) = \sum_{x=0}^{c} \sigma_{D,x}(d) {{c \choose x}} p^x (1-p)^{c-x} 
 % \end{equation}

 %  where $\sigma_{D,x}(d)$ is given by the expression in Equation 
 % \eqref{var_sim_exp}. 

 \end{theorem}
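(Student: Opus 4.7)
My plan is to give two short complementary arguments, using whichever is preferred but sketching both so the shape of the calculation is transparent.

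The cleanest approach is via linearity of expectation with indicators. For each job $a_i$, let $X_i$ denote the indicator that at least one of the $r$ servers assigned to $a_i$ is non-straggling, so that $d=\sum_{i=1}^{n} X_i$. Since each server straggles independently with probability $p$, and since job $a_i$ is assigned to exactly $r$ distinct servers (by the balanced property of $D$), the event $\{X_i=0\}$ requires all $r$ of those servers to straggle, which happens with probability $p^r$. Therefore $\mathbb{E}[X_i]=1-p^r$, and by linearity $\mathbb{E}_D[d]=n(1-p^r)=n-np^r$. Note that this expression depends only on $n,r,p$ and is independent of the particular balanced assignment $D$, consistent with Theorem~\ref{all_mean}.

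Alternatively, one can derive the same result by conditioning on the random number $x$ of non-stragglers (which is distributed as $B(c,1-p)$) and invoking Theorem~\ref{all_mean}. The computation is
\begin{align*}
\mathbb{E}_D[d] &= \sum_{x=0}^{c}\binom{c}{x}(1-p)^{x}p^{c-x}\cdot n\left(1-\frac{\binom{c-r}{x}}{\binom{c}{x}}\right)\\
&= n - n\sum_{x=0}^{c-r}\binom{c-r}{x}(1-p)^{x}p^{c-x}\\
&= n - n\,p^{r}\sum_{x=0}^{c-r}\binom{c-r}{x}(1-p)^{x}p^{(c-r)-x}\\
&= n - n\,p^{r},
\end{align*}
where the last equality uses the binomial theorem applied to $(p+(1-p))^{c-r}=1$.

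I do not expect any real obstacle here: the first argument is two lines once the right indicator is chosen, and the only subtle point is that the balanced property guarantees each job is assigned to exactly $r$ distinct servers, so the $r$ straggling events for a fixed job are independent with joint probability $p^r$. The second argument is slightly longer but is a useful consistency check since it explicitly uses Theorem~\ref{all_mean}. I would present the indicator proof as the main argument.
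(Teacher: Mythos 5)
Your proposal is correct, and your primary argument takes a genuinely different (and more elementary) route than the paper. The paper proves this theorem exactly as in your second argument: it conditions on the number $x$ of non-stragglers, notes $x \sim B(c,1-p)$, applies the tower rule together with Theorem~\ref{all_mean}, and collapses the resulting binomial sum. Your indicator argument, by contrast, bypasses Theorem~\ref{all_mean} entirely: writing $d=\sum_i X_i$ with $X_i$ the indicator that some server holding $a_i$ is non-straggling gives $\mathbb{E}[X_i]=1-p^r$ immediately from independence of the $r$ straggling events, and linearity finishes the proof in two lines. What the paper's route buys is a consistency check that the finite-population formula $n\bigl(1-\binom{c-r}{x}/\binom{c}{x}\bigr)$ averages out correctly under the binomial mixing distribution; what your route buys is brevity and the observation that the answer depends only on $n$, $r$, $p$ without ever invoking uniform sampling of fixed-size subsets. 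One small point in your favor: your binomial-sum computation correctly extracts the factor $p^{r}$ from $p^{c-x}=p^{r}p^{(c-r)-x}$, arriving at $n-np^{r}$ in agreement with the theorem statement, whereas the paper's appendix displays the factored constant as $(1-p)^{r}$ in its final two expressions, which is a slip (it is inconsistent with the theorem it is proving); your version is the correct one.
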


This result is shown in Appendix H from tower rule.%\ref{sec:max_min_var_sampled_proof} from tower rule.

%Under this setup, we can now state our results on mean and variance on the number of distinct jobs received by the master.

We can actually generalize some of our results in Theorem \ref{min_var} and \ref{max_var} for a more generalized setup where the number of servers that return is not unique but is sampled from some distribution $\mathcal{P}$. Formally, we study the setup where $x$ is sampled from a distribution $\mathcal{P}$ and conditioned on $x$, any subset of $x$ servers is equally likely to be the set of non-straggling servers. Choosing $\mathcal{P}$ as $B(c,1-p)$ implies that the every server is equally and independently likely to straggle with probability $p$.

% However, we ensure that the subset $S_1$ is the set of servers that could communicate is equally likely as the subset of servers $S_2$ that could communicate if $|S_1|=|S_2|$. 

% This precisely captures the case where every server is independently able to communicate to the master with probability $p$, in which case $\mathcal{P}$ would be given by $B(c,p)$ 

\begin{theorem}{\label{max_min_var_sampled}}
    Let us consider $x \sim \mathcal{P}$. Conditioned on $x$, we study the setup where any set of $x$ servers is equally likely to communicate with the master. Then the proximally compact assignment (if it exists) and stretched compace assignment (if exists) attain the least and the largest variance respectively on the number of distinct jobs received at master amongst all balanced $(n,k,r,c)$ assignment schemes.
\end{theorem}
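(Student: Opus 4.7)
My plan is to apply the law of total variance to decompose the overall variance of $d$, conditioning on the random variable $x \sim \mathcal{P}$:
\begin{equation}
\sigma_D(d) = \mathbbm{E}_{x \sim \mathcal{P}}\!\left[\sigma_{D,x}(d)\right] + \sigma_{x \sim \mathcal{P}}\!\left(\mathbbm{E}_{D,x}[d]\right).
\end{equation}
This decomposition reduces the theorem to understanding how each of the two summands depends on the choice of balanced assignment $D$.

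For the outer-variance summand, I will invoke Theorem \ref{all_mean}, which asserts that $\mathbbm{E}_{D,x}[d]$ depends only on $n, k, r, c, x$ and is independent of the particular balanced assignment $D$. Viewed as a function of $x \sim \mathcal{P}$, the quantity $\mathbbm{E}_{D,x}[d]$ is therefore the same random variable for every balanced $(n,k,r,c)$ assignment, and the term $\sigma_{x \sim \mathcal{P}}(\mathbbm{E}_{D,x}[d])$ is a constant shared by all such $D$. Hence, minimizing (resp.\ maximizing) $\sigma_D(d)$ over balanced assignments is equivalent to minimizing (resp.\ maximizing) the first summand $\mathbbm{E}_{x \sim \mathcal{P}}[\sigma_{D,x}(d)]$.

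For the remaining summand, I will appeal to Theorems \ref{min_var} and \ref{max_var}, which give pointwise-in-$x$ extremality: for every fixed $x$, a proximally compact assignment $D^{\mathrm{prox}}$ attains the minimum of $\sigma_{D,x}(d)$ and a stretched compact assignment $D^{\mathrm{str}}$ attains the maximum, taken over all balanced $(n,k,r,c)$ assignments. Since $\mathcal{P}$ assigns nonnegative weights to each value of $x$ in its support, these pointwise inequalities survive averaging to yield
\begin{equation}
\mathbbm{E}_{x \sim \mathcal{P}}[\sigma_{D^{\mathrm{prox}},x}(d)] \le \mathbbm{E}_{x \sim \mathcal{P}}[\sigma_{D,x}(d)] \le \mathbbm{E}_{x \sim \mathcal{P}}[\sigma_{D^{\mathrm{str}},x}(d)]
\end{equation}
for every balanced $(n,k,r,c)$ assignment $D$. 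Combined with the invariance of the outer-variance summand established in the previous paragraph, this gives the claimed extremal properties.

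I do not expect any substantive obstacle here, since the heavy lifting has already been carried out in Theorems \ref{all_mean}, \ref{min_var}, and \ref{max_var}. The only subtlety is to recognize that the total-variance decomposition cleanly isolates the assignment-dependent contribution into $\mathbbm{E}_{x \sim \mathcal{P}}[\sigma_{D,x}(d)]$ and that pointwise (in $x$) optimality is preserved under averaging against an arbitrary distribution $\mathcal{P}$ on the number of non-stragglers.
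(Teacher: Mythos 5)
Your proposal is correct and matches the paper's own argument essentially verbatim: the paper also applies Eve's law (law of total variance) to split $\sigma_{D,x\sim\mathcal{P}}(d)$ into $\mathbbm{E}_{x\sim\mathcal{P}}[\sigma_{D,x}(d)] + \sigma_{x\sim\mathcal{P}}[\mathbbm{E}_{D,x}(d)]$, notes via Theorem \ref{all_mean} that the second term is assignment-independent, and averages the pointwise extremality from Theorems \ref{min_var} and \ref{max_var} over $\mathcal{P}$. No differences worth noting.
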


This result follows from the application of Eve's law on theorems \ref{min_var} and \ref{max_var} respectively and proven in Appendix H. %\ref{sec:max_min_var_sampled_proof}.

\section{Conclusion}
We study the mean and variance of distinct jobs received by a server under various assignment schemes, showing that repetition coding and block designs achieve the largest and smallest variance, respectively. However, it is not always clear if such designs exist, with the Bruck-Ryser-Chowla theorem \cite{sprott1955} providing only necessary conditions. Future work could explore if slight modifications to these designs yield near-extremal variance, or extend the results to the \( t^{th} \) moment to investigate if \( t \)-designs \cite{10.5555-1202540} can achieve extremal results.

\section*{Acknowledgment}

Most of this work was performed when SS was an undergraduate at Indian Institute of Technology, Bombay. SS would like to thank Siddharth Chandak for insightful discussions on Section \ref{sec_randomness_return_servers}.

\bibliographystyle{IEEEtran}
\bibliography{refs}  

\onecolumn  % Switch to single-column format for the appendix
\appendix
% Manually change the section numbering to A, B, C, ...
\setcounter{section}{0}  % Reset section counter to A, B, C...
\renewcommand{\thesection}{\Alph{section}}  % Make section numbers A, B, C, ...

\subsection{Proof of Theorem \ref{all_mean}}{\label{sec:all_mean_proof}}

\begin{theorem*}
 Consider any balanced $(n,k,r,c)$ assignment $D$. The expectation of the number of distinct completed jobs $d$ received by the master when any subset of cardinality $x$ of the set of servers $\mathcal{S}$ is able to communicate with the master with equal probability is the same for every balanced $(n,k,r,c)$ assignment $D$ and is given by 
 
 %Note that has the same expectation of distinct jobs received for every $x\leq c$ assuming $S$ can be any subset of the servers of cardinality $x$ with equal probability. Also the expectation can be computed as 
 \begin{equation}
  \mathbbm{E}_{{D},x}[d]= n\cdot\left(1-\frac{{c-r \choose x}}{{c \choose x}}\right) 
 \end{equation}
 %where $d$ denotes the number of distinct jobs received when any set of $x$ servers return and the job distribution is denoted by $\mathcal{D}$. 
\end{theorem*}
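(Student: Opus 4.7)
\subsection*{Proof proposal}

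The plan is to avoid the combinatorial detour taken in the sketch and instead use indicator variables for each job. For each job $a_i$, let $X_i$ be the indicator that job $a_i$ is received by the master, i.e.\ $X_i = 1$ iff at least one of the $r$ servers assigned to $a_i$ lies in the random communicating subset $\hat{S}$ of size $x$. Then the number of distinct jobs received is $d = \sum_{i=1}^n X_i$, and by linearity of expectation
\begin{equation*}
  \mathbb{E}_{D,x}[d] \;=\; \sum_{i=1}^n \Pr[X_i = 1] \;=\; \sum_{i=1}^n \bigl(1 - \Pr[X_i = 0]\bigr).
\end{equation*}

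Next, I would compute $\Pr[X_i = 0]$, which is the probability that none of the $r$ servers holding $a_i$ are in $\hat{S}$. Since $\hat{S}$ is drawn uniformly from all $x$-subsets of the $c$ servers, this probability equals the number of $x$-subsets lying in the complement of the $r$ servers assigned to $a_i$, divided by the total number of $x$-subsets:
\begin{equation*}
  \Pr[X_i = 0] \;=\; \frac{\binom{c-r}{x}}{\binom{c}{x}}.
\end{equation*}
Crucially, this probability depends only on $r$, $c$, and $x$ (since every job is assigned to exactly $r$ servers in a balanced assignment), and not on which particular $r$ servers hold $a_i$. Summing over $i = 1,\ldots,n$ immediately yields the claimed expression.

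There is essentially no obstacle here: the only substantive observation is that balancedness enters only through the uniform out-degree $r$ of every job, which ensures $\Pr[X_i = 0]$ is the same constant for all $i$. This is precisely why the expectation is invariant across all balanced $(n,k,r,c)$ assignments. Sanity checks are immediate: for $x = 1$ one gets $\mathbb{E}[d] = n \cdot r/c = k$ as expected, and for $x > c-r$ the numerator $\binom{c-r}{x}$ vanishes, giving $\mathbb{E}[d] = n$, consistent with the fact that every job is guaranteed to be covered.
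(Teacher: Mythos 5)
Your proof is correct, and it takes a genuinely different and more elementary route than the paper's. You decompose $d = \sum_{i=1}^n X_i$ into per-job indicator variables and apply linearity of expectation, so that the whole argument reduces to the single observation $\Pr[X_i = 0] = \binom{c-r}{x}/\binom{c}{x}$, which depends only on $r$, $c$, $x$. The paper instead starts from $d = kx - \sum_{i}(\mathfrak{n}^D_{i,\hat{S}}-1)\mathbbm{1}_{\mathfrak{n}^D_{i,\hat{S}}>1}$, counts the subsets $\hat{S}$ in which job $a_i$ appears exactly $t+1$ times to obtain $\sum_{t=1}^{r-1} t\binom{r}{t+1}\binom{c-r}{x-t-1}$, and then evaluates that sum in closed form via a generating-function identity. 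Your approach is cleaner and shorter for the mean; what the paper's heavier machinery buys is reuse, since the same quantities $\mathfrak{n}^D_{i,\hat{S}}$ and the same subset-counting technique are the backbone of the subsequent variance computation, where the pairwise products $(\mathfrak{n}^D_{i,\hat{S}}-1)(\mathfrak{n}^D_{j,\hat{S}}-1)$ cannot be handled by single-job indicators alone (one would instead need joint probabilities $\Pr[X_i X_j]$, which is essentially the $g(m,x)$ analysis). Your sanity checks at $x=1$ and $x>c-r$ match the paper's remarks exactly.
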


\begin{proof}

%Let us compute the expectation of distinct jobs($d$) received.For the computation of $d$, w

%\begin{equation}{\label{n_D_defn}}
 %   \mathfrak{n}^D_{i,\hat{S}}= |(\text{sup}(A[i,:]) \cap \hat{S})|
%\end{equation}
%or maybe defined as in terms of assignment matrix $A_D$. 

The number of distinct jobs $d$ received by the master when servers in a subset $\hat{S}$ (with $|\hat{S}| = x$) is able to communicate with the master is given by 

\begin{equation}{\label{constr_d}}
    d = \left|\bigcup_{j \in \hat{S}}\text{supp}(A_D[:,j])\right|= \left(k\times x- \sum\limits_{i=1}^{n} (\mathfrak{n}^D_{i,\hat{S}}-1) \mathbbm{1}_{\mathfrak{n}^D_{i,\hat{S}}>1}\right)
\end{equation}

%where $x$ denotes the cardinality of set $\hat{S}$. 
Note that the term $\sum\limits_{i=1}^{n} (\mathfrak{n}^D_{i,\hat{S}}-1) \mathbbm{1}_{\mathfrak{n}^D_{i,\hat{S}}>1}$ excludes those jobs which have been received multiple times from various servers present in $\hat{S}$.

	\begin{align}{\label{expectation_simplification}}
	 \mathbb{E}_{D,x}[d]
	= \frac{\sum\limits_{\substack{\hat{S} \subset \mathcal{S};\\|\hat{S}|=x}} (k \times x- \sum\limits_{i=1}^{n} (\mathfrak{n}^D_{i,\hat{S}}-1) \mathbbm{1}_{\mathfrak{n}^D_{i,\hat{S}}>1})}{\sum\limits_{\substack{\hat{S} \subset \mathcal{S};\\|\hat{S}|=x}} 1}
	 {=} & k\times x - \frac{\sum\limits_{i=1}^{n} \sum\limits_{\substack{\hat{S} \subset \mathcal{S};\\|\hat{S}|=x}} (\mathfrak{n}^D_{i,\hat{S}}-1) \mathbbm{1}_{\mathfrak{n}^D_{i,\hat{S}}>1} }{{c \choose x}}\nonumber\\ {=} & k\times x - \frac{n\sum\limits_{\substack{\hat{S} \subset \mathcal{S};\\|\hat{S}|=x}} (\mathfrak{n}^D_{i,\hat{S}}-1) \mathbbm{1}_{\mathfrak{n}^D_{i,\hat{S}}>1} }{{c \choose x}}
	\end{align}

	%$(a)$ follows on interchanging summations.
	
	Observe that for every job $a_i$ in a balanced $(n,k,r,c)$ assignment, the quantity $\sum\limits_{\hat{S} \subset \mathcal{S},|\hat{S}|=x} (\mathfrak{n}^D_{i,\hat{S}}-1) \mathbbm{1}_{\mathfrak{n}^D_{i,\hat{S}}>1}$ is the same, i.e., this summation is independent of $i$.  We now show that the quantity  $\sum\limits_{\hat{S} \subset \mathcal{S},|\hat{S}|=x} (\mathfrak{n}^D_{i,\hat{S}}-1) \mathbbm{1}_{\mathfrak{n}^D_{i,\hat{S}}>1}$ for any specified $x$, is the same for every balanced  $(n,k,r,c)$ distribution ${D}$. We compute this sum by counting the number of subsets $\hat{S} \subset \mathcal{S}$ of cardinality $x$ which additionally satisfies the constraint on $\mathfrak{n}^D_{i,\hat{S}}=t$ (i.e. job $a_i$ is present in exactly $t$ servers from $\hat{S}$) for every $t$ from $2$ to $r$ (as these cases deal with the job $a_i$ appearing more than once in the subset $\hat{S}$).

\begin{align}{\label{removal_comp_mean}}
	\sum\limits_{\substack{\hat{S} \subset \mathcal{S};\\|\hat{S}|=x}} (\mathfrak{n}^D_{i,\hat{S}}-1) \mathbbm{1}_{\mathfrak{n}^D_{i,\hat{S}}>1}
	{=} \sum\limits_{t=1}^{r-1} \sum\limits_{\substack{\hat{S} \subset \mathcal{S}; \\|\hat{S}|=x,\mathfrak{n}^D_{i,\hat{S}}=t+1}} t
	{=} \text{ }& \sum\limits_{t=1}^{r-1} t \sum\limits_{\substack{\hat{S} \subset \mathcal{S}; \\|\hat{S}|=x,\mathfrak{n}^D_{i,\hat{S}}=t+1}} 1 \nonumber\\
	\overset{(a)}{=} & \sum\limits_{t=1}^{r-1} t{r \choose (t+1)} {(c-r) \choose (x-t-1)} %\nonumber\\
    %\overset{(d)}{=} & r \times {{c-1 %\choose x-1}} + {{c-r \choose x}} - {{ %c \choose x}}
\end{align}

%We desribe the reasoning for each of the equalities in $(a),(b),(c)$ and $(d)$ below.

%\begin{itemize}
 %   \item $(a)$ follows since $\mathfrak{n}^D_{i,\hat{S}}$ can take value only from $\{0,1,2,...r\}$.
  %  \item $(b)$ follows since $t$ is a constant for the second summation.

 %   \item (c) follows from the argument below.
	
%	Note that $\sum\limits_{\substack{\hat{S} \subset \mathcal{S}; \\|\hat{S}|=x,\mathfrak{n}^D_{i,\hat{S}}=t+1}} 1$ denotes the number of subsets of servers of cardinality $x$ which have job $a_i$ occurring $(t+1)$ times.
	
%	This would be unique irrespective of distribution since job $a_i$ would occur in exactly $r$ servers. This the subsets should have $(t+1)$ servers from these $r$ servers and all the other $(x-t-1)$ servers from the remaining $(c-r)$ servers. Hence, there are ${r \choose (t+1)} {(c-r) \choose (x-t-1)}$ such subsets.

%    \item For equality in $(d)$, we use the following binomial expressions.\\
The last equality $(a)$ comes from counting the number of subsets $\hat{S} \subset \mathcal{S}$ of cardinality $x$ that contain precisely $t+1$ servers that were assigned the job $a_i$.
Consider the following binomial expressions	
	\begin{equation}{\label{binomial_1}}
	    ry(1+y)^{r-1} + 1 - {(1+y)^r} = \sum\limits_{t=0}^{r-1} t{{r \choose t+1}} y^{t+1}
	\end{equation}

        \begin{equation}{\label{binomial_2}}
            (1+y)^{c-r}=\sum\limits_{u=0}^{c-r} {{c-r \choose u}} y^u
        \end{equation}
	%$$
 
 %$$$$ 
Multiplying equations \eqref{binomial_1} and \eqref{binomial_2}, one observes that $\sum\limits_{t=1}^{r-1} t{r \choose (t+1)} {(c-r) \choose (x-t-1)}$ is precisely the coefficient of $y^{x}$ in $ry(1+y)^{c-1} + {(1+y)^{c-r}} - {(1+y)^c}$. Thus,

\begin{equation}{\label{temp_result_removal}}
    \sum\limits_{t=1}^{r-1} t{r \choose (t+1)} {(c-r) \choose (x-t-1)} = r \times {{c-1 \choose x-1}} + {{c-r \choose x}} - {{ c \choose x}}
\end{equation}
% equals $r \times {{c-1 \choose x-1}} + {{c-r \choose x}} - {{ c \choose x}}$. 
% \end{itemize}
	
% 	 whereas 
% 	Thus, we have $\mathbbm{E}_{D,x}[d]= k \times x - \frac{n\sum\limits_{t=1}^{r-1} t{r \choose (t+1)} {(c-r) \choose (x-t-1)}}{{c \choose x}}$
	
% 	This would imply that the mean number of distinct elements received for a given $x$ remains the same irrespective of job distribution chosen as long as it as a balanced assignment scheme (defined in Definition \ref{definition_balanced}).

% 	Note that using the idea of summation of series using coefficients of binomial expressions we can show that $\sum\limits_{t=1}^{r-1} t{r \choose (t+1)} {(c-r) \choose (x-t-1)} = r \times {{c-1 \choose x-1}} + {{c-r \choose x}} - {{ c \choose x}}$. 

Combining equations \eqref{expectation_simplification},
\eqref{removal_comp_mean} and \eqref{temp_result_removal}, we get 

$$\mathbbm{E}_{D,x}[d]= k \times x - \frac{n\left(r \times {{c-1 \choose x-1}} + {{c-r \choose x}} - {{c \choose x}}\right)}{{{c \choose x}}}= n\left(1-\frac{{{c-r \choose x}}}{{{c \choose x}}}\right)$$

\end{proof}

\subsection{Computation of Variance of distinct jobs $\sigma_{D,x}(d)$}{\label{sec:variance_comp}}

For calculating the variance on the number of distinct jobs $d$ received by the master, observe 
\begin{equation*}
 \sigma_{D,x}(d) =
	\sigma_{D,x}\left(k\times x- \sum\limits_{i} (\mathfrak{n}^D_{i,\hat{S}}-1) \mathbbm{1}_{\mathfrak{n}^D_{i,\hat{S}}>1}\right) = \sigma_{D,x} \left(\sum\limits_{i} (\mathfrak{n}^D_{i,\hat{S}}-1) \mathbbm{1}_{\mathfrak{n}^D_{i,\hat{S}}>1}\right)   
\end{equation*}
The above follows since $\sigma(t-X) = \sigma(X)$ where $t$ is a constant and $X$ is a random variable. We now make use of the definition $\text{var}(X) = \mathbbm{E}[X^2]- (\mathbbm{E}[X])^2$. Therefore,
\begin{align}{\label{sigma_expression1}}
	% \sigma_{D,x}(d) &=
	\sigma_{D,x} \left(\sum\limits_{i} (\mathfrak{n}^D_{i,\hat{S}}-1) \mathbbm{1}_{\mathfrak{n}^D_{i,\hat{S}}>1}\right)
	= & \frac{\sum\limits_{\substack{\hat{S} \subset \mathcal{S};\\|\hat{S}|=x}} {\left(\sum\limits_{i=1}^{n} (\mathfrak{n}^D_{i,\hat{S}}-1) \mathbbm{1}_{\mathfrak{n}^D_{i,\hat{S}}>1}\right)}^2}{\sum\limits_{\substack{\hat{S} \subset \mathcal{S};\\|\hat{S}|=x}} 1} - \left(\frac{\sum\limits_{\substack{\hat{S} \subset \mathcal{S}; |\hat{S}|=x}} {(\sum\limits_{i=1}^{n} (\mathfrak{n}^D_{i,\hat{S}}-1) \mathbbm{1}_{\mathfrak{n}^D_{i,\hat{S}}>1})}}{\sum\limits_{\substack{\hat{S} \subset \mathcal{S};\\|\hat{S}|=x}} 1}\right)^2\nonumber\\
	& \hspace{-2 em}\overset{(a)}{=} \frac{\sum\limits_{\substack{\hat{S} \subset \mathcal{S};\\|\hat{S}|=x}} (\sum\limits_{i=1}^{n} \sum\limits_{j=1}^{n} (\mathfrak{n}^D_{i,\hat{S}}-1) \mathbbm{1}_{\mathfrak{n}^D_{i,\hat{S}}>1} (\mathfrak{n}^D_{j,\hat{S}}-1) \mathbbm{1}_{\mathfrak{n}^D_{j,\hat{S}}>1})}{\sum\limits_{\substack{\hat{S} \subset \mathcal{S};\\|\hat{S}|=x}} 1} - \left(\frac{\sum\limits_{i=1}^{n} \sum\limits_{\substack{\hat{S} \subset \mathcal{S};\\ |\hat{S}|=x}} {((\mathfrak{n}^D_{i,\hat{S}}-1) \mathbbm{1}_{\mathfrak{n}^D_{i,\hat{S}}>1})}}{{c \choose x}}\right)^2\nonumber\\
	& \overset{(b)}{=} \frac{\sum\limits_{i=1}^{n} \sum\limits_{j=1}^{n} \sum\limits_{\substack{\hat{S} \subset \mathcal{S};\\|\hat{S}|=x}} (\mathfrak{n}^D_{i,\hat{S}}-1) \mathbbm{1}_{\mathfrak{n}^D_{i,\hat{S}}>1} (\mathfrak{n}^D_{j,\hat{S}}-1) \mathbbm{1}_{\mathfrak{n}^D_{j,\hat{S}}>1}}{{c \choose x}}- \left(\frac{n\sum\limits_{t=1}^{r-1} t{r \choose (t+1)} {(c-r) \choose (x-t-1)}}{{c \choose x}}\right)^2\nonumber\\
    & \overset{(c)}{=} \frac{\sum\limits_{i=1}^{n} \sum\limits_{j=1}^{n} \sum\limits_{\substack{\hat{S} \subset \mathcal{S};\\|\hat{S}|=x}} (\mathfrak{n}^D_{i,\hat{S}}-1) \mathbbm{1}_{\mathfrak{n}^D_{i,\hat{S}}>1} (\mathfrak{n}^D_{j,\hat{S}}-1) \mathbbm{1}_{\mathfrak{n}^D_{j,\hat{S}}>1}}{{c \choose x}}- \left(\frac{n{{c-r \choose x}}}{{{c \choose x}}} + n \times\left(\frac{rx}{c}-1\right)\right)^2
\end{align}

In the above set of equations, $(a)$ follows from the identity $(\sum\limits_i b_i)^2 = \sum\limits_i \sum\limits_j b_i b_j$. The first term in $(b)$ is obtained by interchanging the order of summations, whereas the second term comes from equation \eqref{removal_comp_mean}.  Further, the second term in $(c)$ follows using the equation \eqref{temp_result_removal} given in the proof of Theorem \ref{all_mean}.

Observe that the second term in the final expression in equation~\eqref{sigma_expression1} depends only on $n,r,c$ and $x$ and is independent of the specific balanced $(n,k,r,c)$ job assignment $D$. On the other hand, the first term in equation~\eqref{sigma_expression1} depends on the particular assignment $D$. We now consider the numerator of the first term of equation~\eqref{sigma_expression1} in more detail. We can break this expression into two parts, where one part is dependent on just one index $i$ and the other part is dependent on two distinct indices $i,j$. Thus,   \begin{align}{\label{twoparts}}
	\sum\limits_{i=1}^{n} \sum\limits_{j=1}^{n} \sum\limits_{\substack{\hat{S} \subset \mathcal{S};\\|\hat{S}|=x}} (\mathfrak{n}^D_{i,\hat{S}}-1) \mathbbm{1}_{\mathfrak{n}^D_{i,\hat{S}}>1} (\mathfrak{n}^D_{j,\hat{S}}-1) \mathbbm{1}_{\mathfrak{n}^D_{j,\hat{S}}>1}
	= & 2\sum\limits_{1\leq i < j\leq n} \sum\limits_{\substack{\hat{S} \subset \mathcal{S};\\|\hat{S}|=x}} (\mathfrak{n}^D_{i,\hat{S}}-1) \mathbbm{1}_{\mathfrak{n}^D_{i,\hat{S}}>1} (\mathfrak{n}^D_{j,\hat{S}}-1) \mathbbm{1}_{\mathfrak{n}^D_{j,\hat{S}}>1}\\
	& + \sum\limits_{1\leq i =j\leq n} \sum\limits_{\substack{\hat{S} \subset \mathcal{S};\\|\hat{S}|=x}} (\mathfrak{n}^D_{i,\hat{S}}-1) \mathbbm{1}_{\mathfrak{n}^D_{i,\hat{S}}>1} (\mathfrak{n}^D_{j,\hat{S}}-1) \mathbbm{1}_{\mathfrak{n}^D_{j,\hat{S}}>1}
\end{align}
In equation~\eqref{twoparts}, the second term can be rewritten as $\sum\limits_{1\leq i \leq n} \sum\limits_{\substack{\hat{S} \subset \mathcal{S};\\|\hat{S}|=x}} ((\mathfrak{n}^D_{i,\hat{S}}-1) \mathbbm{1}_{\mathfrak{n}^D_{i,\hat{S}}>1})^2$. For every job $a_i$, this expression calculates $\sum\limits_{\substack{\hat{S} \subset \mathcal{S};\\|\hat{S}|=x}} ((\mathfrak{n}^D_{i,\hat{S}}-1) \mathbbm{1}_{\mathfrak{n}^D_{i,\hat{S}}>1})^2$, which is independent of the choice of the job $a_i$ in any balanced $(n,k,r,c)$ assignment $D$. In fact, this second term of equation~\eqref{twoparts} is independent of the choice of $D$ and it depends only on the values of $n,c,r$ and $x$. We can compute this sum by counting the number of subsets $\hat{S} \subset \mathcal{S}$ of cardinality $x$ that additionally satisfy the constraint $\mathfrak{n}^D_{i,\hat{S}}=t$ (i.e. job $a_i$ is present in exactly $t$ servers from $\hat{S}$) for every $t$ from $2$ to $r$. Thus \begin{equation}\label{squares}
\sum\limits_{1\leq i \leq n} \sum\limits_{\substack{\hat{S} \subset \mathcal{S};\\|\hat{S}|=x}} ((\mathfrak{n}^D_{i,\hat{S}}-1) \mathbbm{1}_{\mathfrak{n}^D_{i,\hat{S}}>1})^2 = n\sum\limits_{t=1}^{r-1} t^2{r \choose (t+1)} {(c-r) \choose (x-t-1)}
\end{equation} 
Note that the number of subsets $\hat{S} \subset \mathcal{S}$ of cardinality $x$ such that a particular job $a_i$ appears $t+1$ times in $\hat{S}$ is given by ${r \choose (t+1)}{(c-r) \choose (x-t-1)}$. As $\mathfrak{n}^D_{i,\hat{S}}=t+1$ for this particular $\hat{S}$, therefore $((\mathfrak{n}^D_{i,\hat{S}}-1) \mathbbm{1}_{\mathfrak{n}^D_{i,\hat{S}}>1})^2 = t^2$. This explains the final expression in equation~\eqref{squares}.  A closed form expression for the sum $\sum\limits_{t=1}^{r-1} t^2{r \choose (t+1)} {(c-r) \choose (x-t-1)}$ can be obtained by considering the following binomial expressions 
\begin{equation}{\label{binomial_corollary_4}}
    r(r-1) y^2(1+y)^{r-2} - ry(1+y)^{r-1} - 1 + {(1+y)^r} = \sum_{t=1}^{r-1} t^2 {{r \choose t+1}} y^{t+1}
\end{equation}

\begin{equation}{\label{binomial_corollary_5}}
        (1+y)^{c-r} = \sum_{v=0}^{c-r} {{c-r \choose v}} y^v
\end{equation}

Multiplying equations \eqref{binomial_corollary_4} and \eqref{binomial_corollary_5}, one obtains $\sum\limits_{t=1}^{r-1} t^2 {{r \choose t+1}}{{c-r \choose x-t-1}}$ to be the coefficient of $y^{x}$ in $r(r-1) y^2(1+y)^{c-2} - ry(1+y)^{c-1} - (1+y)^{c-r} + (1+y)^c$, thus,  

\begin{equation}{\label{temp_corollary}}
        \sum\limits_{t=1}^{r-1} t^2 {{r \choose t+1}}{{c-r \choose x-t-1}} = r(r-1) {{c-2 \choose x-2}} - r{{c-1 \choose x-1}} -{{c-r \choose x}}+ {{c \choose x}}
\end{equation}

Thus, combining equations \eqref{temp_corollary}, \eqref{twoparts} and \eqref{sigma_expression1}, we get the expression in equation \eqref{sigma_expression_final}. Further, using $\mathfrak{m}^D(.)$ that denotes the frequency distribution of job pairs, we can say 
 \begin{equation}{\label{var_sim_exp}}
\sigma_{{D},x}(d)= \frac{2\sum\limits_{m=0}^{r}\mathfrak{m}^D(m)g(m,x) + T_2(n,r,c,x)}{{c \choose x}}- \left(T_1(n,r,c,x)\right)^2\\
\end{equation}  
 
{where} $T_1(n,r,c,x) = \frac{n{{c-r \choose x}}}{{{c \choose x}}} + n \times(\frac{rx}{c}-1)$ and 
\begin{align}
    T_2(n,r,c,x) = & n\sum\limits_{t=1}^{r-1} t^2{r \choose (t+1)} {(c-r) \choose (x-t-1)}\\
    = & n\left(r(r-1) {{c-2 \choose x-2}} - r{{c-1 \choose x-1}} -{{c-r \choose x}}+ {{c \choose x}}\right)
\end{align}

\subsection{Proof of Lemma \ref{lemma_on_l_proximally_minimal}}\label{sec:proof_lemma_proximal_minimal}

\begin{lemma*}%{\label{lemma_on_l_proximally_minimal}}
    For proximally compact $(n,k,r,c)$ assignment $D$, we have $\ell=\Bigl\lfloor\frac{r(k-1)}{n-1}\Bigr\rfloor$ where $\ell$ denotes the index of the smallest non-zero entry in the shape vector $h_D$.
    %\Bigl\lfloor\dfrac{1}{2}\Bigr\rfloor
\end{lemma*}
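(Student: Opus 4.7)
The plan is to use the two linear constraints on the shape vector $h_D$ (equations \eqref{proporty_m_1} and \eqref{property_m_2}), together with the structural assumption that $h_D$ has at most two nonzero entries in consecutive positions, to pin down $\ell$ explicitly.

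I first treat the single-nonzero case. If $\mathfrak{m}^D(m)$ is the only nonzero entry, equation \eqref{proporty_m_1} forces $\mathfrak{m}^D(m)=\binom{n}{2}$, and then equation \eqref{property_m_2} gives $m\binom{n}{2}=c\binom{k}{2}$. Using the balance identity $nr=kc$, we rewrite
\[
m=\frac{c\binom{k}{2}}{\binom{n}{2}}=\frac{ck(k-1)}{n(n-1)}=\frac{r(k-1)}{n-1},
\]
which is necessarily an integer in this case, so $\ell=m=\lfloor r(k-1)/(n-1)\rfloor$ trivially.

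Next I handle the genuinely two-entry case, where $\mathfrak{m}^D(\ell)=a>0$ and $\mathfrak{m}^D(\ell+1)=b>0$ are the only nonzero entries. Equations \eqref{proporty_m_1} and \eqref{property_m_2} become
\[
a+b=\binom{n}{2},\qquad \ell a+(\ell+1)b=c\binom{k}{2}.
\]
Subtracting $\ell$ times the first equation from the second isolates $b=c\binom{k}{2}-\ell\binom{n}{2}$, and then $a=(\ell+1)\binom{n}{2}-c\binom{k}{2}$. The positivity requirements $a>0$ and $b>0$ translate, after dividing by $\binom{n}{2}$ and invoking $ck/n=r$, into the double inequality
\[
\ell<\frac{c\binom{k}{2}}{\binom{n}{2}}=\frac{r(k-1)}{n-1}<\ell+1,
\]
which is precisely the statement that $\ell=\lfloor r(k-1)/(n-1)\rfloor$.

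There is no real obstacle here; both cases reduce to direct arithmetic on the two moment conditions, and the only subtlety is checking that the single-nonzero case gives the same formula (it does, since then $r(k-1)/(n-1)$ is an integer and coincides with its floor). I would conclude by noting that the two cases exhaust the definition of proximally compact, establishing the claim uniformly.
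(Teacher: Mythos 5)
Your proposal is correct and follows essentially the same route as the paper: both split into the single-entry and two-consecutive-entry cases, use the two moment constraints \eqref{proporty_m_1} and \eqref{property_m_2} together with $nr=kc$, and derive $\ell \le \frac{r(k-1)}{n-1} < \ell+1$ from positivity of the nonzero entries. The only cosmetic difference is that you solve for $a,b$ explicitly and obtain strict inequalities on both sides, whereas the paper only needs one strict inequality; the conclusion is identical.
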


% The proof of this lemma follows on application of the fact that $\mathfrak{m}^D(m)$ is zero for $m \neq l,l+1$ and deferred to Appendix \ref{sec:proof_lemma_proximal_minimal}

\begin{proof}

For a proximally compact $(n,k,r,c)$ assignment $D$, if the shape vector has only one nonzero entry, then $\mathfrak{m}^D(\ell) = {n \choose 2}$. As the total number of job pairs at the $c$ servers is $c{k \choose 2}$, therefore using \eqref{property_m_2}, we have $\ell = \frac{c{k \choose 2}}{{n \choose 2}} = \frac{r(k-1)}{n-1}$. 

On the other hand, if a proximally compact $(n,k,r,c)$ assignment $D$, has a shape vector with exactly two consecutive nonzero entries, then $\mathfrak{m}^D(m)$ is zero for all $m\neq \ell,\ell+1$ for some $\ell$. Using \eqref{proporty_m_1}, we have $\mathfrak{m}^D(\ell+1) = {{n \choose 2}}- \mathfrak{m}^D(\ell)$ and by \eqref{property_m_2}, we get  $\ell(\mathfrak{m}^D(\ell)) + (\ell+1)({{n \choose 2}}- \mathfrak{m}^D(\ell)) = c{{k\choose 2}}$. Thus we can conclude that  $\ell {{n \choose 2}} \leq c{{k \choose 2}}$ and $(\ell +1){{n \choose 2}} > c{{k \choose 2}}$ (as $\mathfrak{m}^D(\ell)>0$) and therefore $\ell=\Bigl\lfloor\frac{c.{{k \choose 2}}}{{{n \choose 2}}}\Bigr\rfloor = \Bigl\lfloor\frac{r(k-1)}{n-1}\Bigr\rfloor$.
\end{proof}

\subsection{Proof of Theorem \ref{min_var}} {\label{sec:min_var_proof}}
\begin{theorem*}
 If a proximally compact balanced $(n,k,r,c)$ job assignment exists, then it has the least variance amongst all balanced $(n,k,r,c)$ job assignments. 
\end{theorem*}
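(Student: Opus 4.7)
My plan is to reduce the claim to a single scalar inequality on the shape vector and then exploit the discrete convexity of $g(\cdot,x)$ proved in Lemma~\ref{g_p_diff_lemma}. Let $D$ be the proximally compact assignment and let $\ell=\lfloor r(k-1)/(n-1)\rfloor$ be the index from Lemma~\ref{lemma_on_l_proximally_minimal}, so $\mathfrak{m}^D(m)=0$ for every $m\notin\{\ell,\ell+1\}$. For any other balanced $(n,k,r,c)$ assignment $D_1$, the closed form in \eqref{var_sim_exp} yields
\[
\sigma_{D_1,x}(d)-\sigma_{D,x}(d)\;=\;\frac{2}{\binom{c}{x}}\sum_{m=0}^{r}v_m\,g(m,x),\qquad v := h_{D_1}-h_D,
\]
and \eqref{defn_of_A} forces $v\in\ker H$, i.e.\ $\sum_m v_m=0$ and $\sum_m m\,v_m=0$. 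Because $h_D$ vanishes outside $\{\ell,\ell+1\}$ while $h_{D_1}$ is entrywise nonnegative, we additionally obtain $v_m\ge 0$ for every $m\notin\{\ell,\ell+1\}$. The theorem is therefore equivalent to showing $\sum_{m=0}^{r}v_m\,g(m,x)\ge 0$ for every such $v$.

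The central step is a decomposition of $v$ as a nonnegative combination of elementary kernel vectors supported on triples $\{j,\ell,\ell+1\}$. For each $j\in\{0,\dots,r\}\setminus\{\ell,\ell+1\}$ I will set
\[
u_j \;=\; e_j + (j-\ell-1)\,e_\ell + (\ell-j)\,e_{\ell+1},
\]
verify $u_j\in\ker H$ by the identities $1+(j-\ell-1)+(\ell-j)=0$ and $j+\ell(j-\ell-1)+(\ell+1)(\ell-j)=0$, and then prove $v=\sum_{j\neq\ell,\ell+1}v_j\,u_j$ by matching coordinates: at any $i\notin\{\ell,\ell+1\}$ the right-hand side contributes exactly $v_i$, and the coordinates at $\ell$ and $\ell+1$ agree with $v_\ell$ and $v_{\ell+1}$ after using the two kernel constraints to eliminate them. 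Since $v_j\ge 0$ whenever $j\notin\{\ell,\ell+1\}$, this expansion is nonnegative.

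It then suffices to show $\langle u_j,g(\cdot,x)\rangle\ge 0$ for each $j$. Lemma~\ref{g_p_diff_lemma} says $g(m+1,x)-g(m,x)$ is nondecreasing in $m$, hence $g(\cdot,x)$ is discretely convex on $\{0,\dots,r\}$, giving the chord-above-graph bound $(c-a)g(b,x)\le(c-b)g(a,x)+(b-a)g(c,x)$ for all integers $a<b<c$. Applying this to $a=\ell,\,b=\ell+1,\,c=j$ when $j>\ell+1$ gives $(j-\ell)g(\ell+1,x)\le(j-\ell-1)g(\ell,x)+g(j,x)$, which rearranges precisely to $\langle u_j,g(\cdot,x)\rangle\ge 0$; applying it to $a=j,\,b=\ell,\,c=\ell+1$ when $j<\ell$ gives the mirror inequality, again equivalent to $\langle u_j,g(\cdot,x)\rangle\ge 0$. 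Summing with weights $v_j\ge 0$ delivers $\sum_m v_m g(m,x)\ge 0$ and hence $\sigma_{D_1,x}(d)\ge\sigma_{D,x}(d)$.

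The part I expect to require the most care is choosing the right generating set of $\ker H$. The natural second-difference basis $\{e_i-2e_{i+1}+e_{i+2}\}$ does not automatically produce a nonnegative expansion of a sign-constrained $v$, whereas the triples $u_j$ anchored at the support $\{\ell,\ell+1\}$ of $h_D$ do so by design; identifying this "right" basis, and checking that the two convexity inequalities line up with the signs of $(j-\ell-1)$ and $(\ell-j)$ on either side of $\{\ell,\ell+1\}$, is the only nontrivial ingredient. Once this is in place, the rest is routine linear algebra and the convexity already available from Lemma~\ref{g_p_diff_lemma}.
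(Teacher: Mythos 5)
Your proof is correct, and it takes a genuinely different route from the paper's at the one point where the argument is delicate. Both proofs reduce the claim to showing $\sum_{m=0}^{r} v_m\,g(m,x)\ge 0$ for every $v=h_{D_1}-h_D\in\ker H$ with $v_m\ge 0$ off $\{\ell,\ell+1\}$, and both ultimately rest on the discrete convexity of $g(\cdot,x)$ supplied by Lemma~\ref{g_p_diff_lemma}. The paper expands $v$ in the second-difference basis \eqref{basis_ker}, $v=\sum_i\alpha_i h_i$, and must then prove $\alpha_i\ge 0$ by a two-sided induction on the sign constraints (showing $\alpha_{j+k}\ge(k+1)\alpha_j$ from one end and the mirrored bound from the other); the payoff is the explicit closed form \eqref{ineq_gre}, $\sum_i\alpha_i\bigl(g(i-1,x)-2g(i,x)+g(i+1,x)\bigr)=\sum_i\frac{\alpha_i(x-1)}{c-2r+i-x+1}\binom{c-2r+i-1}{x-1}$, which the main text reuses to read off the degenerate cases $x=1$ and $x>c-r$, and which transfers verbatim to the maximal-variance proof of Theorem~\ref{max_var} (where one instead shows all $\alpha_i\le 0$). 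Your decomposition into the triple-supported kernel vectors $u_j=e_j+(j-\ell-1)e_\ell+(\ell-j)e_{\ell+1}$ sidesteps that induction entirely: the coefficients in $v=\sum_{j\ne\ell,\ell+1}v_j u_j$ are literally the entries of $v$ outside $\{\ell,\ell+1\}$, so their nonnegativity is immediate, and the two kernel constraints automatically reconcile the $\ell$ and $\ell+1$ coordinates (I checked: $v_\ell=\sum_j v_j(j-\ell-1)$ and $v_{\ell+1}=\sum_j v_j(\ell-j)$ follow from $\sum_m v_m=\sum_m m v_m=0$). The price is that you need the three-point chord inequality rather than just nonnegativity of second differences, but that is a routine telescoping consequence of the monotonicity of $g(m+1,x)-g(m,x)$. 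Your remark that the single-nonzero-entry (BIBD) case is subsumed is right, since the sign constraints you use are only those at indices outside $\{\ell,\ell+1\}$. In short: same skeleton, but your choice of generating set anchored at the support of $h_D$ trades the paper's inductive coefficient-sign argument for an elementary convexity inequality, which is arguably the cleaner way to organize this step.
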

\begin{proof}
Let $h_D$ be the shape vector corresponding to the proximally compact balanced $(n,k,r,c)$ job assignment. Thus $h_D(i) = 0$ for all $i \le \ell$ and $i > \ell + 2$ for $\ell$ as calculated in Lemma~\ref{lemma_on_l_proximally_minimal}.   Any balanced $(n,k,r,c)$ job assignment $D_1$ would have a shape vector $h_D + v$ where $v \in \ker H$ with matrix $H$ as defined in \eqref{defn_of_A}. Observe from the expression of variance in \eqref{var_sim_exp} that it is only the term $\sum\limits_{m=0}^{r}\mathfrak{m}^D(m)g(m,x)$ that varies amongst the different balanced $(n,k,r,c)$ assignments. Thus it is enough to show that for every permissible $v \in \ker H$ mentioned above,  $\sum\limits_{m=0}^{r}v(m+1)g(m,x) \ge 0$, in order to conclude that the proximally compact balanced $(n,k,r,c)$ assignment has the least variance. A basis for the kernel of the matrix $H$ is given by the $r-1$ vectors \begin{equation}{\label{basis_ker}} \left\{h_1, h_2, h_3, \cdots, h_{r-2}, h_{r-1} \right\} =  \left\{ \begin{pmatrix} 1 \\ -2 \\ 1 \\ 0 \\ \vdots \\ 0 \\ 0 \\ 0 \end{pmatrix},  \begin{pmatrix} 0 \\ 1 \\ -2 \\ 1 \\  \vdots \\ 0 \\ 0 \\ 0 \end{pmatrix}, \begin{pmatrix} 0 \\ 0 \\ 1 \\ -2 \\ \vdots \\ 0 \\ 0 \\ 0 \end{pmatrix}, \cdots , \begin{pmatrix} 0 \\ 0 \\ 0 \\ 0 \\ \vdots \\ -2 \\ 1 \\ 0 \end{pmatrix}, \begin{pmatrix} 0 \\ 0 \\ 0 \\ 0 \\ \vdots \\ 1 \\ -2 \\ 1 \end{pmatrix} \right\}\end{equation} 

We therefore first characterize $v \in \ker H$ that may appear from some balanced $(n,k,r,c)$ assignment. As both the shape vectors $h_D, h_D+v \in \mathbb{N}^{r+1}$, therefore $v(i) \ge 0$ for all $i \neq \ell+1, \ell + 2$. Further, as $v \in \ker H$, therefore $\sum\limits_{i=1}^{r+1} v(i) = 0$ and so if $v$ is a nonzero vector, then at least one of $v(\ell+1), v(\ell+2)$ must be a negative integer. As $v \in \ker H$, therefore $v$ can be expressed in terms of the basis vectors $\left\{h_i\right\}$ listed in \eqref{basis_ker}. Let $v = \sum\limits_{i=1}^{r-1}\alpha_ih_i$. We now show that all $\alpha_i \in \mathbb{N}$. 

Consider the components of $v$ for $i \le \ell$. Let $j$ be the smallest index where $v(j) > 0$ and $j \le \ell$. Then one can inductively argue that $\alpha_i = 0$ for all $i < j$ by starting with $i=1$ as $v(i) = 0$ for $i < j$. Further, $\alpha_j = v(j) > 0$. Now, $v(j+1) = \alpha_{j+1} -2\alpha_j \ge 0$ implies that $\alpha_{j+1} \ge 2\alpha_j$. Similarly, $v(j+2) = \alpha_{j+2} - 2\alpha_{j+1} + \alpha_j$, which implies $\alpha_{j+2} \ge 2\alpha_{j+1} - \alpha_{j} \ge 3\alpha_j$ as $v(j+2) \ge 0$. On the same lines, $\alpha_{j+3} \ge 2\alpha_{j+2} - \alpha_{j+1} \ge 2(2\alpha_{j+1} - \alpha_j) - \alpha_{j+1} = 3\alpha_{j+1} - 2\alpha_j \ge 4\alpha_j$. Inductively, one can show that $\alpha_{j+k} \ge (k+1)\alpha_j$ for all $k \le (\ell - j)$. Thus $\alpha_i \ge 0$ for all $1 \le i \le \ell$. This accounts for all $v(i) \ge 0$ for $i \le \ell$. 

Similarly, one can utilize $v(i) \ge 0$ for $i > \ell+2$ to conclude that $\alpha_i \ge 0$ for $\ell + 1 \le i \le r-1$, by proceeding from the other end. Let $j$ now be the largest index where $v(j) > 0$ and $j > \ell+2$. If $j < r+1$, then $v(r+1) = 0$ forces $\alpha_{r-1} = 0$. Once again, one can inductively argue that $\alpha_i = 0$ for $j-1 \le i \le r-1$ as the corresponding $v(i+2) = 0$. Further, $\alpha_{j-2} = v(j) > 0$. Now, $v(j-1) = \alpha_{j-3} - 2\alpha_{j-2} \ge 0$ implies that $\alpha_{j-3} \ge 2\alpha_{j-2}$. Using $v(j-2) \ge 0$, one obtains $\alpha_{j-4} \ge 2\alpha_{j-3} - \alpha_{j-2} \ge 3\alpha_{j-2}$. Reflecting the argument used before, one can conclude that $\alpha_{j-2-k} \ge (k+1)\alpha_{j-2}$ for $0 \le k \le (j - \ell - 3)$ and so $\alpha_i \ge 0$ for $\ell + 1 \le i \le r-1$. Thus all $\alpha_i \in \mathbb{N}$.

%It is easy to see that as $v(i) \ge 0$ for $i \neq \ell+1,\ell+2$, therefore $v = \sum\limits_{i=1}^{r-1}\alpha_ih_i$ where $\left\{h_i\right\}$ are the basis vectors of $\ker H$ and $\alpha_i \in \mathbb{N}$ for $i=1,2,\cdots,(r-1)$. 

As a result, we get \begin{eqnarray*}
    \sum\limits_{m=0}^{r} v(m+1)g(m,x) = \sum\limits_{i=1}^{r-1} \alpha_i \left( g(i-1,x) -2g(i,x) + g(i+1,x) \right)
\end{eqnarray*}
By Lemma~\ref{g_p_diff_lemma}, we know that $g(m+1,x) - g(m,x) = {{c-2} \choose {x-1}} - 2{{c-r-1} \choose {x-1}} + {{c-2r+m} \choose {x-1}}$ and therefore \begin{eqnarray}
    g(i-1,x) -2g(i,x) + g(i+1,x) & = & g(i+1,x) - g(i,x) - \left( g(i,x) - g(i-1,x) \right) \nonumber \\
    & = & {{c-2r+i} \choose {x-1}} -  {{c-2r+i-1} \choose {x-1}} \nonumber \\ & = & \frac{x-1}{c-2r+i-x+1}{{c-2r+i-1} \choose {x-1}} \ge 0
\end{eqnarray}
Thus we have \begin{eqnarray}{\label{ineq_gre}}
    \sum\limits_{m=0}^{r} v(m+1)g(m,x) & = & \sum\limits_{i=1}^{r-1} \alpha_i \left( g(i-1,x) -2g(i,x) + g(i+1,x) \right) \nonumber \\ & = & \sum\limits_{i=1}^{r-1} \frac{\alpha_i(x-1)}{c-2r+i-x+1}{{c-2r+i-1} \choose {x-1}} \ge 0
\end{eqnarray}
As the above is true for every permissible $v \in \ker H$ such that $h_D + v$ is a shape vector, therefore we conclude that proximally compact balanced $(n,k,r,c)$ assignment has the least variance amongst all balanced $(n,k,r,c)$ assignments.
\end{proof}

\subsection{Proof of Theorem \ref{max_var}}\label{sec:max_var_proof}
\begin{theorem*}
 If a stretched compact balanced $(n,k,r,c)$ job assignment exists, then it has the largest variance amongst all balanced $(n,k,r,c)$ job assignments. 
\end{theorem*}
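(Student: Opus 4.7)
The plan is to mirror the strategy of Theorem \ref{min_var}, with the sign pattern on the kernel vectors reversed. Let $h_D$ be the shape vector of a stretched compact $(n,k,r,c)$ assignment, so $h_D(1) = \tfrac{n(n-k)}{2}$, $h_D(r+1) = \tfrac{n(k-1)}{2}$, and $h_D(i) = 0$ for $2 \leq i \leq r$. Any other balanced $(n,k,r,c)$ assignment has shape vector $h_D + v$ for some $v \in \ker H$, and since $h_D$ vanishes throughout the interior while shape vectors must have non-negative entries, the permissibility constraints on $v$ reduce to $v(i) \geq 0$ for $2 \leq i \leq r$ (the entries $v(1)$ and $v(r+1)$ are unconstrained in sign, bounded only by the strictly positive values $h_D(1), h_D(r+1)$). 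By the variance decomposition in \eqref{var_sim_exp}, it once again suffices to show that $\sum_{m=0}^{r} v(m+1)g(m,x) \leq 0$ for every permissible $v$.

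Next I would expand $v = \sum_{i=1}^{r-1}\alpha_i h_i$ in the kernel basis \eqref{basis_ker}, which gives $v(1) = \alpha_1$, $v(r+1) = \alpha_{r-1}$, and $v(j) = \alpha_{j-2} - 2\alpha_{j-1} + \alpha_j$ for $2 \leq j \leq r$, under the natural boundary conventions $\alpha_0 := 0$ and $\alpha_r := 0$. With these conventions, the permissibility constraints $v(j) \geq 0$ for $2 \leq j \leq r$ translate exactly to the statement that the sequence $(\alpha_j)_{j=0}^{r}$ has non-decreasing first differences (i.e.\ is convex) and is pinned to zero at both endpoints.

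The crucial step is the classical observation that a convex sequence with zero boundary values is everywhere non-positive: for each $1 \leq j \leq r-1$,
\begin{equation*}
\alpha_j \;\leq\; \tfrac{r-j}{r}\alpha_0 + \tfrac{j}{r}\alpha_r \;=\; 0.
\end{equation*}
Combining $\alpha_i \leq 0$ with the convexity of $g$ in its first argument proved in Theorem \ref{min_var} via Lemma \ref{g_p_diff_lemma}, namely $g(i-1,x) - 2g(i,x) + g(i+1,x) = \tfrac{x-1}{c-2r+i-x+1}\binom{c-2r+i-1}{x-1} \geq 0$, yields
\begin{equation*}
\sum_{m=0}^{r} v(m+1)g(m,x) \;=\; \sum_{i=1}^{r-1}\alpha_i \bigl[g(i-1,x) - 2g(i,x) + g(i+1,x)\bigr] \;\leq\; 0,
\end{equation*}
which is the desired inequality and completes the maximality claim.

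The main obstacle I anticipate lies in the sign analysis for the $\alpha_i$: whereas Theorem \ref{min_var} exploits two adjacent unconstrained interior coordinates to induct outward and conclude $\alpha_i \geq 0$, here the two unconstrained coordinates sit at the extremes of $v$, so the inductive argument does not port over directly. The natural substitute is the convex-sequence-with-zero-boundary viewpoint, packaged via the auxiliary conventions $\alpha_0 = \alpha_r = 0$; once this reformulation is justified and $\alpha_i \leq 0$ is in hand, the remainder of the proof is a clean reuse of the convexity of $g$ already established in the minimum-variance argument.
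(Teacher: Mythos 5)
Your proof is correct and follows essentially the same route as the paper's: decompose $v$ in the kernel basis \eqref{basis_ker}, show all $\alpha_i \le 0$, and combine with the non-negativity of the second differences of $g$ from Lemma~\ref{g_p_diff_lemma}. The only difference is cosmetic --- the paper establishes $\alpha_i \le 0$ via the explicit inductive chain $k\alpha_{k-1} \le (k-1)\alpha_k$ culminating in $\alpha_1 \le \tfrac{\alpha_2}{2} \le \cdots \le \tfrac{\alpha_{r-1}}{r-1} \le 0$, whereas your padding $\alpha_0 = \alpha_r = 0$ and invoking the chord inequality for a convex sequence with zero boundary is a cleaner packaging of the same fact.
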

\begin{proof}
   Let $h_D$ be the shape vector corresponding to a stretched compact balanced $(n,k,r,c)$ job assignment. Then $h_D(i) = 0$ for all $i \neq 1, r+1$. Any balanced $(n,k,r,c)$ job assignment $D_1$ has a shape vector $h_D + v$ where $v \in \ker H$ with $H$ as defined in \eqref{defn_of_A}. Similar to the proof of Theorem~\ref{min_var}, we now characterize $v \in \ker H$ that can arise from some balanced $(n,k,r,c)$ assignment. Note that $v(i) \ge 0$ for all $i \neq 1,r+1$ and $\sum\limits_{i=1}^{r+1} v(i) = 0$. Hence $v(1) \le 0$ and $v(r+1) \le 0$. As $v \in \ker H$, therefore $v = \sum\limits_{i=1}^{r-1}\alpha_ih_i$ where $\left\{h_i\right\}$ is the basis for $\ker H$ as listed in \eqref{basis_ker}. Following the proof of Theorem~\ref{min_var}, it is now enough to show that all $\alpha_i \le 0$ for any permissible nonzero vector $v \in \ker H$ since the expression in \eqref{ineq_gre} would then be rendered negative, thereby signalling a decrease in the variance for all balanced $(n,k,r,c)$ assignments with shape vector $h_D + v$. 

   Observe that as $v(1) \le 0$, therefore $\alpha_1 = v(1) \le 0$. If $r = 2$, then $\ker H$ is one dimensional and therefore $v(1) = v(3) \le 0$, while $v(2) = -2v(1) \ge 0 \ge v(1)$. If $r > 2$, then as $v(2) \ge 0$ and  $v(2) = \alpha_2 - 2\alpha_1$, therefore $\alpha_2 \ge 2\alpha_1$. Now if $r = 3$, then $\ker H$ is two dimensional and $\alpha_2 = v(4) \le 0$. Further, $v(3) \ge 0$ implies $\alpha_1 - 2\alpha_2 \ge 0$ and therefore $\alpha_1 \ge 2\alpha_2$. Thus again, for $r = 3$, both $\alpha_1, \alpha_2$ are negative and their values are mutually bound by the constraints $\frac{\alpha_1}{2} \ge \alpha_2 \ge 2\alpha_1$ and $\frac{\alpha_2}{2} \ge \alpha_1 \ge 2\alpha_2$.

   Now, we consider the cases of $r > 3$. In this case, $v(3) \ge 0$ translates to $2\alpha_2 \le \alpha_1 + \alpha_3$. Therefore $4\alpha_2 \le 2\alpha_1 + 2\alpha_3 \le \alpha_2 + 2\alpha_3$ (by using the condition obtained from $v(2) \ge 0$) which in turn translates to $3\alpha_2 \le 2\alpha_3$. From the condition $v(4) \ge 0$, we get $2\alpha_3 \le \alpha_2 + \alpha_4$ which can now be manipulated to $6\alpha_3 \le 3\alpha_2 + 3\alpha_4 \le 2\alpha_3 + 3\alpha_4$ which gives us $4\alpha_3 \le 3\alpha_4$. Following the steps mentioned above, one can use the subsequent $v(k) \ge 0$ to show that $k\alpha_{k-1} \le (k-1)\alpha_k$ for $2 \le k \le (r-1)$. Combining all these inequalities, one gets $\alpha_1 \le \frac{\alpha_2}{2} \le \frac{\alpha_3}{3} \le \cdots \le \frac{\alpha_k}{k} \le \cdots \le \frac{\alpha_{r-1}}{r-1} \le 0$. This proves that all the $\alpha_i \le 0$. 
\end{proof}

Interestingly, in the proof above, one could have started from the other end and as already shown for the case $r = 3$, one can get another set of constraints $\alpha_{r-1} \le \frac{\alpha_{r-2}}{2} \le \cdots \le \frac{\alpha_{r-k}}{k} \le \cdots \le \frac{\alpha_1}{r-1} \le 0$. These interwoven constraints restrict the possible values for the $\alpha_i$ where $v = \sum\limits_{i=1}^{r-1}\alpha_ih_i$. 

\subsection{Proof of Lemma \ref{g_p_0_lemma}}\label{sec:g_p_0_lemma_proof}

\begin{lemma*}
    For a balanced $(n,k,r,c)$ assignment, the value of $g(0,x)$ is given by 
    \begin{align}{\label{g_0_exp}}
    g(0,x) =  r^2 {{c-2 \choose x-2}} & - 2r {{c-1 \choose x-1}}  + {{c \choose x}} - 2{{c-r \choose x}} + 2 r{{c-r-1 \choose x-1}} + {c-2r \choose x}
\end{align}
\end{lemma*}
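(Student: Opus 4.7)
The plan is to avoid computing $g(0,x)$ by directly summing over the number of servers in $\hat S$ that meet $S_i$ or $S_j$, and instead to use a single algebraic rewrite that turns the awkward indicator function into something much friendlier. Specifically, for any nonnegative integer $y$ one has the identity
\begin{equation*}
(y-1)\mathbbm{1}_{y>1} \;=\; y-\mathbbm{1}_{y\ge 1},
\end{equation*}
which is easily checked at $y=0,1$ and $y\ge 2$. Applying this to both $\mathfrak{n}^D_{i,\hat S}$ and $\mathfrak{n}^D_{j,\hat S}$ and expanding the product gives
\begin{equation*}
g(0,x)=\sum_{\substack{\hat S\subset\mathcal S\\ |\hat S|=x}}\Bigl(\mathfrak{n}^D_{i,\hat S}\mathfrak{n}^D_{j,\hat S}-\mathfrak{n}^D_{i,\hat S}\mathbbm{1}_{\mathfrak{n}^D_{j,\hat S}\ge 1}-\mathfrak{n}^D_{j,\hat S}\mathbbm{1}_{\mathfrak{n}^D_{i,\hat S}\ge 1}+\mathbbm{1}_{\mathfrak{n}^D_{i,\hat S}\ge 1}\mathbbm{1}_{\mathfrak{n}^D_{j,\hat S}\ge 1}\Bigr).
\end{equation*}
Writing $S_i,S_j\subset\mathcal S$ for the sets of $r$ servers hosting $a_i,a_j$, the assumption $m=0$ says $S_i\cap S_j=\emptyset$, so $|S_i\cup S_j|=2r$. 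Each of the four resulting sums is now a clean counting problem.

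First I would evaluate the quadratic piece: $\sum_{\hat S}\mathfrak{n}^D_{i,\hat S}\mathfrak{n}^D_{j,\hat S}$ counts triples $(\hat S,s,s')$ with $s\in S_i\cap\hat S$ and $s'\in S_j\cap\hat S$. Since $S_i$ and $S_j$ are disjoint, there are $r^2$ ordered pairs $(s,s')$, and each extends to $\binom{c-2}{x-2}$ choices for $\hat S$, giving $r^2\binom{c-2}{x-2}$. Next, for the mixed term $\sum_{\hat S}\mathfrak{n}^D_{i,\hat S}\mathbbm{1}_{\mathfrak{n}^D_{j,\hat S}\ge 1}$, for each of the $r$ servers $s\in S_i$ (none of which lie in $S_j$) I count the $\hat S$ of size $x$ containing $s$ and meeting $S_j$; by inclusion--exclusion this is $\binom{c-1}{x-1}-\binom{c-r-1}{x-1}$. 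Summing over $s\in S_i$ and doubling (by $i\leftrightarrow j$ symmetry) yields the combined contribution $-2r\bigl[\binom{c-1}{x-1}-\binom{c-r-1}{x-1}\bigr]$. Finally, the pure indicator sum $\sum_{\hat S}\mathbbm{1}_{\hat S\cap S_i\ne\emptyset}\mathbbm{1}_{\hat S\cap S_j\ne\emptyset}$ is counted by inclusion--exclusion on the complementary events ``misses $S_i$'' and ``misses $S_j$'', giving $\binom{c}{x}-2\binom{c-r}{x}+\binom{c-2r}{x}$, where the last term uses disjointness of $S_i$ and $S_j$.

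Adding these four contributions gives exactly
\begin{equation*}
g(0,x)=r^2\binom{c-2}{x-2}-2r\binom{c-1}{x-1}+2r\binom{c-r-1}{x-1}+\binom{c}{x}-2\binom{c-r}{x}+\binom{c-2r}{x},
\end{equation*}
which is the claimed formula. The proof is essentially bookkeeping once the identity $(y-1)\mathbbm{1}_{y>1}=y-\mathbbm{1}_{y\ge 1}$ is in place; the only point requiring any care is that the disjointness $S_i\cap S_j=\emptyset$ must be invoked in each term (it is what makes the quadratic term reduce to $r^2$ ordered pairs rather than requiring a correction for coincidences, and it is what makes the two-miss count a clean $\binom{c-2r}{x}$). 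The main ``obstacle'', if one exists, is organizational: remembering that the identity is an equality of integers (not an approximation) and not double-counting when combining the $i\leftrightarrow j$ symmetric terms.
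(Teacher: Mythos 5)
Your proof is correct, and it reaches the stated formula by a genuinely different route than the paper. The paper fixes a disjoint pair $S_i,S_j$ and writes $g(0,x)=\sum_{t=2}^{r}\sum_{u=2}^{r}(t-1)(u-1)\binom{r}{t}\binom{r}{u}\binom{c-2r}{x-t-u}$, then evaluates this double sum by recognizing it as the coefficient of $y^x$ in $\bigl(ry(1+y)^{r-1}+1-(1+y)^r\bigr)^2(1+y)^{c-2r}$. You instead apply the pointwise identity $(y-1)\mathbbm{1}_{y>1}=y-\mathbbm{1}_{y\ge 1}$ to each factor, expand, and reduce $g(0,x)$ to four elementary counting problems, each settled by direct counting or inclusion--exclusion; all four counts and the use of disjointness ($|S_i\cup S_j|=2r$) check out, and the pieces sum to exactly the claimed expression. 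The two arguments are closely related under the hood --- the paper's generating function $ry(1+y)^{r-1}+1-(1+y)^r$ is precisely the generating function of the sequence $t-\mathbbm{1}_{t\ge 1}$ weighted by $\binom{r}{t}$, so your identity is the combinatorial shadow of their coefficient extraction --- but your version avoids forming the double sum and the polynomial multiplication altogether, making the computation more transparent and less error-prone; the paper's version has the advantage of extending mechanically to related sums (e.g., the $t^2$-weighted sums used elsewhere in Appendix B) by swapping in a different polynomial.
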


% \begin{lemma}{\label{g_p_diff_lemma}}
% For a balanced $(n,k,r,c)$ assignment, the values for $g(m,x)$ are related in the following fashion 
% \begin{align}{\label{g_p_diff}}
%     g(m+1,x) - g(m,x) = {c-2 \choose x-1} & - 2{c-r-1 \choose x-1} \nonumber\\
%     & + {c-2r+m \choose x-1}
% \end{align}
% \end{lemma}
\begin{proof}

 Consider a pair of jobs $(a_i,a_j)$ such that no server has been assigned both $a_i$ and $a_j$ together. Therefore there are precisely $r$ servers that have been assigned $a_i$ and not $a_j$. Another $r$ servers that are assigned $a_j$ but not $a_i$ while the remaining $c-2r$ servers are assigned neither $a_i$ nor $a_j$. Then \begin{equation}\label{gfirst}
   g(0,x) = \sum\limits_{\substack{\hat{S} \subset \mathcal{S};\\|\hat{S}|=x}} (\mathfrak{n}^D_{i,\hat{S}}-1) \mathbbm{1}_{\mathfrak{n}^D_{i,\hat{S}}>1} (\mathfrak{n}^D_{j,\hat{S}}-1) \mathbbm{1}_{\mathfrak{n}^D_{j,\hat{S}}>1} = \sum_{t=2}^r \sum_{u=2}^r(t-1)(u-1){r \choose t}{r \choose u}{c-2r \choose x-t-u} 
\end{equation}
Clearly any subset of servers $\hat{S} \subset \mathcal{S}$ of cardinality $x$ that has at most only one instance of the job $a_i$ assigned amongst its members does not contribute to the sum. Ditto for $a_j$. Therefore, one needs to consider only those subsets $\hat{S}$ of servers that contain at least two servers that are assigned $a_i$ and at least two servers that are assigned $a_j$. In the final expression of equation~\eqref{gfirst}, ${r \choose t}{r \choose u}{c-2r \choose x-t-u}$ counts the number of subsets of servers $\hat{S}$ of cardinality $x$ that contain $t$ servers assigned $a_i$, $u$ servers assigned $a_j$ and $x-t-u$ servers that have been assigned neither. The summation limits ensure that there are at least $2$ servers assigned $a_i$ and at least $2$ servers assigned $a_j$. The expression $(t-1)(u-1)$ is the contribution of each subset $\hat{S}$ that contains $t$ copies of $a_i$ and $u$ copies of $a_j$ assigned to its members. A closed form solution of the expression for $g(0,x)$ can be obtained by considering 
\begin{equation}{\label{binomial_corollary_1}}
        ry(1+y)^{r-1} + 1 - (1+y)^r = \sum_{t=1}^{r} (t-1) {{r \choose t}} y^t
\end{equation}

\begin{equation}{\label{binomial_corollary_2}}
        ry(1+y)^{r-1} + 1 - (1+y)^r = \sum_{u=1}^{r} (u-1) {{r \choose u}} y^u
\end{equation}

\begin{equation}{\label{binomial_corollary_3}}
        (1+y)^{c-2r} = \sum_{v=0}^{c-2r} {{c-2r \choose v}} y^v
\end{equation}

Multiplying these three expressions \eqref{binomial_corollary_1}, \eqref{binomial_corollary_2} and \eqref{binomial_corollary_3}, we get $\sum_{t=2}^{r}\sum_{u=2}^{r} (t-1)(u-1) {{c-2r \choose x-t-u}}{{r \choose t}}{{r \choose u}}$ to be the coefficient of $y^x$ in $\left(ry(1+y)^{r-1} + 1 - (1+y)^r\right)^2(1+y)^{c-2r}$. %Note that since $x>c-r$, we obtain 

\begin{align}
    & \sum_{t=2}^{r}\sum_{u=2}^{r} (t-1)(u-1) {{c-2r \choose x-t-u}}{{r \choose t}}{{r\choose u}}\nonumber\\ 
    = &r^2 {{c-2 \choose x-2}} - 2r {{c-1 \choose x-1}} + {{c \choose x}} - 2{{c-r \choose x}} + 2 r{{c-r-1 \choose x-1}} + {c-2r \choose x}
\end{align}

%Thus, we prove the expression of $g(0,x)$ from eqaution \eqref{gfirst} in Lemma \ref{g_p_0_lemma}.
%Thus the expression for $g(0,x)$ depends only $c,r$ and $x$. %We now set up a recursion that can help in determining $g(m,x)$ for $1 \le m \le r$.
\end{proof}

\subsection{Proof of Lemma \ref{g_p_diff_lemma}}\label{sec:g_p_diff_lemma_proof}
\begin{lemma*}
For a balanced $(n,k,r,c)$ assignment, the values for $g(m,x)$ are related in the following fashion 
\begin{align}{\label{g_p_diff}}
    g(m+1,x) - g(m,x) = {c-2 \choose x-1} & - 2{c-r-1 \choose x-1} + {c-2r+m \choose x-1}
\end{align}
\end{lemma*}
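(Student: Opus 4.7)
The plan is to derive a closed-form expression for $g(m,x)$ as a function of $m$, and then read off $g(m+1,x) - g(m,x)$ by inspection. Given that the pair $(a_i, a_j)$ is assigned together to exactly $m$ servers, the $c$ servers partition into four disjoint groups: $m$ servers assigned to both jobs, $r-m$ assigned only to $a_i$, $r-m$ assigned only to $a_j$, and $c-2r+m$ assigned to neither.

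The first step is to use the identity $(n-1)\mathbbm{1}_{n>1} = n - \mathbbm{1}_{n \ge 1}$, which holds for every nonnegative integer $n$. Multiplying two such factors out rewrites the summand of $g(m,x)$ as
\[
\mathfrak{n}^D_{i,\hat{S}}\mathfrak{n}^D_{j,\hat{S}} - \mathfrak{n}^D_{i,\hat{S}}\mathbbm{1}_{\mathfrak{n}^D_{j,\hat{S}} \ge 1} - \mathfrak{n}^D_{j,\hat{S}}\mathbbm{1}_{\mathfrak{n}^D_{i,\hat{S}} \ge 1} + \mathbbm{1}_{\mathfrak{n}^D_{i,\hat{S}} \ge 1}\mathbbm{1}_{\mathfrak{n}^D_{j,\hat{S}} \ge 1},
\]
so $g(m,x)$ splits into four sums over $\hat{S}$ that I plan to evaluate by direct counting. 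For the $\sum_{\hat{S}} \mathfrak{n}^D_{i,\hat{S}}\mathfrak{n}^D_{j,\hat{S}}$ term, I would expand the $\mathfrak{n}$'s as sums of server indicators, swap the sums, and count ordered pairs $(v,w)$ of servers with $v$ carrying $a_i$ and $w$ carrying $a_j$: the diagonal $v=w$ contributes $m{c-1 \choose x-1}$ and the $r^2 - m$ off-diagonal pairs contribute $(r^2-m){c-2 \choose x-2}$. The two cross terms are handled by complementing on $\mathfrak{n}^D_{j,\hat{S}} = 0$, which confines $\hat{S}$ to the $c-r$ servers not carrying $a_j$, of which $r-m$ carry $a_i$; this yields $r{c-1 \choose x-1} - (r-m){c-r-1 \choose x-1}$ per cross term. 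The last term is a one-line inclusion-exclusion giving ${c \choose x} - 2{c-r \choose x} + {c-2r+m \choose x}$.

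Assembling these produces a closed form for $g(m,x)$ whose only $m$-dependent pieces are $m{c-1 \choose x-1}$, $-m{c-2 \choose x-2}$, $-2m{c-r-1 \choose x-1}$, and ${c-2r+m \choose x}$. Everything else cancels in $g(m+1,x) - g(m,x)$, leaving ${c-1 \choose x-1} - {c-2 \choose x-2} - 2{c-r-1 \choose x-1} + {c-2r+m+1 \choose x} - {c-2r+m \choose x}$. Two applications of Pascal's rule then collapse ${c-1 \choose x-1} - {c-2 \choose x-2}$ to ${c-2 \choose x-1}$ and ${c-2r+m+1 \choose x} - {c-2r+m \choose x}$ to ${c-2r+m \choose x-1}$, matching the claimed identity. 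As a consistency check, specializing the closed form to $m=0$ should reproduce the expression in Lemma \ref{g_p_0_lemma}.

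The main obstacle I anticipate is the bookkeeping in the $\sum \mathfrak{n}^D_{i,\hat{S}}\mathfrak{n}^D_{j,\hat{S}}$ computation, since correctly subtracting the $m$-fold overlap between $a_i$- and $a_j$-carrying servers (the diagonal $v=w$ case) is the one step where a sign or factor error would propagate through the whole derivation. A more symmetric alternative would mirror the generating-function route used for Lemma \ref{g_p_0_lemma}, multiplying $(1+yab)^m (1+ya)^{r-m}(1+yb)^{r-m}(1+y)^{c-2r+m}$ and extracting an appropriate bilinear functional in $a$ and $b$; however, the indicator-expansion route above sidesteps that non-polynomial evaluation step and seems more transparent.
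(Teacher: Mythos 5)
Your proposal is correct, and it takes a genuinely different route from the paper. The paper proves the lemma by a coupling argument: it fixes a pair assigned to $m$ common servers and a pair assigned to $m+1$ common servers, builds a bijection $f:\mathcal{S}\to\mathcal{S}$ aligning the four server groups up to two ``special'' servers, observes that the contributions of $\hat{S}$ and $f(\hat{S})$ cancel unless $\hat{S}$ contains exactly one of the two special servers, and then counts (by inclusion--exclusion) the subsets $\bar{S}$ of cardinality $x-1$ that yield a net contribution of $+1$ to the difference. You instead compute $g(m,x)$ in closed form by writing $(\mathfrak{n}-1)\mathbbm{1}_{\mathfrak{n}>1}=\mathfrak{n}-\mathbbm{1}_{\mathfrak{n}\ge 1}$, expanding the product into four sums, and evaluating each by elementary counting; the $m$-dependence isolates to $m{c-1\choose x-1}-m{c-2\choose x-2}-2m{c-r-1\choose x-1}+{c-2r+m\choose x}$, and two applications of Pascal's rule give exactly the claimed difference. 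I verified your four sub-counts (diagonal versus off-diagonal ordered pairs for $\sum_{\hat{S}}\mathfrak{n}_i\mathfrak{n}_j$, the complementation for the cross terms, and the inclusion--exclusion for the joint indicator), and your closed form at $m=0$ does reproduce Lemma \ref{g_p_0_lemma}. What each approach buys: the paper's bijection is a pure difference argument that never needs the value of $g(m,x)$ and makes the monotonicity in $m$ transparent; your route is heavier on bookkeeping but yields an explicit formula for every $g(m,x)$, subsuming both Lemma \ref{g_p_0_lemma} and Lemma \ref{g_p_diff_lemma} in one computation, and avoids having to argue that two pairs with the prescribed intersection patterns actually coexist in the assignment.
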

\begin{proof}
 Let us consider a pair of jobs $(a_i,a_j)$ that have been assigned together to precisely $m$ servers. Without loss of generality, let $s_1, s_2, \cdots, s_m$ be the servers that are assigned both the jobs $a_i, a_j$. Let $s_{m+1}, s_{m+2}, \cdots, s_r$ be the servers that have been assigned $a_i$ but not $a_j$. Assume servers $s_{r+1}, s_{r+2}, \cdots, s_{2r-m}$ are the servers assigned $a_j$ but not $a_i$. The last $c-2r+m$ servers $s_{2r-m+1}, s_{2r-m+2}, \cdots, s_c$ are the ones that have not been assigned $a_i$ or $a_j$. 
 
 Let another pair of jobs $(a_{i_1},a_{j_1})$ be such that they have been assigned together to precisely $m+1$ servers. We now consider a bijective map $f : \mathcal{S} \rightarrow \mathcal{S}$ described in the following fashion. Let $f(s_{\ell})$ for $1 \le \ell \le m+1$ be servers that have been assigned both the jobs $a_{i_1}$ and $a_{j_1}$. Let $f(s_{\ell})$ for $m+2 \le \ell \le r$ be servers that have been assigned $a_{i_1}$ but not $a_{j_1}$. Further let $f(s_{\ell})$ for $r+2 \le \ell \le 2r-m$ be servers that have been assigned $a_{j_1}$ but not $a_{i_1}$. The rest of $f(s_{\ell})$ have not been assigned $a_{i_1}$ or $a_{j_1}$. Thus there are two special servers, namely $s_{m+1}$ (which does job $a_i$ but not $a_j$) and $s_{r+1}$ (which does job $a_j$ but not $a_i$), and whose images $f(s_{m+1})$ (which does both the jobs $a_{i_1}$ and $a_{j_1}$) and $f(s_{r+1})$ (which does neither $a_{i_1}$ nor $a_{j_1}$) that we shall pay special attention to.

 For any $\hat{S} \subset \mathcal{S}$ of cardinality $x$, let us compare its contribution to the sum $g(m,x)$ with the contribution of $f(\hat{S})$ towards $g(m+1,x)$. Clearly, if $\hat{S} \subset \mathcal{S}\setminus \left\{ s_{m+1}, s_{r+1} \right\}$, then the contribution of $\hat{S}$ towards $g(m,x)$ is exactly the same as the contribution of $f(\hat{S})$ to $g(m+1,x)$. Similarly, if $s_{m+1}, s_{r+1} \in \hat{S}$, then contribution of $\hat{S}$ towards $g(m,x)$ and that of $f(\hat{S})$ towards $g(m+1,x)$ is exactly the same. Therefore it suffices to only consider those subsets $\hat{S}$ of cardinality $x$ that contain exactly one of the two special servers $\left\{ s_{m+1}, s_{r+1} \right\}$ to evaluate the difference $g(m+1,x) - g(m,x)$. Hence we look at subsets $\hat{S}$ that are formed by taking either $s_{m+1}$ or $s_{r+1}$ along with  $\bar{S} \subset \mathcal{S} \setminus \left\{ s_{m+1},s_{r+1} \right\}$ of cardinality $x-1$.  

 Let $\bar{S} \subset \mathcal{S} \setminus \left\{ s_{m+1},s_{r+1} \right\}$ of cardinality $x-1$ contain $\alpha > 0$ instances of job $a_i$ and $\beta > 0$ instances of job $a_j$ assigned to its servers. Then $\bar{S} \cup \{ s_{m+1} \}$ contributes $\alpha(\beta - 1)$ towards $g(m,x)$, whereas $\bar{S} \cup \{ s_{r+1} \}$ contributes $(\alpha - 1)\beta$ towards $g(m,x)$. At the same time, $f(\bar{S}) \cup \{ f(s_{m+1}) \}$ contributes $\alpha\beta$ towards $g(m+1,x)$, whereas $f(\bar{S}) \cup \{ f(s_{r+1}) \}$ contributes $(\alpha - 1)(\beta - 1)$ towards $g(m+1,x)$. Thus, one can evaluate the contribution of $\bar{S}$ towards the difference $g(m+1,x)-g(m,x)$ to be $\alpha\beta + (\alpha - 1)(\beta - 1) - \alpha(\beta - 1) - (\alpha - 1)\beta = 1$. So every subset $\bar{S} \subset \mathcal{S} \setminus \{ s_{m+1}, s_{r+1} \}$ of cardinality $x-1$, whose servers have at least one instance each of jobs $a_i$ and $a_j$ assigned to them, contributes a net change of $1$ towards the difference $g(m+1,x)-g(m,x)$. One needs to just count the number of subsets $\bar{S}$ of cardinality $x-1$ that satisfy these conditions to find $g(m+1,x) - g(m,x)$. 

 Total number of subsets of cardinality $x-1$ of the set $\mathcal{S} \setminus \{ s_{m+1}, s_{r+1} \}$ is given by ${c-2 \choose x-1}$. If the subset $\bar{S}$ is one of the ${c-r-1 \choose x-1}$ subsets chosen from the servers $\{s_{r+2}, s_{r+3}, \cdots s_c \}$, then the job $a_i$ is not assigned to any of its servers. Similarly, if $\bar{S}$ is one of the ${c-r-1 \choose x-1}$ chosen from the servers $\{s_{m+2}, s_{m+3}, \cdots, s_{r} \} \cup \{ s_{2r-m+1}, s_{2r-m+2}, \cdots, s_c\}$, then it does not have any instance of the job $a_j$ assigned to its servers. As these subsets $\bar{S}$ do not contribute to the difference $g(m+1,x)-g(m,x)$, their numbers have to be subtracted from ${c-2 \choose x-1}$. In the process, $\bar{S} \subset \{ s_{2r-m+1}, s_{2r-m+2}, \cdots, s_c \}$ have been subtracted twice and therefore  ${c-2r+m \choose x-1}$ needs to added back (inclusion-exclusion principle), thereby giving \begin{equation*}
     g(m+1,x) - g(m,x) = {c-2 \choose x-1} - 2{c-r-1 \choose x-1} + {c-2r+m \choose x-1}
 \end{equation*} 
\end{proof}
\subsection{Proof of Theorem \ref{all_mean_sampled} and Theorem \ref{max_min_var_sampled}} \label{sec:max_min_var_sampled_proof}

\begin{theorem*}
Consider any balanced $(n,k,r,c)$ assignment $D$, where each server is independently and equally likely to straggle with probability $p$. The expectation of the number of distinct completed jobs $d$ received is %the same for every assignment $D$ %amongst all balanced $(n,k,r,c)$ assignments and is given by 
 
 \begin{equation}
  \mathbbm{E}_{D}[d]= n- n(p)^r
 \end{equation}
 \end{theorem*}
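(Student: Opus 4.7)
The plan is to compute the expectation by an indicator-variable decomposition, exploiting the fact that in the i.i.d.\ straggling model the event \{job $a_i$ is received\} depends only on the status of the $r$ servers to which $a_i$ has been assigned. This bypasses the need to condition on $x$ and sum a binomial series, so the statement reduces to a one-line probability calculation.

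First I would introduce, for each job $a_i$, the indicator $Y_i = \mathbbm{1}[a_i \text{ is received by the master}]$. Then $d = \sum_{i=1}^n Y_i$ and by linearity $\mathbbm{E}_D[d] = \sum_{i=1}^n \Pr[Y_i = 1]$. In a balanced $(n,k,r,c)$ assignment each job is sent to exactly $r$ distinct servers, and the event $\{Y_i = 0\}$ occurs precisely when all $r$ of those servers straggle. Because stragglings are independent with probability $p$ each, $\Pr[Y_i = 0] = p^r$, hence $\Pr[Y_i = 1] = 1 - p^r$. Summing over $i$ yields $\mathbbm{E}_D[d] = n(1 - p^r)$, independent of the particular balanced assignment $D$.

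As an alternative path (matching the tower-rule hint in the paper), I would instead apply the law of total expectation, $\mathbbm{E}_D[d] = \mathbbm{E}_x\bigl[\mathbbm{E}_{D,x}[d]\bigr]$, plug in Theorem~\ref{all_mean} to get $\mathbbm{E}_{D,x}[d] = n\bigl(1 - \binom{c-r}{x}/\binom{c}{x}\bigr)$, and then evaluate
\[
\mathbbm{E}_x\!\left[\frac{\binom{c-r}{x}}{\binom{c}{x}}\right] = \sum_{x=0}^{c} \binom{c}{x}(1-p)^x p^{c-x} \cdot \frac{\binom{c-r}{x}}{\binom{c}{x}} = p^r \sum_{x=0}^{c-r} \binom{c-r}{x}(1-p)^x p^{c-r-x} = p^r,
\]
by the binomial theorem, giving the same answer. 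There is no real obstacle here; the only thing to check is that both derivations use balancedness of $D$ only through the fact that every job sits on exactly $r$ servers, which is precisely why the expectation is distribution-invariant.
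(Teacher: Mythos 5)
Your proposal is correct, and your primary argument is genuinely different from (and more elementary than) the paper's. The paper proves this via the tower rule: it conditions on the number $x$ of non-stragglers, invokes Theorem~\ref{all_mean} for $\mathbbm{E}_{D,x}[d]$, and sums against the binomial pmf of $x$ --- exactly your ``alternative path.'' Your main argument instead decomposes $d = \sum_i Y_i$ into per-job indicators and observes that in the i.i.d.\ model $\Pr[Y_i=0]=p^r$ directly, which avoids Theorem~\ref{all_mean} and the binomial identity entirely and makes transparent that only the job-regularity half of balancedness (each job on exactly $r$ servers) is used. What the paper's route buys is uniformity with the rest of Section~\ref{sec_randomness_return_servers}: the same conditioning-on-$x$ machinery is reused for the variance result (Theorem~\ref{max_min_var_sampled} via Eve's law), where a per-job indicator decomposition would not suffice because the $Y_i$ are correlated. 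One further point in your favor: your binomial-theorem evaluation is carried out correctly and yields $p^r$, whereas the paper's appendix computation contains a slip in its last line, factoring out $(1-p)^r$ instead of $p^r$ (note $p^{c-x}=p^r\,p^{c-r-x}$) and thus ending with $n-n(1-p)^r$, which contradicts the theorem statement $n-np^r$ that both of your derivations confirm.
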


 \begin{proof}
     Observe that under this setup, the number of servers that communicates with the master $x$ is be given by the  binomial distribution $B(c,1-p)$. Also, observe that under this setup conditioned on $x$, any set of $x$ servers is equally likely to communicate with the master.

     We can thus say that 
\begin{align}
    \mathbbm{E}_{D}[d]= & \mathbbm{E}_{x\sim B(c,p)}\mathbbm{E}_{D,x}[d] \\
    & \overset{(a)}{=} \sum\limits_{x=0}^{c} n\left(1-\frac{{c-r \choose x}}{{c \choose x}}\right) {{c \choose x}} p^{c-x}(1-p)^{x}\\
    & = n - n(1-p)^r \sum\limits_{x=0}^{c} {{c-r \choose x}} p^{(c-r-x)}(1-p)^{x} = n- n(1-p)^r
\end{align}

Note $(a)$ follows from the expression of mean in Theorem \ref{all_mean}. %Using a very similar technique, we can prove a result of $\sigma_{D}(d)$ as well.
 \end{proof}

\begin{theorem*}
    Let us consider $x \sim \mathcal{P}$. Conditioned on $x$, we study the setup where any set of $x$ servers is equally likely to communicate with the master. Then the proximally compact assignment (if it exists) and stretched compace assignment (if exists) attain the least and the largest variance respectively on the number of distinct jobs received at master amongst all balanced $(n,k,r,c)$ assignment schemes.
\end{theorem*}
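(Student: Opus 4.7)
The plan is to decompose the total variance via the law of total variance (Eve's law) with respect to the randomness in $x \sim \mathcal{P}$. Writing $V_D$ for the overall variance of the number of distinct jobs $d$ received by the master under assignment $D$, we have
\begin{equation*}
V_D \;=\; \mathbb{E}_{x \sim \mathcal{P}}\bigl[\sigma_{D,x}(d)\bigr] \;+\; \mathrm{Var}_{x \sim \mathcal{P}}\bigl(\mathbb{E}_{D,x}[d]\bigr).
\end{equation*}

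First, I would handle the outer variance of the conditional mean. Theorem \ref{all_mean} shows that $\mathbb{E}_{D,x}[d] = n\left(1 - {c-r \choose x}/{c \choose x}\right)$ depends only on $n,r,c,x$ and is independent of the particular balanced $(n,k,r,c)$ assignment $D$. Consequently, the second term $\mathrm{Var}_{x \sim \mathcal{P}}(\mathbb{E}_{D,x}[d])$ is identical across all balanced assignments and drops out of any comparison between two balanced designs.

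The remaining step is to extremize $\mathbb{E}_{x \sim \mathcal{P}}[\sigma_{D,x}(d)]$ over the choice of $D$. By Theorem \ref{min_var} (respectively Theorem \ref{max_var}), for every fixed $x$, the proximally compact (respectively stretched compact) assignment $D^{\star}$ attains the minimum (respectively maximum) of $\sigma_{D,x}(d)$ over all balanced $(n,k,r,c)$ assignments. Since the inequality $\sigma_{D^{\star},x}(d) \le \sigma_{D,x}(d)$ (respectively $\ge$) holds pointwise in $x$, and $\mathcal{P}$ is a probability measure on $\{0,1,\ldots,c\}$, integrating both sides against $\mathcal{P}$ yields the corresponding inequality on the expectations, thereby lifting the extremality from the conditional variance to the full marginal variance $V_D$.

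The only real conceptual obstacle is verifying that the pointwise-in-$x$ comparisons established in Theorems \ref{min_var} and \ref{max_var} are uniform enough to survive integration against an arbitrary $\mathcal{P}$. Because those theorems give the required inequalities for every value of $x$ (including the degenerate regimes $x=1$ and $x>c-r$ in which both sides vanish and equality holds trivially), nonnegativity of $\mathcal{P}$ suffices, and no additional uniformity argument is required beyond Eve's law itself.
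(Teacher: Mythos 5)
Your proposal is correct and follows essentially the same route as the paper: an application of Eve's law, with the outer-variance term $\mathrm{Var}_{x \sim \mathcal{P}}(\mathbb{E}_{D,x}[d])$ eliminated by Theorem~\ref{all_mean} and the term $\mathbb{E}_{x \sim \mathcal{P}}[\sigma_{D,x}(d)]$ extremized by integrating the pointwise-in-$x$ inequalities from Theorems~\ref{min_var} and~\ref{max_var}. No gaps.
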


\begin{proof}
    Let us denote the number of distinct jobs when any set of $x$ servers return uniformly at random by $d$. However, in our problem $x$ itself might be sampled from a distribution $\mathcal{P}$. Let us denote the variance in this set-up under this assignment of jobs to servers (say ${D}$) by $\sigma_{{D},x \sim \mathcal{P}}(d)$.

    Now using law of variances(Eve's law), we can say that $$ \sigma_{{D},x \sim \mathcal{P}}(d) = \mathbb{E}_{x \sim \mathcal{P}}[\sigma_{{D},x}(d)] + \sigma_{x \sim \mathcal{P}}[\mathbb{E}_{{D},x}(d)] $$ Now consider assignments $D$ and ${D}_1$ such that assignment $D$ is a proximally compact $(n,k,r,c)$ assignment scheme and assignment $D_1$ could be any balanced $(n,k,r,c)$ assignment scheme.

    However, we know from Theorem \ref{min_var} that $\sigma_{{D},x}(d) \leq \sigma_{{D}_1,x}(d)$
     for every $x$ if $D$ is a proximally compact $(n,k,r,c)$ assignment scheme and assignment $D_1$ is any other balanced $(n,k,r,c)$ assignment scheme.

     We also know that $\mathbbm{E}_{{D},x}(d) = \mathbbm{E}_{{D}_1,x}(d)$ from Theorem \ref{all_mean}. Combining the two properties, we get that $\sigma_{{D},x \sim \mathcal{P}}(d) \leq \sigma_{{D}_1,x \sim \mathcal{P}}(d)$, thus proving that proximally compact schemes (if exists) attain the least variance. A very similar approach can be used to prove the result on maximal variance as well.

\end{proof}

\subsection{Proof of Corollary \ref{corollary_variance}}{\label{proof_corollary_variance}}

\begin{namedcorollary}{Corollary}{\label{corollary_variance}}
    For $x>c-r$ , the expression of $\sigma_{D,x}(d)$ in Equation \eqref{var_sim_exp} equals zero.
\end{namedcorollary}

\begin{proof}
    Recall the expression of $\sigma_{D,x}(d)$ from Equation \eqref{var_sim_exp}. Observe that expression $g(m+1,x)-g(m,x)$ from Equation \eqref{g_p_diff} would be ${{c-2 \choose x-1}}$ for $x>c-r$ as the second and third term in equation \eqref{g_p_diff} goes to zero since $p\leq r$ and $x>c-r$. 

\begin{equation}{\label{g_p_exp_claim}}
    g(m+1,x)-g(m,x) = {{c-2 \choose x-1}}
\end{equation}

Let us now compute $g(0,x)$ using the expression in \eqref{gfirst} and \eqref{g_0_exp} for $x>c-r$.

\begin{equation}{\label{g_0_exp_claim}}
    g(0,x)=\sum_{i=2}^{r+1}\sum_{j=2}^{r+1} (i-1)(j-1) {{c-2r \choose x-i-j}} = r^2 {{c-2 \choose x-2}} - 2r {{c-1 \choose x-1}} + {{c \choose x}}
\end{equation}

Thus, from equations \eqref{g_p_exp_claim} and \eqref{g_0_exp_claim}, we get 

\begin{equation}{\label{g_p_expr}}
    g(m,x) = r^2 {{c-2 \choose x-2}} - 2r {{c-1 \choose x-1}} + {{c \choose x}} + m \times {{c-2 \choose x-1}}
\end{equation}

Since, $x>c-r$, we may claim that the term $T_2(n,k,r,c)$ in Equation \eqref{var_sim_exp} goes as follows.

\begin{equation}{\label{T_2_exp_claim}}
    T_2(n,k,r,c) = \sum\limits_{t=1}^{r-1} t^2{r \choose (t+1)} {(c-r) \choose (x-t-1)} = \left(r(r-1) {{c-2 \choose x-2}} - r{{c-1 \choose x-1}} + {{c \choose x}}\right) 
\end{equation}

Also observe that since $x>c-r$ the term ${{c-r \choose x}}$ goes to zero, hence not written in equation \eqref{temp_corollary}. Thus the numerator of the first term in equation \eqref{var_sim_exp} is given by (from equations \eqref{g_0_exp_claim} and \eqref{g_p_expr} and \eqref{T_2_exp_claim})

\begin{align}{\label{temp1_corollary}}
& 2.\sum\limits_{m=0}^{r}\mathfrak{m}^{D}(m)g(m,x) + n\sum\limits_{t=1}^{r-1} t^2{r \choose (t+1)} {(c-r) \choose (x-t-1)}\nonumber\\
\hspace{-1 em}=  & \sum\limits_{m=0}^{r} \Biggl(2\mathfrak{m}^{D}(m)\left(r^2 {{c-2 \choose x-2}} - 2r {{c-1 \choose x-1}} + {{c \choose x}}\right) 
+ 2m \mathfrak{m}^{D}(m) {{c-2 \choose x-1}}\Biggr)\nonumber \\ &\hspace{ 18 em}+  n\left(r(r-1) {{c-2 \choose x-2}} - r{{c-1 \choose x-1}} + {{c \choose x}}\right) \nonumber\\
& \overset{(a)}{=} \left(r^2 {{c-2 \choose x-2}} - 2r {{c-1 \choose x-1}} + {{c \choose x}}\right) n(n-1) + {{c-2 \choose x-1}} ck(k-1) \nonumber\\
&  \hspace{ 18 em}+ n\left(r(r-1) {{c-2 \choose x-2}} - r{{c-1 \choose x-1}} + {{c \choose x}}\right)\nonumber\\
& \overset{(b)}{=} {{c-2 \choose x-2}} (nr(nr-1)) + {{c-2 \choose x-1}} ck(k-1) - {{c-1 \choose x-1}} (nr(2n-1)) + n^2{{c \choose x}}\nonumber\\
& \overset{(c)}{=} {{c-2 \choose x-2}} n^2r^2 + {{c-2 \choose x-1}} ck^2 - {{c-1 \choose x-1}} (nr(2n)) + n^2{{c \choose x}} \nonumber\\
& \overset{(d)}{=} {{c-1 \choose x-1}} {nr \times kx} - {{c-1 \choose x-1}} (nr(2n)) + n^2{{c \choose x}} \nonumber\\
& \overset{(e)}{=} {{c \choose x}} \left(\left(\frac{nrx}{c}\right)^2- 2n \left(\frac{nrx}{c}\right) + n^2  \right)\nonumber \\
& \overset{(f)}{=} {{c \choose x}} \left(n \times \left(\frac{rx}{c}-1\right)\right)^2
\end{align}

We now argue for each of the steps below.

\begin{itemize}
    \item $(a)$ follows since $\sum\limits_{m=0}^{r} m\times \mathfrak{m}^D(m) = c {{k \choose 2}}$ and $\sum\limits_{m=0}^{r} \mathfrak{m}^D(m) = {{n \choose 2}}$ in Equations \eqref{proporty_m_1} and \eqref{property_m_2}
    
    %$\sum\limits_{p=0}^{r} p \times \mathfrak{m}^D(p) = c {{k \choose 2}}$ follows by counting the number of servers which contain any pair of jobs and then summing over all such pairs of jobs. The expression on the right follows from the fact that every server has exactly ${{k \choose 2}}$ pairs of jobs as it is a balanced design and there are exactly $c$ such servers. 
    
    %Also, observe that $\sum\limits_{p=0}^{r} \mathfrak{m}^D(p) = {{n \choose 2}}$ as the total number of pairs of jobs is given by ${{n \choose 2}}$. 

    \item $(b)$ follows by combining the coefficints of ${{c-2 \choose x-2}}$, ${c-1 \choose x-1}$ and ${c \choose x}$.   

    \item $(c)$ follows as $nr{{c-2 \choose x-2}} + kc{{c-2 \choose x-1}} = nr {{c-1 \choose x-1}}$. This can be explained by the fact that $n \times r = k \times c$. 

    \item $(d)$ follows from the following set of equalities

    \begin{align*}
        {{c-2 \choose x-2}} n^2r^2 + {{c-2 \choose x-1}} ck^2 = & \frac{(c-1)!}{(x-2)!}{(c-x-1)!} \left(\frac{ck}{c-x}+\frac{k}{x-1}\right) \\ = & \frac{(c-1)!\times nr\times kx(c-1)}{(x-2)!(c-x)(x-1)} = nr \times kx {{c-1 \choose x-1}}
    \end{align*}

    \item $(e)$ and $(f)$ follow from the fact that $n \times r = k \times c$

\end{itemize}

Now, observe the second term of $\sigma_{D,x}(d)$ in equation \eqref{var_sim_exp} and we see that $T_1(n,k,r,c)=n\times \left(\frac{rx}{c}-1\right)$ as $x>c-r$. Thus, using equation \eqref{temp1_corollary}, we can say that $\sigma_{D,x}(d)=0$ for $x>c-r$.

\end{proof}

%\section{Proof of Theorem \ref{min_var_gen}} {\label{proof_min_var_gen}}

\remove{
We now state and prove the result in Theorem \ref{mean_equality}.
\begin{theorem*}
 Consider any assignment $D$ amongst balanced $(n,k,r,c)$ assignments. Note that the expectation of the number of distinct jobs $d$ received at the master when any subset of servers $\mathcal{S}$ of cardinality $x>1$ is able to communicate to the master with equal probability would be the same for every assignment $D$ amongst all balanced $(n,k,r,c)$ assignments. We state it as an equation below.

 %Note that has the same expectation of distinct jobs received for every $x\leq c$ assuming $S$ can be any subset of the servers of cardinality $x$ with equal probability. Also the expectation can be computed as 
 \begin{equation}
  \mathbbm{E}_{{D},x}[d]=n\times \left(1-\frac{{c-r \choose x}}{{c \choose x}}\right) 
 \end{equation}
 %where $d$ denotes the number of distinct jobs received when any set of $x$ servers return and the job distribution is denoted by $D$. 
 
 Additionally, the variance of the number of distinct jobs can be written as: 
 \begin{equation}{\label{var_exp}}
 \sigma_{{D},x}(d) = \frac{\sum\limits_{i=1}^{n} \sum\limits_{j=1}^{n} \sum\limits_{\substack{\hat{S} \subset \mathcal{S};\\|\hat{S}|=x}} (\mathfrak{n}^D_{i,\hat{S}}-1) \mathbbm{1}_{\mathfrak{n}^D_{i,\hat{S}}>1} (\mathfrak{n}^D_{j,\hat{S}}-1) \mathbbm{1}_{\mathfrak{n}^D_{j,\hat{S}}>1}}{{{c \choose x}}}- \left(\frac{n\sum\limits_{t=1}^{r-1} t{r \choose (t+1)} {(c-r) \choose (x-t-1)}}{{c \choose x}}\right)^2
 \end{equation}\\

 where $\mathfrak{n}^D_{i,S}$ denotes the number of servers in set $S$ which have been assigned the job $a_i$ under the assignment scheme $D$.

 Also under the constraint $n=c$, we obtain $\mathbbm{E}_{{D},x}[d]=n.\left(1 - \frac{{{n-x \choose r }}}{{{n \choose r}}}\right)$
 		
\end{theorem*}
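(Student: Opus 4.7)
The plan is to reduce the statement to the fixed-$x$ results (Theorems \ref{min_var} and \ref{max_var}) via the law of total variance. For a random variable $d$ whose distribution depends on the random quantity $x \sim \mathcal{P}$, Eve's law gives
\begin{equation*}
\sigma_{D, x \sim \mathcal{P}}(d) \;=\; \mathbb{E}_{x \sim \mathcal{P}}\bigl[\sigma_{D,x}(d)\bigr] \;+\; \sigma_{x \sim \mathcal{P}}\bigl[\mathbb{E}_{D,x}(d)\bigr].
\end{equation*}
I would first argue that the second summand is the same for every balanced $(n,k,r,c)$ assignment: by Theorem \ref{all_mean}, $\mathbb{E}_{D,x}(d) = n\bigl(1 - \tbinom{c-r}{x}/\tbinom{c}{x}\bigr)$ is a function only of $n,r,c,x$ and does not depend on $D$. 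Hence $\sigma_{x \sim \mathcal{P}}[\mathbb{E}_{D,x}(d)]$ is a quantity that depends only on $n,r,c$ and the distribution $\mathcal{P}$, so it contributes the same amount to the total variance for every balanced assignment.

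The next step is to handle the first summand. Let $D$ denote a proximally compact balanced $(n,k,r,c)$ assignment (assumed to exist) and let $D_1$ be an arbitrary balanced $(n,k,r,c)$ assignment. Theorem \ref{min_var} gives the pointwise comparison $\sigma_{D,x}(d) \le \sigma_{D_1,x}(d)$ for every integer $x$ in the support of $\mathcal{P}$. Taking expectations over $x \sim \mathcal{P}$ preserves the inequality, so
\begin{equation*}
\mathbb{E}_{x \sim \mathcal{P}}\bigl[\sigma_{D,x}(d)\bigr] \;\le\; \mathbb{E}_{x \sim \mathcal{P}}\bigl[\sigma_{D_1,x}(d)\bigr].
\end{equation*}
Combining this with the equality of the second summand for all balanced assignments yields $\sigma_{D, x \sim \mathcal{P}}(d) \le \sigma_{D_1, x \sim \mathcal{P}}(d)$, establishing that proximally compact assignments attain the least variance in the sampled setting.

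The result for stretched compact assignments follows by an identical argument with inequalities reversed: if $D$ is stretched compact, Theorem \ref{max_var} provides $\sigma_{D,x}(d) \ge \sigma_{D_1,x}(d)$ pointwise in $x$, which after taking expectation and adding the common $\sigma_{x \sim \mathcal{P}}[\mathbb{E}_{D,x}(d)]$ term gives $\sigma_{D, x \sim \mathcal{P}}(d) \ge \sigma_{D_1, x \sim \mathcal{P}}(d)$. There is no real obstacle here beyond invoking Eve's law correctly and observing that the mean being $D$-independent is exactly what makes the second variance term drop out of the comparison; all of the heavy combinatorial work has already been done in Theorems \ref{all_mean}, \ref{min_var}, and \ref{max_var}.
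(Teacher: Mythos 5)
Your proposal does not prove the statement at hand. The statement you were asked to prove is the fixed-$x$ result: (i) the closed-form expectation $\mathbbm{E}_{D,x}[d]=n\bigl(1-\tbinom{c-r}{x}/\tbinom{c}{x}\bigr)$ and its independence of the particular balanced assignment $D$, (ii) the displayed expression for $\sigma_{D,x}(d)$, and (iii) the reformulation $n\bigl(1-\tbinom{n-x}{r}/\tbinom{n}{r}\bigr)$ when $n=c$. What you have written instead is a proof of the paper's sampled-$x$ extremal-variance theorem (the Eve's-law comparison of proximally compact and stretched compact assignments against arbitrary balanced assignments when $x\sim\mathcal{P}$). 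That is a different theorem, and your argument for it moreover \emph{cites} the expectation formula as an already-known fact from Theorem \ref{all_mean} --- but that formula is precisely the first claim you were supposed to establish, so as a proof of the present statement the argument is circular and leaves all three claims unproven.

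To actually prove the statement you need the combinatorial computation the paper carries out: write $d=kx-\sum_{i}(\mathfrak{n}^D_{i,\hat{S}}-1)\mathbbm{1}_{\mathfrak{n}^D_{i,\hat{S}}>1}$, average over all $\tbinom{c}{x}$ subsets $\hat{S}$, observe that for each job $a_i$ the inner sum counts subsets by the number $t+1$ of servers in $\hat{S}$ holding $a_i$, giving $\sum_{t=1}^{r-1}t\tbinom{r}{t+1}\tbinom{c-r}{x-t-1}$ (which depends only on $r,c,x$, hence is assignment-independent), and then evaluate this via the generating-function identity to get $r\tbinom{c-1}{x-1}+\tbinom{c-r}{x}-\tbinom{c}{x}$. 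The variance expression then follows from $\sigma(d)=\mathbbm{E}[(\sum_i\cdot)^2]-(\mathbbm{E}[\sum_i\cdot])^2$ after expanding the square into a double sum over $i,j$. Finally, the $n=c$ case requires the symmetry $\tbinom{c-r}{x}\tbinom{c}{r}=\tbinom{c-x}{r}\tbinom{c}{x}$, both sides being $c!/\bigl(r!\,x!\,(c-r-x)!\bigr)$. None of these steps appear in your proposal.
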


}
\remove{
	
	\begin{theorem}{\label{min_var_int_1}}
		For a given $n,k,c,r$, any distribution which satisfies the properties in \ref{setting_condition} with no two servers having more then one common job has the same variance of distinct jobs received for every $x \leq c$. 		
	\end{theorem}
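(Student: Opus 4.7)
The plan is to reduce this result to the already-established extremality theorems for fixed $x$ (Theorem \ref{min_var} and Theorem \ref{max_var}) via the law of total variance (Eve's law). Let $d$ denote the number of distinct jobs received by the master, and let $\sigma_{D,x\sim\mathcal{P}}(d)$ denote the variance under the hierarchical randomness (first sampling $x\sim\mathcal{P}$, then sampling a uniform subset of $x$ servers to be non-straggling). Eve's law gives the decomposition
\begin{equation*}
  \sigma_{D,x\sim\mathcal{P}}(d) \;=\; \mathbb{E}_{x\sim\mathcal{P}}\!\left[\sigma_{D,x}(d)\right] \;+\; \sigma_{x\sim\mathcal{P}}\!\left[\mathbb{E}_{D,x}(d)\right].
\end{equation*}

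I would then observe that the second term is independent of the balanced assignment $D$. Indeed, Theorem \ref{all_mean} says that for every fixed $x$, the inner expectation $\mathbb{E}_{D,x}(d) = n\bigl(1 - \tfrac{\binom{c-r}{x}}{\binom{c}{x}}\bigr)$ is a function only of $n,k,r,c,x$ and does not depend on which balanced $(n,k,r,c)$ assignment we choose. Consequently, $\sigma_{x\sim\mathcal{P}}[\mathbb{E}_{D,x}(d)]$ depends only on $\mathcal{P}$ and the parameters $(n,k,r,c)$, and therefore contributes the same additive constant to $\sigma_{D,x\sim\mathcal{P}}(d)$ for every balanced assignment.

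For the first term, I would invoke pointwise monotonicity: by Theorem \ref{min_var}, if $D^{\star}$ is a proximally compact balanced $(n,k,r,c)$ assignment then $\sigma_{D^{\star},x}(d) \le \sigma_{D_1,x}(d)$ for every $x$ and every balanced $(n,k,r,c)$ assignment $D_1$. Taking expectation over $x\sim\mathcal{P}$ preserves this inequality, so $\mathbb{E}_{x\sim\mathcal{P}}[\sigma_{D^{\star},x}(d)] \le \mathbb{E}_{x\sim\mathcal{P}}[\sigma_{D_1,x}(d)]$. Combining with the constancy of the second term in the Eve's-law decomposition yields $\sigma_{D^{\star},x\sim\mathcal{P}}(d) \le \sigma_{D_1,x\sim\mathcal{P}}(d)$, proving minimality of the proximally compact assignment. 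The maximality claim for the stretched compact assignment follows by an identical argument using Theorem \ref{max_var} with the inequality reversed.

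There is no real analytical obstacle here, since the heavy lifting has already been done in Theorems \ref{all_mean}, \ref{min_var}, and \ref{max_var}. The only subtle point worth stating carefully is that both sides of Eve's law are well defined and finite for any distribution $\mathcal{P}$ supported on $\{0,1,\dots,c\}$ — which is immediate because the random variable $d$ is bounded by $n$ — and therefore one may freely interchange the order of inequality and expectation over $x$. I would present the full argument in just a few lines, essentially as already sketched in Appendix H.
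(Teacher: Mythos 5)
There is a genuine gap here: you have proved the wrong statement. Theorem~\ref{min_var_int_1} is an \emph{invariance} claim for each fixed $x$ --- it asserts that \emph{every} balanced $(n,k,r,c)$ assignment in which no two servers share more than one common job has exactly the \emph{same} variance of $d$, for every $x \le c$. It says nothing about a random number of non-stragglers $x\sim\mathcal{P}$, and nothing about proximally or stretched compact designs being extremal. Your Eve's-law argument is a correct proof of Theorem~\ref{max_min_var_sampled} (and indeed matches the paper's Appendix H proof of that result), but it does not address the equality of variances across the class of assignments described in the statement; in particular, Theorems~\ref{min_var} and \ref{max_var} only give one-sided inequalities against a distinguished design, which cannot yield the claimed equality among all designs in the class.

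The argument the statement actually requires is the following. The hypothesis that no two servers share more than one job is equivalent to saying that no pair of jobs is assigned together to two or more servers, i.e.\ $\mathfrak{m}^D(m)=0$ for all $m\ge 2$. The two universal constraints \eqref{proporty_m_1} and \eqref{property_m_2} then force $\mathfrak{m}^D(1)=c\binom{k}{2}$ and $\mathfrak{m}^D(0)=\binom{n}{2}-c\binom{k}{2}$, so the shape vector $h_D$ is completely determined by $(n,k,r,c)$ and is identical for every assignment in this class. Since by \eqref{var_sim_exp} the variance $\sigma_{D,x}(d)$ depends on $D$ only through $\sum_{m=0}^{r}\mathfrak{m}^D(m)g(m,x)$, with $g(m,x)$, $T_1$ and $T_2$ functions of $n,k,r,c,x$ alone, all such assignments share the same variance for every $x$. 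This is essentially what the paper's own proof does, though it proceeds by directly counting, for each pair of jobs, the subsets $\hat{S}$ contributing to the sum and observing that the number of job pairs sharing exactly one server (respectively, no server) is the same constant for every assignment in the class.
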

	
	\begin{proof}
		We now consider the expression of variance from equation \eqref{var_exp} and compute $f(i,j)$ (defined below) for every unordered pair $(i,j)$ as the second term remains the same irrespective of distribution $D$.
		
		\begin{equation}{\label{f_defn}}
		f(i,j) = \sum\limits_{\substack{\hat{S} \subset \mathcal{S};\\|\hat{S}|=x}} (\mathfrak{n}^D_{i,\hat{S}}-1) \mathbbm{1}_{\mathfrak{n}^D_{i,\hat{S}}>1} (\mathfrak{n}^D_{j,\hat{S}}-1) \mathbbm{1}_{\mathfrak{n}^D_{j,\hat{S}}>1}
		\end{equation}
		
		\remove{

		$(a)$ follows since $\sigma(c-X) = \sigma(X)$ where $c$ is a constant and $X$ is the random variable. 
		The first term in $(b)$ follows since $(\sum\limits_i b_i)^2 = \sum\limits_i \sum\limits_j b_i b_j$. The first term in $(c)$ follows from interchange of summations whereas the second term follows since it is a constant irrespective of distribution $D$ as proven in proof of Theorem \ref{mean_equality}.

		Let us know compute $\sum\limits_{\substack{\hat{S} \subset \mathcal{S};\\|\hat{S}|=x}} (\mathfrak{n}^D_{i,\hat{S}}-1) \mathbbm{1}_{\mathfrak{n}^D_{i,\hat{S}}>1} (\mathfrak{n}^D_{j,\hat{S}}-1) \mathbbm{1}_{\mathfrak{n}^D_{j,\hat{S}}>1}$ for every unordered pair $(i,j)$.
	}

		\begin{enumerate}
			\item   Case 1: $i=j$
			
			In this case the expression becomes $\sum\limits_{\substack{\hat{S} \subset \mathcal{S};\\|\hat{S}|=x}} ((\mathfrak{n}^D_{i,\hat{S}}-1) \mathbbm{1}_{\mathfrak{n}^D_{i,\hat{S}}>1})^2$ which can exactly be computed as in the previous theorem \ref{mean_equality} and can be shown to be $\sum\limits_{t=1}^{r-1} t^2{r \choose (t+1)} {(c-r) \choose (x-t-1)}$
			
			\item Case 2: $i\neq j$
			
			 Now,
			 
			 \begin{align*}
			  \sum\limits_{\substack{\hat{S} \subset \mathcal{S};\\|\hat{S}|=x}} (\mathfrak{n}^D_{i,\hat{S}}-1) \mathbbm{1}_{\mathfrak{n}^D_{i,\hat{S}}>1} (\mathfrak{n}^D_{j,\hat{S}}-1) \mathbbm{1}_{\mathfrak{n}^D_{j,\hat{S}}>1}
			 \overset{(c)}{=} & \sum\limits_{u=2}^{r} \sum\limits_{v=2}^{r} \sum\limits_{\substack{\hat{S} \subset \mathcal{S};|\hat{S}|=x;\\\mathfrak{n}^D_{i,\hat{S}}=u;\mathfrak{n}^D_{j,\hat{S}}=v}} (u-1)\times (v-1)\\
			 \overset{(d)}{=} & \sum\limits_{u=2}^{r} \sum\limits_{v=2}^{r}  (u-1)\times (v-1) \sum\limits_{\substack{\hat{S} \subset \mathcal{S};|\hat{S}|=x;\\\mathfrak{n}^D_{i,\hat{S}}=u;\mathfrak{n}^D_{j,\hat{S}}=v}} 1
			 \end{align*}
			 
			 $(c)$ follows since both $\mathfrak{n}^D_{j,\hat{S}}$ and $\mathfrak{n}^D_{i,\hat{S}}$ is bounded above by $r$ whereas $(d)$ follows since $(u-1)(v-1)$ is a constant for the second summation.
			 
			 Let us know compute $\sum\limits_{\substack{\hat{S} \subset \mathcal{S};|\hat{S}|=x;\\\mathfrak{n}^D_{i,\hat{S}}=u;\mathfrak{n}^D_{j,\hat{S}}=v}} 1$ i.e, the number of subsets of servers with job $a_i$ occurring $u$ times and job $a_j$ occuring $v$ times for two different conditions on the pair $(i,j)$.
			 
			 \begin{itemize}
			 	\item Case (i): There exists no such server with both jobs $a_i$ and $a_j$.

			 	Now we know that there are exactly $r$ servers where job $a_i$ is present, similarly there are exactly $r$ servers where job $a_j$ is present.
			 	
			 	From the first $r$ servers, we have to choose exactly $u$ of them, similarly from the other $r$ servers we have to choose exactly $v$ of them and from the remaining $(c-2r)$ servers we have to choose exactly $(x-u-v)$ of them, thus there exist ${{r \choose u}}{{r \choose v}}{{(c-2r) \choose (x-u-v)}}$ subsets satisfying the desired property above.
			 	
			 	Thus the expression in $(d)$ turns out to be $\sum\limits_{u=2}^{r} \sum\limits_{v=2}^{r}  (u-1)\times (v-1) {{r \choose u}}{{r \choose v}}{{(c-2r) \choose (x-u-v)}}$ when the $(i,j)$ satisfies the condition in 2(i).

			 	\item Case (ii): There exists a server with both jobs $a_i$ and $a_j$ present. 
			 	
			 	Note that there can exist only one such server present as no two servers can have more than one job common.

			 	Now there exist $(r-1)$ servers with only job $a_i$ is present whereas there exist $(r-1)$ servers with only job $a_j$ present. Also there exist exactly one server with both jobs present. No other server contains either job $a_i$ or job $a_j$.

			 	Let us consider various ways these subsets of $x$ servers can occur.
			 	
			 	\begin{enumerate}
			 		\item The server with both jobs $a_i$ and $a_j$ is not present in the subset $\hat{S}$ at all.
			 		
			 		Thus from the first $(r-1)$ servers exactly $u$ of them must be present whereas from the other $(k-1)$ servers exactly $v$ of them must be present and from the $(c-2r+1)$ servers (containing neither $a_i$ nor $a_j$) only $(x-u-v)$ of them must be present, thus there exist exactly ${{(r-1} \choose u}{{(r-1) \choose v}}{{(c-2r+1) \choose (x-u-v)}}$ of such subsets.
			 		
			 		\item The server with both jobs $a_i$ and $a_j$ is present in the subset $\hat{S}$.
			 		
			 		Thus from the first $(r-1)$ servers exactly $(u-1)$ of them must be present, whereas from the other $(r-1)$ servers exactly $(v-1)$ of them must be present and from the $(c-2r+1)$ servers (containing neither $a_i$ nor $a_j$ ) only $(x-u-v+1)$ of them must be present, thus there exist exactly ${{(r-1} \choose (u-1)}{{(r-1) \choose (v-1)}}{{(c-2r+1) \choose (x-u-v+1)}}$ of such subsets.

			 	\end{enumerate}

			 	Thus there exist ${{(r-1} \choose u}{{(r-1) \choose v}}{{(c-2r+1) \choose (x-u-v)}} + {{(r-1} \choose (u-1)}{{(r-1) \choose (v-1)}}{{(c-2r+1) \choose (x-u-v+1)}}$ such subsets satisfying the desired property.
			 	
			 	Thus the expression in $(d)$ turns out to be $\sum\limits_{u=2}^{r} \sum\limits_{v=2}^{r}  (u-1)\times (v-1) \left({{(r-1} \choose u}{{(r-1) \choose v}}{{(c-2r+1) \choose (x-u-v)}} + {{(r-1} \choose (u-1)}{{(r-1) \choose (v-1)}}{{(c-2r+1) \choose (x-u-v+1)}}\right)$ when $(i,j)$ satisfies the condition in 2(ii). 
			 	
			 \end{itemize}
			 
		\end{enumerate} 
		
	It is clear that number of such pairs $(i,j)$ satisfying the case 1 would be $n$ whereas number of such unordered pairs $(i,j)$ satisfying property 2(i) would be $2c.{{k \choose 2}}$ for every distribution $D$ satisfying the condition in theorem as no 2 servers can 
	can have any common pair.
	
	Similarly $2({{n \choose 2}}- c.{{k \choose 2}})$ unordered pairs exist satisfying property 2(ii) for every distribution $D$ satisfying condition in theorem.
	
	Thus, $\sigma_{D,x}(d)$ is constant for every distribution $D$ with no two servers having more than one job common. 
	
	We can also calculate $\sigma_{D,x}$ as follows:
	
	\begin{align*}
	\sigma_{D,x}(d)= & \frac{1}{{c \choose x}}\left[n.\left(\sum\limits_{t=1}^{r-1} t^2{r \choose (t+1)} {(c-r) \choose (x-t-1)}\right)\\ + 
	&2c.{{r \choose 2}} \left( \sum\limits_{u=2}^{r} \sum\limits_{v=2}^{r}  (u-1)\times (v-1) {{r \choose u}}{{r \choose v}}{{(c-2r) \choose (x-u-v)}}\right)\\+
	& 2\left({{n \choose 2}}- c.{{k \choose 2}}\right) \left(\sum\limits_{u=2}^{r} \sum\limits_{v=2}^{r}  (u-1)\times (v-1) \left({{(r-1} \choose u}{{(r-1) \choose v}}{{(c-2r+1) \choose (x-u-v)}} \\+ &{{(r-1} \choose (u-1)}{{(r-1) \choose (v-1)}}{{(c-2r+1) \choose (x-u-v+1)}}\right)\right)\right] -  \left(\frac{n\sum\limits_{t=1}^{r-1} t{r \choose (t+1)} {(c-r) \choose (x-t-1)}}{{c \choose x}}\right)^2
	\end{align*}

	\end{proof}
}

\remove{
	
\begin{theorem}{\label{min_var}}
	For a given $n$ and $k$, any distribution which satisfies the properties in \ref{setting_condition} with no two servers having more than one common job has always a smaller variance of distinct jobs than any distribution satisfying properties in \ref{setting_condition} having at least one pair of servers with at least two jobs common for every $x$.
\end{theorem}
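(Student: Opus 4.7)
The plan is to verify the claim by specializing the expression for $\sigma_{D,x}(d)$ in equation \eqref{var_sim_exp} to the regime $x > c-r$. The underlying combinatorial intuition is transparent: when at most $c-x<r$ servers straggle and each job is assigned to exactly $r$ servers, the pigeonhole principle guarantees that every job retains at least one non-straggling copy, so $d = n$ almost surely and the variance must vanish. Nonetheless, the corollary asks for verification through the formal expression, so I would trace the simplifications explicitly.

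First, I would identify the binomial coefficients that vanish in this regime: $\binom{c-r}{x}$, $\binom{c-r-1}{x-1}$, $\binom{c-2r}{x}$, and $\binom{c-2r+m}{x-1}$ for every $0 \le m \le r-1$ (since in each case the upper index is at most $c-r-1 < x-1$, or $c-r < x$). Substituting these vanishings into Lemmas \ref{g_p_0_lemma} and \ref{g_p_diff_lemma} yields $g(m+1,x) - g(m,x) = \binom{c-2}{x-1}$, a constant independent of $m$, so $g(m,x)$ becomes affine in $m$:
\[
g(m,x) = r^2\binom{c-2}{x-2} - 2r\binom{c-1}{x-1} + \binom{c}{x} + m\binom{c-2}{x-1}.
\]
Using the invariants from equations \eqref{proporty_m_1} and \eqref{property_m_2}, the weighted sum $\sum_{m=0}^r \mathfrak{m}^D(m)\, g(m,x)$ then equals $\binom{n}{2}g(0,x) + c\binom{k}{2}\binom{c-2}{x-1}$, which is independent of the assignment $D$. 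Likewise $T_2(n,r,c,x)$ simplifies to $n\bigl(r(r-1)\binom{c-2}{x-2} - r\binom{c-1}{x-1} + \binom{c}{x}\bigr)$, and $T_1(n,r,c,x) = n\bigl(rx/c - 1\bigr)$ since $\binom{c-r}{x}$ drops out.

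Assembling these pieces, I would expand the numerator of the first term in \eqref{var_sim_exp} and collect into $n^2r^2\binom{c-2}{x-2} + ck^2\binom{c-2}{x-1} - 2n^2r\binom{c-1}{x-1} + n^2\binom{c}{x}$, freely using the balance identity $nr=kc$. The pivotal algebraic step is the identity $nr\binom{c-2}{x-2} + kc\binom{c-2}{x-1} = nr\binom{c-1}{x-1}$, which follows from Pascal's rule together with $nr = kc$; applied to the first two coefficients, it converts the numerator into a perfect square $\binom{c}{x}\bigl(\tfrac{nrx}{c} - n\bigr)^2 = \binom{c}{x}\bigl(T_1(n,r,c,x)\bigr)^2$. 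Dividing by $\binom{c}{x}$ and subtracting $(T_1)^2$ gives $\sigma_{D,x}(d) = 0$, as required. The main obstacle is the combinatorial bookkeeping needed to recognize the emerging perfect square and to apply $nr = kc$ at the right moments; once this factorization is spotted, the simplification is mechanical.
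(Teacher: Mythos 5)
Your argument is internally sound, but it proves the wrong statement: what you have written is essentially the paper's proof of Corollary~\ref{corollary_variance} in the appendix (that $\sigma_{D,x}(d)=0$ for $x>c-r$), not of Theorem~\ref{min_var}. The theorem asks you to compare two classes of balanced assignments --- those in which no two servers share more than one common job (equivalently, every pair of jobs appears together in at most one server, so $\mathfrak{m}^D(m)=0$ for $m\ge 2$) against those containing a job pair that shares at least two servers --- and to show the former always has the smaller variance. Your specialization to $x>c-r$ makes every assignment's variance vanish, so it can only ever yield equality in that degenerate regime and says nothing about $1<x\le c-r$, which is precisely where the two classes differ. Indeed your own intermediate observation --- that once the tails ${c-2r+m \choose x-1}$ vanish, $g(m,x)$ becomes affine in $m$ and $\sum_m \mathfrak{m}^D(m)g(m,x)$ collapses via \eqref{proporty_m_1} and \eqref{property_m_2} to a quantity independent of $D$ --- is exactly the mechanism by which the comparison becomes trivial there.

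The missing idea is the convexity of $g(\cdot,x)$ in $m$ for general $x$. By Lemma~\ref{g_p_diff_lemma} the first difference $g(m+1,x)-g(m,x)={c-2 \choose x-1}-2{c-r-1 \choose x-1}+{c-2r+m \choose x-1}$ is nondecreasing in $m$, so $g$ is discretely convex; since only $\sum_m \mathfrak{m}^D(m)g(m,x)$ varies across balanced assignments and all shape vectors share the same zeroth and first moments, the question becomes which nonnegative integer distribution with fixed total mass and fixed mean minimizes the expectation of a convex function --- and that is the one supported on at most two consecutive integers. The paper executes this by writing any competing shape vector as $h_D+v$ with $v\in\ker H$, expanding $v$ in the second-difference basis of $\ker H$, showing the coefficients $\alpha_i$ are all nonnegative, and concluding $\sum_m v(m+1)g(m,x)=\sum_i \alpha_i\left(g(i-1,x)-2g(i,x)+g(i+1,x)\right)\ge 0$. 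None of this appears in your proposal, so the theorem remains unproven; what you have established is only the (already known) fact that the comparison is vacuous when $x>c-r$.
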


\begin{proof}
	We know from Equation \eqref{var_sim_exp} of Theorem \ref{var_sim_exp_thm} that: 
	
	\begin{equation*}
	\sigma_{D,x}(d) =  \frac{2.\sum\limits_{p=0}^{r}\mathfrak{n}^D_{p,D}g(p,x) + n\sum\limits_{t=1}^{r-1} t^2{r \choose (t+1)} {(c-r) \choose (x-t-1)}}{{c \choose x}}- \left(\frac{n\sum\limits_{t=1}^{r-1} t{r \choose (t+1)} {(c-r) \choose (x-t-1)}}{{c \choose x}}\right)^2
	\end{equation*}
	%where $\alpha$ denotes $(k \times x-\mathbb{E}_{D,x}[d])$.
	
	Now consider $2.\sum\limits_{p=0}^{r}\mathfrak{n}^D_{p,D}g(p,x)$ in the variance expression as it is the only term which depends on the distribution $D$.
	
%	We know that  
	
	\remove{	
	Also using the arguments of previous theorem, we show
	
	\begin{equation*}
	\sum\limits_{\substack{\hat{S} \subset \mathcal{S};\\|\hat{S}|=x}} (\mathfrak{n}^D_{i,\hat{S}}-1) \mathbbm{1}_{\mathfrak{n}^D_{i,\hat{S}}>1} (\mathfrak{n}^D_{j,\hat{S}}-1) \mathbbm{1}_{\mathfrak{n}^D_{j,\hat{S}}>1}
	{=}  \sum\limits_{u=2}^{k} \sum\limits_{v=2}^{k}  (u-1)\times (v-1) \sum\limits_{\substack{\hat{S} \subset \mathcal{S};|\hat{S}|=x;\\\mathfrak{n}^D_{i,\hat{S}}=u;\mathfrak{n}^D_{j,\hat{S}}=v}} 1
	\end{equation*}

	Also the value of expression $\sum\limits_{\substack{\hat{S} \subset \mathcal{S};|\hat{S}|=x;\\\mathfrak{n}^D_{i,\hat{S}}=u;\mathfrak{n}^D_{j,\hat{S}}=v}} 1$ for case 1, case 2(i) remain the same as the previous theorem for each pair of $(u,v)$.
	
	However the arguments for case 2(ii) may be different as a pair of jobs $(a_i,a_j)$ may belong to multiple servers.

	Consider pairs of jobs $(i,j)$ such that jobs $a_{i}$ and $a_{j}$ are present together in exactly $m$ servers for every positive integer $m$.
	
	Consider a function $f(i,j) = \sum\limits_{\substack{\hat{S} \subset \mathcal{S};\\|\hat{S}|=x}} (\mathfrak{n}^D_{{i},\hat{S}}-1) \mathbbm{1}_{\mathfrak{n}^D_{{i},\hat{S}}>1} (\mathfrak{n}^D_{{j},\hat{S}}-1) \mathbbm{1}_{\mathfrak{n}^D_{{j},\hat{S}}>1}$
	
	We know that this function remains same irrespective of which $m$ servers have both of them, exactly which servers have only $a_{i}$ or $a_{j}$ due to symmetry over summation.

	Similarly consider a function $f(u,v) = \sum\limits_{\substack{\hat{S} \subset \mathcal{S};\\|\hat{S}|=x}} (\mathfrak{n}^D_{{u},\hat{S}}-1) \mathbbm{1}_{\mathfrak{n}^D_{{u},\hat{S}}>1} (\mathfrak{n}^D_{{v},\hat{S}}-1) \mathbbm{1}_{\mathfrak{n}^D_{{v},\hat{S}}>1}$.

    Let us try to compute the difference between the functions and for this sake let us assume that servers $s_{1},...s_{{m-1}}$ are shared by both the pairs. $s_{{m}}$ is shared by only the pair $(a_{i},a_{j})$, however $s_m$ only contains $a_{u}$, and server $s_{m+1}$ contains only $a_{v}$ but none of the elements in the other pair.
    
    Let $s_{k_1},s_{k_2},...s_{k_{r-m}}$ denote the servers containing only one element from each pair say $a_{i}$ and $a_{u}$, however $s_{l_1},s_{l_2},...s_{l_{r-m}}$ denote the servers containing only one element from each pair say $a_{j}$ and $a_{v}$. This is because job $a_{i}$ occurs without $a_{j}$ in $(r-m)$ such servers and similarly job $a_{u}$ occurs without $a_{v}$ in $(r-m+1)$ such servers which we have satisfied above.
    
    Let us denote the remaining servers by $R$. 
	
	Now let us compare the difference between two functions for each subset of cardinality $x$.
	
	Let us look at 
	
	\begin{equation}{\label{diff_equation}}
	(\mathfrak{n}^D_{{i},\hat{S}}-1) \mathbbm{1}_{\mathfrak{n}^D_{{i},\hat{S}}>1} (\mathfrak{n}^D_{{j},\hat{S}}-1) \mathbbm{1}_{\mathfrak{n}^D_{{j},\hat{S}}>1}- (\mathfrak{n}^D_{{u},\hat{S}}-1) \mathbbm{1}_{\mathfrak{n}^D_{{u},\hat{S}}>1} (\mathfrak{n}^D_{{v},\hat{S}}-1) \mathbbm{1}_{\mathfrak{n}^D_{{v},\hat{S}}>1}
	\end{equation}
	 for all subsets $\hat{S}$ of cardinality $x$.
	
	\begin{enumerate}
		\item If the set $\hat{S}$ does not contain either $s_m$ or $s_m+1$, then we can say that Equation \eqref{diff_equation} goes to zero.
		\item If the set $\hat{S}$ contains both $s_m$ and $s_{m+1}$, then also that Equation \eqref{diff_equation} goes to 0.
	\end{enumerate}
	
Now consider a subset $s$ which does not contain either $s_m$ or $s_{m+1}$ but this set has say $(\alpha+1)$ occurrences of $a_{u}$ and $a_{i}$ and $(\beta+1)$ occurrences of $a_{v}$ and $a_{j}$ for some $\alpha, \beta \geq 0$. 

Now let us look at the set $s \cup \{s_m\}$. For this set Equation \eqref{diff_equation} goes to $(\alpha+1)(\beta+1)- (\alpha+1)\beta$. Similarly, for the set $s \cup \{s_{m+1}\}$, Equation \eqref{diff_equation} goes to $\alpha\beta - \alpha(\beta+1)$.

Thus the sum of Equation \eqref{diff_equation} for these two subsets goes to 1. 

We can also show that if set $s$ contains zero occurrences of $a_{u}$ and $a_{i}$ or zero occurrences of $a_{v}$ and $a_{j}$, the sum of Equation \eqref{diff_equation}  for sets $s \cup \{s_m\}$ and $s \cup \{s_{m+1}\}$ goes to 0, however there are $2{{(c-r-1) \choose (x-1)}}- {{(c-2r+m-1)\choose (x-1)}}$ such subsets. This follows from the fact that such a subset can only exist if it is a subset of  $\{s_{k_1},s_{k_2},...,s_{k_{r-m}}\} \cup R$ or $\{s_{l_1},s_{l_2},...,s_{l_{r-m}}\} \cup R$.  	

Hence, the function $f(i,j)-f(u,v)$ goes to $\left[{{(c-2) \choose (x-1)}}-2{{(c-r-1) \choose (x-1)}}+ {{(c-2r+m-1)\choose (x-1)}}\right]$ which increases with $m$ 
}
 
%Thus, $f(i,j)-f(u,v)$ is decreasing in $m$, %which implies 
%$mf(i_1,j_1) \geq f(i,j) + f(i_0,j_0)$.

We know from Theorem \ref{f_diff} that $f(i,j)-f(u,v)$ equals $\left[{{(c-2) \choose (x-1)}}-2{{(c-r-1) \choose (x-1)}}+ {{(c-2r+m-1)\choose (x-1)}}\right]$ which increases with $m$

Now we know $f(i_t,j_t)-f(i_{t-1},j_{t-1})\geq f(i_{t-1},j_{t-1})-f(i_{t-2},j_{t-2})$ implying $f(i_t,j_t)+ f(i_{t-2},j_{t-2}) \geq 2.f(i_{t-1},j_{t-1})$. 

Now for each $t\leq m$, we multiply the inequality by $(m-t+1)$ and sum them for all $t \in \{1,2,...,m\}$ to obtain $mf(i_1,j_1) \leq f(i,j) + (m-1) f(i_0,j_0)$ $\forall m >1$.

\remove{Now consider all possible pairs $K$ each occuring together in $l_K$ servers where $l_K$ is any non-negative integer. Now for any distribution of jobs as per \ref{setting_condition}, $\sum\limits_{K=(i,j); i,j \in \{1,2,...,n\}; i<j} l_K$ is constant. 

For the setting_condition where no two servers have more than one job common, $l_K=0$ or $1$.} 

Now consider any distribution $\tilde{D}$ with at least one pair of servers having more than one job common and let us denote $\hat{D}$ as a distribution with no two servers having more than one job common.

Recall that we denote number of pairs of jobs which occur together in exactly $p$ servers as $\mathfrak{n}^D_{p,D}$ for some distribution $D$.
\remove{
Now let us consider the term  $\sum\limits_{i} \sum\limits_{j} \sum\limits_{\substack{\hat{S} \subset \mathcal{S};\\|\hat{S}|=x}} (\mathfrak{n}^D_{i,\hat{S}}-1) \mathbbm{1}_{\mathfrak{n}^D_{i,\hat{S}}>1} (\mathfrak{n}^D_{j,\hat{S}}-1)$ from $\sigma_{D,x}(d)$.

This term can be written as 2.$\sum\limits_{K=(i,j); i,j \in \{1,2,...,n\}; i<j} f(i,j)+ \sum\limits_i \sum\limits_{\substack{\hat{S} \subset \mathcal{S};\\|\hat{S}|=x}} ((\mathfrak{n}^D_{i,\hat{S}}-1) \mathbbm{1}_{\mathfrak{n}^D_{i,\hat{S}}>1})^2$, the second term can be computed uniquely for every distribution $D$ as in case I of previous theorem. 

The first term can be written as $2.\sum\limits_{p=0}^{r}\mathfrak{n}^D_{p,D}g(p,x)$.}
For every distribution $D$, $\sum\limits_{p=0}^{r}\mathfrak{n}^D_{p,D}= {{n \choose 2}}$ and $\sum\limits_{p=0}^{r}p.\mathfrak{n}^D_{p,D}=c{{k \choose 2}}$ is constant. 

This would imply the number of pairs with exactly one server having both these jobs would be $\sum\limits_{p=1}p.{\mathfrak{n}^D_{p,\tilde{D}}}$ and there would be $({{n \choose 2}} - \sum\limits_{p=1}p.{\mathfrak{n}^D_{p,\tilde{D}}})$ pairs with no server existing which contains both in every distribution $\hat{D}$ since distribution $\hat{D}$ has no pair of jobs which occur together in more than 2 servers.  Also recall that $\tilde{D}$ denotes any distribution which where there exists a pair of jobs which occur together in at least 2 servers.

Thus $\mathfrak{n}^D_{1,\hat{D}}= \sum\limits_{p=1}p.{\mathfrak{n}^D_{p,\tilde{D}}}$ and $\mathfrak{n}^D_{0,\hat{D}}= ({{n \choose 2}} - \sum\limits_{p=1}p.{\mathfrak{n}^D_{p,\tilde{D}}})$ and $\mathfrak{n}^D_{p,\hat{D}}=0$ for every $p >1$ and every distribution $\hat{D}$ with no two servers having more than one job common.

Now using $mf(i_1,j_1) \leq f(i,j) + (m-1) f(i_0,j_0)$, we can show that $\sum\limits_{p=0}^{r}\mathfrak{n}^D_{p,\tilde{D}}g(p,x) \geq \mathfrak{n}^D_{1,\hat{D}} f(i_1,j_1) + \mathfrak{n}^D_{0,\hat{D}} f(i_0,j_0)$ where $\mathfrak{n}^D_{p,\tilde{D}}$ denotes the number of pairs which occur together in $p$ servers in any distribution $\tilde{D}$.

%Also note that the 

Consider \\
\begin{align*}
\sum\limits_{p=0}^{r}\mathfrak{n}^D_{p,\tilde{D}}g(p,x) \overset{(a)}{=} & \sum\limits_{p=1}\mathfrak{n}^D_{p,\tilde{D}}g(p,x)+  \left({{n \choose 2}}-\sum\limits_{p=1} \mathfrak{n}^D_{p,\tilde{D}}\right)f (i_0,j_0)\\
& \overset{(b)}{\geq} \sum\limits_{p=1}\left(p\mathfrak{n}^D_{p,\tilde{D}}f(i_1,j_1)-(p-1)\mathfrak{n}^D_{p,\tilde{D}}f(i_0,j_0)\right) + \left({{n \choose 2}}-\sum\limits_{p=1} \mathfrak{n}^D_{p,\tilde{D}}\right)f (i_0,j_0)\\
& = \left(\sum\limits_{p=1}p.{\mathfrak{n}^D_{p,\tilde{D}}}\right) f(i_1,j_1)+ \left({{n \choose 2}} - \sum\limits_{p=1}p.{\mathfrak{n}^D_{p,\tilde{D}}}\right) f(i_0,j_0)\\
& = \mathfrak{n}^D_{1,\hat{D}} f(i_1,j_1) + \mathfrak{n}^D_{0,\hat{D}} f(i_0,j_0)
\end{align*}

Note that $(a)$ follows from the fact that $\sum\limits_{p=0}^{r}\mathfrak{n}^D_{p,\tilde{D}} = {{n \choose 2}}$ and $(b)$ follows from $mf(i_1,j_1) \leq f(i,j) + (m-1) f(i_0,j_0)$ for every $m\geq 1$.
 
This would imply that the varince of any distribution $\hat{D}$ would be smaller than that of any distribution $\tilde{D}$.

Thus we prove that any distribution with no more than 2 jobs common between any 2 servers has lower variance than any distribution which has a pair of servers with more than one job common. 
	 
\end{proof}	
}
\remove{
The following corollary follows from the proof of this theorem.
Define $\mathfrak{n}^D_{p,D}$ as in proof of the previous theorem. Recall that $\mathfrak{n}^D_{p,D}$ denotes the number of pairs of jobs which appear together in exactly $p$ servers in a job distribution $D$ as defined in \ref{setting_condition} for every positive integer
$p$.
\begin{corollary}
	Given a distribution satisfies the conditions in \ref{setting_condition}, the variance of the number of distinct jobs received is a function of $\{\mathfrak{n}^D_{p,D}\}_{p\in \mathbb{N}}$ and $x$.
\end{corollary}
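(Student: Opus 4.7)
The plan is to read off the claim directly from the closed-form expression for the variance already derived in Appendix B, namely equation \eqref{var_sim_exp}:
$$\sigma_{D,x}(d) = \frac{2\sum_{m=0}^{r}\mathfrak{m}^D(m)\,g(m,x) + T_2(n,r,c,x)}{\binom{c}{x}} - \bigl(T_1(n,r,c,x)\bigr)^2,$$
where $\mathfrak{m}^D(m)$ is the same quantity as $\mathfrak{n}^D_{m,D}$, namely the number of job pairs co-assigned to exactly $m$ common servers under the balanced $(n,k,r,c)$ assignment $D$.

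First I would observe that every factor appearing in this expression, other than the weights $\mathfrak{m}^D(m)$ themselves, is an explicit function of the fixed design parameters $n, k, r, c$ and of $x$ alone. The quantities $T_1(n,r,c,x)$ and $T_2(n,r,c,x)$ are given in Appendix B as closed-form binomial expressions in those parameters, and the denominator $\binom{c}{x}$ depends only on $c$ and $x$. Next I would invoke Lemmas \ref{g_p_0_lemma} and \ref{g_p_diff_lemma}, which yield a closed form for $g(0,x)$ and a formula for the first difference $g(m+1,x)-g(m,x)$ purely in terms of $c, r, m$ and $x$; telescoping the increments gives $g(m,x)$ itself as a function solely of $c, r, m$ and $x$, independent of the particular balanced $(n,k,r,c)$ assignment chosen.

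Assembling these observations, the only dependence of $\sigma_{D,x}(d)$ on the assignment $D$ is through the coefficients $\{\mathfrak{m}^D(m)\}_{m=0}^{r}$, which together constitute the sequence $\{\mathfrak{n}^D_{p,D}\}_{p\in\mathbb{N}}$ (the entries for $p>r$ vanish trivially, since every job is placed at only $r$ servers and so no pair of jobs can share more than $r$ common servers). Treating $n, k, r, c$ as fixed background parameters of the problem, the variance $\sigma_{D,x}(d)$ is therefore determined entirely by $\{\mathfrak{n}^D_{p,D}\}_{p\in\mathbb{N}}$ together with $x$, which is precisely the claim of the corollary. There is no genuine obstacle here: once equation \eqref{var_sim_exp} is on the table and $g(m,x)$ has been shown (via the two lemmas above) to be assignment-independent, the corollary reduces to a direct bookkeeping observation rather than requiring any new combinatorial argument.
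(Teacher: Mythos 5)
Your proposal is correct and matches the paper's own argument: the paper likewise proves this by noting that the variance expression has the form $\frac{2\sum_{p}\mathfrak{n}^D_{p,D}\,g(p,x) + n\left(\sum_{t=1}^{r-1} t^2\binom{r}{t+1}\binom{c-r}{x-t-1}\right)}{\binom{c}{x}} - (\cdot)^2$ with $g(p,x)$ independent of the particular assignment $D$, so the only $D$-dependence enters through the pair-frequency counts. Your added step of telescoping the increments from Lemmas \ref{g_p_0_lemma} and \ref{g_p_diff_lemma} to make the assignment-independence of $g(m,x)$ explicit is a slightly more careful rendering of the same bookkeeping, not a different route.
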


\begin{proof}
	This can be proven as we had argued that $g(p,x)$ as defined in previous theorem does not depend on the distribution $D$ as long as jobs $a_{i_p}$ and $a_{j_p}$ are together in exactly $p$ servers.

	We had shown that the numerator in the first term in $\sigma_{D,x}(d)$ can be shown as $2.\sum\limits_{p=0}^{r}\mathfrak{n}^D_{p,D}g(p,x) + n.\left(\sum\limits_{t=1}^{r-1} t^2 {r \choose (t+1)} {(c-r) \choose (x-t-1)}\right)$ implying the variance is just a function of $\{\mathfrak{n}^D_{p,D}\}_{p\in \mathbb{N}}$ and $x$.
	 
\end{proof}

To prove theorem \ref{min_var_gen}, we first show some results using the convexity property of $g(.,x)$ [for a given $x$] as defined in Theorem \ref{result_variance}. Note that $g(m+1,x)-g(m,x)$ as defined in Theorem \ref{result_variance} increases with $p$.

%The next results follow as a consequence of the convexity of the function $g(.)$ as defined in Theorem \ref{result_variance} w.r.t. $p$ in the integral domain. 

%%This claim can be said to be the generalisation of Theorem \ref{min_var_int_1}.

Let us now re-state and prove Theorem~\ref{min_var_gen}.

\begin{theorem*}{}
The least variance of the number of distinct jobs received at the master for any $x \in [2,c-2r+1]$ \remove{ when any subset of servers ($\mathcal{S}$) with cardinality $c-2r+1\leq x>1$ is able to communicate with the master $($denoted by $\sigma_{D,x}(d))$ with equal probability} is attained uniquely by the pairwise balanced job $(n,k,r,c)$ assignment schemes (if it exists) amongst all balanced $(n,k,r,c)$ assignments. 

However, for $x>c-2r+1$, the result on least variance still holds but the uniqueness is not guaranteed.  
\end{theorem*}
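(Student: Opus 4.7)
The plan is to adapt the kernel-of-$H$ framework from the proof of Theorem~\ref{min_var}, specialized to the BIBD (pairwise balanced) case where the shape vector $h_{D^*}$ has a single nonzero entry $\binom{n}{2}$ at position $\lambda+1$ with $\lambda=r(k-1)/(n-1)$, and then to strengthen the resulting inequality on $\sum_m v(m+1)g(m,x)$ from weak to strict in the range $x\in[2,c-2r+1]$. From equation \eqref{var_sim_exp}, only the term $\sum_m\mathfrak{m}^D(m)g(m,x)$ depends on the choice of $D$, so writing any other balanced $(n,k,r,c)$ assignment $D_1\neq D^*$ as $h_{D_1}=h_{D^*}+v$ with $v\in\ker H$, the problem reduces to proving $\sum_{m=0}^{r}v(m+1)g(m,x)>0$ whenever $v\neq 0$. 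The sign constraints on $v$ are $v(j)\geq 0$ for all $j\neq\lambda+1$ (since $h_{D^*}(j)=0$ while $h_{D_1}(j)\geq 0$ at those positions), together with $\sum_j v(j)=0$ from the kernel condition, so if $v\neq 0$ then $v(\lambda+1)<0$ is necessarily its unique negative entry.

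Next, I would expand $v=\sum_{i=1}^{r-1}\alpha_i h_i$ in the basis \eqref{basis_ker} and mimic the bidirectional inductive argument from the proof of Theorem~\ref{min_var}: starting from $v(1)=\alpha_1\geq 0$ and using $v(j)\geq 0$ for $j=2,\ldots,\lambda$, inductively obtain $\alpha_1,\ldots,\alpha_{\lambda-1}\geq 0$; symmetrically, starting from $v(r+1)=\alpha_{r-1}\geq 0$ and using $v(j)\geq 0$ for $j=r,r-1,\ldots,\lambda+2$, obtain $\alpha_{\lambda},\ldots,\alpha_{r-1}\geq 0$. Thus $\alpha_i\geq 0$ for every $i\in\{1,\ldots,r-1\}$, and since $v\neq 0$ at least one $\alpha_i$ is strictly positive. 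The degenerate boundary cases $\lambda\in\{0,r\}$ (which force $k\in\{1,n\}$) reduce via the kernel conditions directly to $v=0$, so only the regime $1\leq\lambda\leq r-1$ requires the induction.

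Then, via the telescoping identity used in \eqref{ineq_gre}, $\sum_m v(m+1)g(m,x)=\sum_{i=1}^{r-1}\alpha_i\bigl(g(i-1,x)-2g(i,x)+g(i+1,x)\bigr)$. Applying Lemma~\ref{g_p_diff_lemma} twice together with the Pascal identity $\binom{N+1}{k}-\binom{N}{k}=\binom{N}{k-1}$, the second difference simplifies to $\binom{c-2r+i-1}{x-2}$. For $x\in[2,c-2r+1]$ and every $i\in\{1,\ldots,r-1\}$ we have $c-2r+i-1\geq c-2r\geq x-1>x-2\geq 0$, so each second difference is strictly positive; combined with $\alpha_i\geq 0$ and at least one strict, this yields $\sum_m v(m+1)g(m,x)>0$, establishing the strict (hence unique) optimality of the pairwise balanced assignment. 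For $x>c-2r+1$ the binomial $\binom{c-2r+i-1}{x-2}$ vanishes for small enough $i$ (whenever $c-2r+i-1<x-2$), so the sum only satisfies $\geq 0$ and pairwise balanced designs remain optimal but uniqueness is no longer guaranteed.

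The main obstacle is the two-sided induction on the signs of the $\alpha_i$ under the single-negative-entry constraint on $v$, in particular the consistency of the left-to-right and right-to-left inductions at the meeting index $i=\lambda$; the rest is a direct combinatorial verification via Pascal's identity applied to the expression in Lemma~\ref{g_p_diff_lemma}.
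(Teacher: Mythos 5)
Your argument is mechanically sound and takes a genuinely different route from the paper's own proof of this statement. The paper proves it by a four-case analysis on the signs of $\mathfrak{m}^{D_1}(\ell)-\mathfrak{m}^{D}(\ell)$ and $\mathfrak{m}^{D_1}(\ell+1)-\mathfrak{m}^{D}(\ell+1)$, driven by secant-slope inequalities of the form $\tfrac{g(m+k_1,x)-g(m,x)}{k_1}>\tfrac{g(m,x)-g(m-k_2,x)}{k_2}$ (a discrete strict-convexity claim for $g(\cdot,x)$, strict exactly when $1<x\le c-2r+1$); in the case where both exceptional entries decrease, it must split the increments proportionally into two auxiliary sequences before convexity can be applied. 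You instead reuse the $\ker H$ machinery of Theorem~\ref{min_var}: expand $v$ in the basis \eqref{basis_ker}, show all $\alpha_i\ge 0$ by the two-sided induction, and reduce everything to the sign of the second difference $g(i-1,x)-2g(i,x)+g(i+1,x)=\binom{c-2r+i-1}{x-2}$, which by Pascal's identity applied to Lemma~\ref{g_p_diff_lemma} is strictly positive for every $i\ge 1$ precisely when $2\le x\le c-2r+1$ and merely nonnegative otherwise. This is shorter, makes the threshold $x=c-2r+1$ transparent (it is exactly where some $\binom{c-2r+i-1}{x-2}$ first vanishes), and delivers strictness and the loss of uniqueness for $x>c-2r+1$ in one line, where the paper must track strictness through each of its four cases.

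The one thing you must repair is the scope of ``pairwise balanced.'' You specialize to a shape vector with a single nonzero entry at position $\lambda+1$ with $\lambda=r(k-1)/(n-1)$, i.e.\ a genuine BIBD; under that reading the theorem is vacuous whenever $r(k-1)/(n-1)\notin\mathbb{Z}$. The paper's proof of this statement works with the class whose shape vector is supported on the two consecutive indices $\ell,\ell+1$ with $\ell=\lfloor r(k-1)/(n-1)\rfloor$ (what the final text calls proximally compact, see Lemma~\ref{lemma_on_l_proximally_minimal}), and it is that broader class it names ``pairwise balanced'' in its conclusion. Fortunately your argument survives verbatim with two exceptional indices: the constraints become $v(j)\ge 0$ for $j\ne\ell+1,\ell+2$, the left induction still yields $\alpha_i\ge 0$ for $i\le\ell$ and the right one $\alpha_i\ge 0$ for $\ell+1\le i\le r-1$ (this is exactly the computation in Appendix D), and strictness again follows because $v\ne 0$ forces some $\alpha_i>0$ against strictly positive second differences. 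State and run the argument at that level of generality and the proof is complete.
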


Recall the expression of the variance from Theorem \ref{result_variance} and observe that $g(m,x)$ is a convex function in $m$. The essential idea of the proof is to use the convexity property of $g(.,x)$ and show that this expression takes the least value when $\mathfrak{m}^D(m)$ is non-zero for exactly two consecutive values of $m$.

\section{Proof of Theorem \ref{min_var_gen_spec}}{\label{proof_min_var_gen_spec}}

Now we restate and prove Theorem \ref{min_var_gen_spec}.

\begin{theorem*}
The least variance of the number of distinct jobs received at the master for any $x \in [1,c-2r+1]$ \remove{when any subset of servers $\mathcal{S}$ of cardinality $x>1$ is able to communicate with the master (denoted by $\sigma_{D,x}(d)$) with equal probability} is attained uniquely by both the pairwise balanced server $(n,k,k,n)$ assignment schemes and pairwise balanced job $(n,k,k,n)$ assignment schemes (if it exists) amongst all balanced $(n,k,k,n)$ assignments. 
\end{theorem*}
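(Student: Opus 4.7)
The plan is to reduce the statement to Theorem~\ref{min_var} via a job-server duality that is only available for the square $(n,k,k,n)$ setting, together with a moment-certificate argument. Since Theorem~\ref{min_var} already delivers the claim for pairwise balanced job (i.e.\ proximally compact) assignments, the real work is to show that any pairwise balanced server $(n,k,k,n)$ assignment --- defined by requiring the server-pair shape vector $\mathfrak{s}^D(m) = |\{(s_i,s_j): i<j,\; |\mathrm{supp}(A_D[:,i]) \cap \mathrm{supp}(A_D[:,j])| = m\}|$ to have at most two consecutive nonzero entries --- also has a proximally compact job-pair shape vector $\mathfrak{m}^D$, so that Theorem~\ref{min_var} applies to it as well.

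The first step is to establish three moment identities relating $\mathfrak{m}^D$ and $\mathfrak{s}^D$ that become available precisely because $A_D$ is square. The zeroth- and first-moment identities $\sum_m \mathfrak{m}^D(m) = \binom{n}{2} = \sum_m \mathfrak{s}^D(m)$ and $\sum_m m\,\mathfrak{m}^D(m) = n\binom{k}{2} = \sum_m m\,\mathfrak{s}^D(m)$ follow by double-counting pairs in an $n$-set and incidences $(s,\{a_i,a_j\})$ versus $(a,\{s_i,s_j\})$, using $c=n$ and $r=k$. The second-moment identity $\sum_m m^2\mathfrak{m}^D(m) = \sum_m m^2\mathfrak{s}^D(m)$ follows from the standard fact $\mathrm{tr}((A_D A_D^T)^2) = \mathrm{tr}((A_D^T A_D)^2)$: expanding both sides yields $nk^2 + 2\sum_m m^2\mathfrak{m}^D(m)$ and $nk^2 + 2\sum_m m^2\mathfrak{s}^D(m)$ respectively.

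With these in hand, a certificate-polynomial argument forces $\mathfrak{m}^D = \mathfrak{s}^D$. Let $\ell = \lfloor k(k-1)/(n-1)\rfloor$, which by Lemma~\ref{lemma_on_l_proximally_minimal} (specialised to $r=k$ and $c=n$) is the smallest index of the nonzero entries of both proximally compact vectors. Set $u = \mathfrak{m}^D - \mathfrak{s}^D$; the three identities give $\sum_m u(m) = \sum_m m\,u(m) = \sum_m m^2 u(m) = 0$. Now $f(m) = (m-\ell)(m-\ell-1)$ is a linear combination of $1, m, m^2$ with $f(\ell) = f(\ell+1) = 0$ and $f(m) \ge 2$ for every other integer $m$, so $\sum_m f(m)\,u(m) = 0$. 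Assuming $\mathfrak{s}^D$ is proximally compact, $\mathfrak{s}^D(m) = 0$ for $m \notin \{\ell,\ell+1\}$, so $u(m) = \mathfrak{m}^D(m) \ge 0$ there and every term in $\sum_{m \notin \{\ell,\ell+1\}} f(m)\,u(m)$ is non-negative. This forces $u(m) = 0$ off $\{\ell,\ell+1\}$, after which the first two moment conditions collapse to a $2\times 2$ nonsingular system that forces $u(\ell) = u(\ell+1) = 0$, so $\mathfrak{m}^D = \mathfrak{s}^D$ is proximally compact.

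Hence pairwise balanced server and pairwise balanced job $(n,k,k,n)$ assignments yield the same proximally compact job-pair shape vector, and by Theorem~\ref{min_var} they jointly (and uniquely) minimize the variance for $x \in [1, n-2k+1]$. The main obstacle is the second-moment identity: it is the one step that genuinely uses the square structure of $A_D$, and it is what lets a hypothesis about server-pair structure be translated into a conclusion about job-pair structure. For non-square designs the same trace identity fails dimensionally, which is exactly why the theorem is restricted to the $(n,k,k,n)$ regime.
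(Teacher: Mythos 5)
Your proposal is correct, and it reaches the paper's conclusion by a different-looking but ultimately equivalent duality. The paper's own argument introduces the random variable $Y^D$ (the number of servers shared by a uniformly random pair of jobs), invokes the identity $\mathrm{var}(Y^D)=\sigma_{D,2}(d)$ together with $\mathbb{E}[Y^D]=2k-\mathbb{E}_{D,2}[d]$, and then argues that if every pair of servers shares $\ell$ or $\ell+1$ jobs then $d$ at $x=2$ is supported on two consecutive integers, hence has the minimal variance possible for an integer-valued variable of that (assignment-independent) mean; the matching variance of $Y^D$ then forces $Y^D$ to be two-consecutive-supported as well, i.e.\ the job-pair shape vector is proximally compact, after which the general minimal-variance theorem applies. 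Your trace identity $\mathrm{tr}\bigl((A_DA_D^T)^2\bigr)=\mathrm{tr}\bigl((A_D^TA_D)^2\bigr)$ is precisely the second-moment content of the paper's $\mathrm{var}(Y^D)=\sigma_{D,2}(d)$, and your certificate polynomial $f(m)=(m-\ell)(m-\ell-1)$ is an explicit algebraic rendering of the paper's ``two consecutive integers minimize the variance for a fixed mean'' extremal step. What your version buys is self-containment and rigor: the paper's proof leans on auxiliary claims (the variance identity and the job/server pair correspondence) that are only referenced, whereas your three moment identities and the nonnegativity argument are verified from scratch, and you additionally extract the stronger conclusion $\mathfrak{m}^D=\mathfrak{s}^D$. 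Two small caveats: the trace identity does \emph{not} ``fail dimensionally'' for rectangular $A_D$ --- it always holds, but for $r\neq k$ the diagonal contributions $nr^2$ and $ck^2$ differ and the zeroth moments $\binom{n}{2}$ and $\binom{c}{2}$ no longer match, which is the actual reason the clean transfer is confined to the square case; and the uniqueness assertion for $x\in[1,c-2r+1]$ requires the strict version of the minimal-variance theorem rather than the weak inequality you cite, though that strict version is exactly what the paper's Theorem on least variance provides in that range.
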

%\begin{theorem}

\begin{proof}
    This directly follows from Theorem \ref{min_var_gen} and Claim \ref{job_server_pairs_identical}.
\end{proof}
}

\remove{
\begin{proof}
Consider the definition of $Y^D$ as defined in Claim \ref{var_equal}. Recall that $Y^D$ denotes the number of 
servers where a pair of randomly chosen jobs occur together.

Under $n=c$, theorem \ref{var_equal} would imply that $\text{var}(Y^D)=\sigma_{D,2}(d)$. Also we can show that $\mathbbm{E}[Y^D]= 2k- \mathbbm{E}_{D,2}[d]$ from theorem \ref{mean_equality}.
	
	Thus random variables $2k-d$ and $Y^D$ have the same mean and variance when $x=2$. Recall that $x$ denotes the number of servers we are able to communicate to.
	
	Suppose every pair of servers have $l$ or $l+1$ jobs common. Under this criterion, $d$ (under $x=2$) would be able to take at most 2 consecutive values implying its variance is the least amongst all balanced $(n,k,k,n)$ assignments as it implies $\mathbb{E}_{D,2}[d]=n\left(1-\frac{{{n-k\choose 2}}}{{{n \choose 2}}}\right)$ from Theorem \ref{mean_equality}. Hence, when $d$ only takes two consecutive values, it minimizes the variance amongst all possible distributions of $d$ satisfying the condition on expectation. Since, we know that $\text{var}(Y^D)=\sigma_{D,2}(d)$ and $\mathbbm{E}[Y^D]= 2k- \mathbbm{E}_{D,2}[d]$, we can say that $Y^D$ would have the least variance too. This would imply that $Y^D$ can take exactly at most 2 values, thus every pair of jobs occur together in $l$ or $l+1$ servers.
	
	Thus, using theorem \ref{min_var_gen}, we can say that under this condition, the variance of the number of distinct jobs would be the least for any $x$.   
	
\end{proof}

}

\end{document}